\title{Consistency Proof for Multi-Time Schr\"odinger Equations with Particle Creation\\ and Ultraviolet Cut-Off}
\author{
Sascha Lill,\footnote{Mathematisches Institut,
     Eberhard-Karls-Universit\"at, Auf der Morgenstelle 10, 72076
     T\"ubingen, Germany}\ \footnote{E-Mail: sascha.lill@uni-tuebingen.de}~
Lukas Nickel,\footnote{Studienstiftung des deutschen Volkes,
Ahrstra\ss e 41, 53175 Bonn, Germany} \ and
Roderich Tumulka$^*$\footnote{E-mail:
     roderich.tumulka@uni-tuebingen.de}
} 
\date{October 28, 2020}
\newtheorem{lemma}{Lemma}
\newtheorem{theorem}{Theorem}
\theoremstyle{definition}\newtheorem{definition}{Definition}
\newcommand{\be}{\begin{equation}}
\newcommand{\ee}{\end{equation}}
\newcommand{\ba}{\boldsymbol{a}}
\newcommand{\bj}{\boldsymbol{j}}
\newcommand{\bk}{\boldsymbol{k}}
\newcommand{\bn}{\boldsymbol{n}}
\newcommand{\br}{\boldsymbol{r}}
\newcommand{\bs}{\boldsymbol{s}}
\newcommand{\bx}{\boldsymbol{x}}
\newcommand{\by}{\boldsymbol{y}}
\newcommand{\bz}{\boldsymbol{z}}
\newcommand{\bH}{\boldsymbol{H}}
\newcommand{\bN}{\boldsymbol{N}}
\newcommand{\bU}{\boldsymbol{U}}
\newcommand{\bX}{\boldsymbol{X}}
\newcommand{\bY}{\boldsymbol{Y}}
\newcommand{\bZ}{\boldsymbol{Z}}
\newcommand{\balpha}{\boldsymbol{\alpha}}
\newcommand{\bDelta}{\boldsymbol{\Delta}}
\newcommand{\bnabla}{\boldsymbol{\nabla}}
\newcommand{\bzero}{\boldsymbol{0}}
\newcommand{\bone}{\boldsymbol{1}}
\newcommand{\cP}{\mathcal{P}}
\newcommand{\cQ}{\mathcal{Q}}
\newcommand{\sH}{\mathscr{H}}
\newcommand{\sM}{\mathscr{M}}
\newcommand{\sS}{\mathscr{S}}
\newcommand{\sX}{\mathscr{X}}
\newcommand{\sY}{\mathscr{Y}}
\newcommand{\CCC}{\mathbb{C}}
\newcommand{\HHH}{\mathbb{H}}
\newcommand{\NNN}{\mathbb{N}}
\newcommand{\RRR}{\mathbb{R}}
\newcommand{\G}{G} % symbol for Green's function
\newcommand{\free}{\mathrm{free}}
\newcommand{\inter}{\mathrm{int}}
\newcommand{\cutoff}{\varphi}
\newcommand{\supp}{\mathrm{supp}}
\newcommand{\Gr}{\mathrm{Gr}}
\newcommand{\comp}{\boldsymbol{R}} %comparison operator
\renewcommand{\Im}{\mathrm{Im}}
\newcommand{\x}[1]{{#1}}
\newcounter{remarks}
\begin{document}
\maketitle
\begin{abstract}
For multi-time wave functions, which naturally arise as the relativistic particle-position representation of the quantum state vector, the analog of the Schr\"odinger equation consists of several equations, one for each time variable. This leads to the question of how to prove the consistency of such a system of PDEs. The question becomes more difficult for theories with particle creation, as then different sectors of the wave function have different numbers of time variables. Petrat and Tumulka (2014) gave an example of such a model and a non-rigorous argument for its consistency. We give here a rigorous version of the argument after introducing an ultraviolet cut-off into the creation and annihilation terms of the multi-time evolution equations. These equations form an infinite system of coupled PDEs; they are based on the Dirac equation but are not fully relativistic (in part because of the cut-off). We prove the existence and uniqueness of a smooth solution to this system for every initial wave function from a certain class that corresponds to a dense subspace in the appropriate Hilbert space.

\medskip

\noindent Key words: many-time formalism; relativistic wave function; Dirac equation; integrability condition; model quantum field theory.
\end{abstract}

\newpage
\tableofcontents

\section{Introduction}	
\label{sec:intro}

The quantum state of $N$ particles is usually described by means of a 
wave function
\be
\Psi(t,\bx_1,\ldots,\bx_N)
\ee
that is a function of time $t\in \RRR$ and the positions 
$\bx_j\in\RRR^3$ of the particles. The obvious relativistic 
generalization is a wave function
\be
\Phi\bigl( (t_1,\bx_1),\ldots,(t_N,\bx_N) \bigr)
\ee
of $N$ space-time points $(t_j,\bx_j)\in\sM=\RRR^4$ (setting $c=1$), 
called a \emph{multi-time wave function} 
\x{\cite{dirac:1932,bloch:1934,LPT:2020}}. The relation between $\Psi$ and 
$\Phi$ is straightforward from the fact that $\Psi$ also refers to $N$ 
space-time points, $(t,\bx_1), \ldots,(t,\bx_N)$, which are simultaneous 
relative to the chosen Lorentz frame; that is, the single-time wave 
function $\Psi$ is recovered from $\Phi$ by setting all times equal,
\be\label{psiphit}
\Psi(t,\bx_1,\ldots,\bx_N) = \Phi\bigl( (t,\bx_1),\ldots,(t,\bx_N) \bigr)\,.
\ee
The domain of $\Phi$ is usually the set of spacelike configurations,
\be\label{SNdef}
\sS^{(N)} := \Bigl\{ (x_1,...,x_N)\in \sM^N ~:~ \forall j,k: 
(x_j-x_k)^\mu(x_j-x_k)_\mu <0 \text{ or }x_j=x_k\Bigr\} \,,
\ee
relative to the Minkowski metric $\mathrm{diag}(1,-1,-1,-1)$. The 
advantage of using $\Phi$ is that it is a covariant object, defined 
without reference to any hypersurface---nor in fact to coordinates, if 
we think of $\Phi$ as a function of $N$ points in the space-time manifold $\sM$.

The concept of a multi-time wave function $\Phi$ is closely related to 
that of associating with every spacelike hypersurface $\Sigma$ a wave 
function $\Psi_\Sigma$, as used by Tomonaga \cite{tomonaga:1946} and 
Schwinger \cite{schwinger:1948}: given $\Phi$, we can define 
$\Psi_\Sigma$ on $\Sigma^N$ by simply setting
\be\label{psiSigmaphi}
\Psi_\Sigma(x_1,\ldots,x_N) := \Phi(x_1,\ldots,x_N) ~~~\text{for } 
(x_1,\ldots,x_N)\in\Sigma^N\,.
\ee
Still, $\Phi$ is a more elementary concept than $\Psi_\Sigma$, as $\Phi$ 
is simply a function of $4N$ variables. Moreover, as we will see, the multi-time wave 
function $\Phi$ can be adapted to the situation with an ultraviolet (UV) 
cut-off, while $\Psi_\Sigma$ cannot.

In a quantum field theory (QFT), the particle-position representation of 
the quantum state vector $|\Psi\rangle$ in Fock space, whenever that 
exists, naturally yields a multi-time wave function
\be
\Phi(x_1,\ldots,x_N) := \langle \emptyset|a(x_1)\cdots a(x_N)|\Psi\rangle\,,
\ee
where $|\emptyset\rangle$ is the Fock vacuum and $a(x)$ is the 
annihilation operator in the position representation at $x=(x^0,\bx)\in 
\sM$. Since $N$ is now variable, this $\Phi$ is a multi-time 
Fock function, i.e., a function on
\be
\sS := \bigcup_{N=0}^\infty \sS^{(N)}\,.
\ee
We take the approach of \emph{defining} a QFT model in a manifestly 
covariant way (without ever choosing a Lorentz frame or spacelike 
hypersurface) by starting from $\Phi$ as a solution of a suitable multi-time 
variant of the Schr\"odinger equation. 
However, while this procedure is physically the ultimate goal, we have to introduce an UV cut-off for the sake of mathematical rigor, which breaks Lorentz invariance. That is why we here formulate a multi-time evolution law for $\Phi$ corresponding to a simplified QFT model with cut-off in a fixed Lorentz frame. In that setting, we can then prove existence and uniqueness of solutions for sufficiently regular initial conditions, which demonstrates that the approach makes sense.

\subsection{Multi-Time Evolution Laws}

In order to discuss multi-time evolution laws, let us begin again with 
the case of a fixed number $N$ of particles. As $\Phi$ depends on $N$ 
time variables, its evolution can be governed by a system of $ N $ 
Schr\"odinger equations (setting $\hbar=1$):
\begin{align}
     i \partial_{t_1} \Phi &= H_1 \Phi\nonumber\\
     &\vdots \label{eq:multisys}\\
     i \partial_{t_N} \Phi &= H_N \Phi\nonumber
\end{align}
with ``partial Hamiltonians'' $ H_k $. By \eqref{psiphit} and the chain 
rule, the sum of the $H_k$ should be the Hamiltonian for $\Psi$ at every 
simultaneous configuration.

As a consequence of 
$\partial_{t_j}\partial_{t_k}\Phi=\partial_{t_k}\partial_{t_j}\Phi$, we 
obtain a condition on the $H_k$,
\be
     \bigl[ i \partial_{t_j} - H_j , i \partial_{t_k} - H_k \bigr] = 0 
\quad \forall j,k \in \{1,\ldots,N\}\,,
\label{eq:consist}
\ee
known as the \emph{integrability condition} or \emph{consistency 
condition} for the system \eqref{eq:multisys} 
\cite{dirac:1932,dfp:1932,bloch:1934,pt:2013a}. If it is violated, then 
the equations \eqref{eq:multisys} cannot be expected to be 
simultaneously solvable, except perhaps for very special initial data. An \emph{initial datum} means here the restriction of $\Phi$ to space-time configurations on the hyperplane $\{x^0=0\}$ in space-time $\sM$ or, in other words, the values of $\Phi$ when all time variables are set to 0. To \emph{prove consistency} of \eqref{eq:multisys} amounts to proving the existence and uniqueness of the solution for a sufficiently large set of initial data (such as a dense subset of Hilbert space). The commutator condition \eqref{eq:consist} is closely related to the integrability condition of the Tomonaga--Schwinger equation, which in turn is closely related to the causality axiom of the Wightman axioms \cite[p.~65]{reedsimon2}.

Since interaction potentials violate \eqref{eq:consist} \cite{pt:2013a, 
ND:2016}, interaction needs to be implemented in a different way, such 
as through particle creation \cite{pt:2013c,pt:2013d,LN:2018}, 
zero-range interaction \cite{lienert:2015a,lienert:2015c,LN:2015,KLTZ19}, 
interaction along light cones 
\cite{lienert:2018,LT:2018,LT:2019,LN:2019}, or other ways 
\cite{dfp:1932,DV82b,DV85,ND:2019}. In this paper, we focus on particle 
creation.

If the particle number is variable, then the wave function $\Phi$ on 
$\sS$ consists (like a vector in Fock space) of sectors with 
different particle numbers, and thus with different numbers of time 
variables. Correspondigly, the terms $H_k\Phi$ on the right-hand side of 
\eqref{eq:multisys} can involve other sectors, and the consistency 
question becomes more involved. As a consequence, it is not obvious that the ``consistency condition'' 
\eqref{eq:consist} is actually sufficient for consistency; and even less obvious if the solution $\Phi$ is not required to exist on \emph{all} space-time configurations but only on the set $\sS$ of \emph{spacelike} configurations. Thus, a consistency proof requires much more than just checking a commutator condition such as \eqref{eq:consist}.

\subsection{Model and Goals}

Our model, adopted from \cite{pt:2013c} and inspired by models of Lee 
\cite{Lee54}, Schweber \cite[Sec.~12a]{schweber:1961}, and Nelson \cite{Nel64}, is 
a simple QFT in 1+3 space-time dimensions in which one species of 
particles, called $x$-particles in the following, can emit and absorb 
particles of another kind, called $y$-particles; in short,
\be\label{xxy}
x\leftrightarrows x+y\,.
\ee
We take the $x$-particles to be fermions and the $y$s to be bosons, and 
we take both to be massive and to have spin $\tfrac{1}{2}$ (although in 
nature bosons have integer spin). The free Hamiltonians are Dirac 
operators. A multi-time formulation of this model was given in 
\cite{pt:2013c}, however in a UV divergent form. In order to enable a
rigorous treatment, we introduce a UV cut-off; correspondingly, we need to
slightly modify the definition of the domain $\sS$ of $\Phi$ (i.e.,
to replace it by the set $\sS_\delta$ of $\delta$-spacelike configurations, see below). 
We formulate the multi-time Schr\"odinger equations of our model with cut-off (Section~\ref{sec:physmod})
and prove the existence and uniqueness of a 
solution for every 
initial condition of sufficient regularity (Theorem~\ref{thm:1} in Section~\ref{sec:results}). In particular, this result 
proves consistency of the multi-time equations; some steps towards a consistency proof of this kind were already taken in \cite{p:2010}.

\x{It would be desirable to treat more realistic models in the future. Here, the model is chosen for allowing a rigorous proof of the multi-time consistency for a variable number of time variables; in fact, our proof is the first such proof for a model in 1+3 dimensions.}

The solutions we will construct are strong solutions, i.e., smooth 
functions that have derivatives in the classical sense. As a by-product 
of our analysis, we also prove smoothness of the single-time wave 
function $\Psi$ as a function of $t$, the $\bx_k$, and the $\by_\ell$ coordinates (Lemma~\ref{lem:diffable} in Section~\ref{sec:lemmas}); as far as we know, 
this fact was not in the literature before for any similar QFT model. 

Furthermore, we prove a statement (Lemma~\ref{lem:supp} in Section~\ref{sec:lemmas}) on how fast the support of $\Psi$ in $\RRR^3$ can grow with time: it can only grow at the speed of light, except for an additional instantaneous growth by the cut-off length $\delta$. A corresponding statement holds true for the multi-time wave function $\Phi$: its evolution is propagation local up to $\delta$; it is also interaction local up to $2\delta$.

\subsection{Motivation}
\label{sec:motivation}

\x{Although in QFT one often focuses on the operators, there are of course also quantum state vectors. And quantum state vectors are something real, as various considerations such as the Pusey-Barrett-Rudolph theorem \cite{PBR} indicate. By that it is meant that quantum states are things in nature and in that sense physical; they are not like probability distributions (representing the observer's knowledge). In view of that, and since the usual Fock space vectors refer to a chosen Lorentz frame, we are led to the question, how can the quantum state be represented in a covariant way? This is a conceptual, physically relevant question that naturally leads us to considerung multi-time wave functions, as they provide a covariant particle-position representation of the quantum state.

Most approaches to QFT focus on the operators and put the link with space-time points into the field operators, while the quantum state is a functional on the algebra or an element of an abstract Hilbert space that is not directly related to space-time. In contrast, the approach we follow here puts a link with space-time into the quantum state. Indeed, just as in non-relativistic quantum mechanics, where the quantum state can be represented by a wave function depending on many space-time points, we describe the state of a quantum field theory as a function $\Phi$ of space-time points. 

Multi-time wave functions thus offer a reformulation of QFTs and an alternative perspective on them. Richard Feynman \cite{Fey65} wrote in 1965: ``I, therefore, think that a good theoretical physicist today might find it useful to have a wide range of physical viewpoints and mathematical expressions of the same theory (for example, of quantum electrodynamics) available to him.'' We think this is still true today.

Our perspective on QFT provides a kind of ``pedestrian'' approach that expresses a QFT as a system of covariant PDEs. And indeed, the perspective of the particle-position representation of the quantum state has proven useful, for example a few years ago through the realization that certain UV-divergent particle creation terms can be tamed by means of certain boundary conditions on the wave function, so-called interior-boundary conditions \cite{ibc}. Another advantage of multi-time wave functions is that they are rather intuitive objects and governed by remarkably simple and natural equations, see \eqref{multi12}. Moreover, this approach can be used as a \emph{derivation} of the mathematical structure of QFT: instead of starting from (say) the Wightman axioms for the field operators, we may start from non-relativistic quantum mechanics, ask for including particle creation and annihilation while observing Lorentz invariance, and are then led \cite{LPT:2020} to multi-time wave functions and evolution equations such as \eqref{multi12}, which are presumably ultimately equivalent to operator-centered approaches to QFT.

Many physicists are worried that a particle-position representation may not be possible. Well, it is certainly of interest to explore to which extent it is possible, and the situation is often better than it may seem. For example, it is sometimes said that photons cannot be localized, but that boils down to difficulties with specifying the \emph{probability density} in position space, not with specifying the \emph{wave function} in position space \cite{BB96}; a photon wave function is mathematically equivalent to a (complexified) Maxwell field and thus unproblematical. 

Another advantage of multi-time wave functions is that they, like Tomonaga-Schwinger wave functions, provide a quantum state on every \emph{arbitrary} (curved) Cauchy surface (at least in the absence of UV cut-offs), something not available through a representation of the Poincar\'e group as provided by the Wightman axioms. An advantage over Tomonaga-Schwinger wave functions is that multi-time wave functions can be considered with UV cut-offs. Finally, multi-time evolution equations may also be expected to generalize rather easily to curved space-time, although this question has not yet been fully explored in the literature.

Now the possibility of multi-time wave functions is crucially linked to a mathematical problem: that of consistency of the multi-time evolution equations. And that is what this paper contributes to. 

In this work, Lorentz invariance has a delicate status: On the one hand, Lorentz invariance is a main motivation for using multi-time wave functions; on the other, the wish for a rigorous discussion leads us to introducing an UV cut-off, which breaks the Lorentz invariance. Then again, its violation is limited to certain aspects while other aspects of Lorentz invariance are retained. For example, although the propagation of wave functions is not confined to the light cone, it is confined to a neighborhood of the light cone with spatial radius given by the cut-off length $\delta>0$ (see Figure~\ref{fig:supp2} and Eq.~\eqref{thm1Gr} for more precise elucidation). It would be desirable to work without UV cut-off, but presently that is not an option in 1+3 dimensions.
}

\subsection{Comparison to Prior Works}

Our work is similar to recent work in \cite{LN:2018,ND:2019}. In 
\cite{LN:2018}, a rigorous consistency proof 
is presented for a QFT model in 1+1 
space-time dimensions involving one species (``$x$'') of 
spin-$\tfrac{1}{2}$ particles with creation and annihilation according 
to $x \leftrightarrows x+x$ instead of \eqref{xxy}; instead of a UV 
cut-off, that model uses interior-boundary conditions. In 
\cite{ND:2019}, a rigorous consistency proof is presented for the QFT 
model of Dirac, Fock, and Podolsky \cite{dfp:1932} in 1+3 space-time 
dimensions with UV cut-off, in which $x$-particles interact through a 
quantized $y$-field; the main difference to our model is that here we 
give a separate time variable to every $y$-particle. Further consistency 
proofs for multi-time equations are contained in \cite{pt:2013a}, where 
it was shown for any fixed number $N$ of particles that the consistency 
condition \eqref{eq:consist} is necessary and sufficient for consistency 
of \eqref{eq:multisys}, provided that $\Phi$ is defined on all of 
$\sM^N$ (instead of $\sS^{(N)}$) and that the partial 
Hamiltonians are bounded operators on 
$L^2(\RRR^{3N},\CCC^k)$ that depend smoothly on 
$t_1,...,t_N$, or else are (possibly unbounded) self-adjoint operators 
and do not depend on $t_1,...,t_N$. Rigorous consistency proofs for multi-time equations governing $N$ particles with zero-range interaction in 1+1 space-time dimensions were given in \cite{lienert:2015a,lienert:2015c,LN:2015,KLTZ19}.

An overview of work on multi-time wave functions can be found in 
\x{\cite{LPT:2017a,LPT:2020}}. The idea of a multi-time wave function was conceived as early as 1929 \cite{Edd1929,Gaunt1929,Mott1929}.
Early examples of consistent multi-time evolutions 
with interaction were given in \cite{DV82b,CVA:1983,DV85,CVA:1997}. The 
approach is contrasted with the idea of multiple timelike dimensions in 
\cite{LPT:2017b} and with multi-time equations in classical mechanics in \cite{pt:2013e}. 
The appropriate version of the Born rule for $\Phi$ is 
formulated and proved in \cite{LT:2017} after pioneering work in 
\cite{Mott1929,bloch:1934}. 

A big difference between the mathematics of multi-time and single-time 
evolution concerns the role of Hilbert spaces. In the case of a fixed 
number $N$ of particles, we cannot simply fix the values of all time 
variables $t_1,...,t_N$ and consider the wave function as a function of 
the spatial variables $\bx_1,...,\bx_N$ alone because for some values of 
the $\bx_j$, $((t_1,\bx_1),...,(t_N,\bx_N))$ will not be a spacelike 
configuration and $\Phi$ will not be defined there. Nevertheless, 
Hilbert spaces play a role as a tool in our proof, as we will consider 
subsets of particles at equal time values. We will also make use of the 
fact that wave functions do not propagate faster than at the speed of 
light, so that $\Psi_t$, when considered only in a given region of 
3-space, is determined by initial data $\Psi_0$ in a suitably larger region.

This paper is organized as follows. Section~\ref{sec:physmod} is 
dedicated to the definition of the QFT model and the formulation of its 
multi-time Schr\"odinger equations. In Section~\ref{sec:results}, we describe our main results.
In Section~\ref{sec:singletime}, we 
establish lemmas about the single-time evolution that we will need as 
tools for the multi-time analysis. In Section~\ref{sec:multitime}, we 
then construct the solution to the multi-time equations and prove 
uniqueness of the solution. In Section~\ref{sec:conclusions}, we conclude.

\section{The Physical Model}	
\label{sec:physmod}

Our model is a toy QFT in which a fixed number $M\in\NNN$ of $x$-particles can emit and absorb $y$-particles. It is a UV-regularized version of the multi-time model of \cite{pt:2013c}.

\subsection{Original, UV Divergent Equations}

The multi-time model of \cite{pt:2013c} is defined by the formal evolution equations
\begin{subequations}\label{multi12}
\begin{align}\label{multi1}
i \frac{\partial\Phi^{(N)}}{\partial x_k^0}(x^{4M}, y^{4N}) &= H^{\free}_{x_k} \Phi^{(N)}(x^{4M}, y^{4N}) 
+ \sqrt{N+1}  \sum_{s_{N+1}=1}^4 g_{s_{N+1}}^* \, \Phi_{s_{N+1}}^{(N+1)}\bigl(x^{4M}, (y^{4N}, x_k)\bigr)\nonumber \\
& \quad + \frac{1}{\sqrt{N}} \sum_{\ell=1}^N G_{0,s_\ell}(y_\ell - x_k) \, \Phi_{\widehat{s_\ell}}^{(N-1)}\bigl(x^{4M}, y^{4N} \backslash y_\ell\bigr)  \\[3mm] 
\label{multi2}
i \frac{\partial\Phi^{(N)}}{\partial y_\ell^0}(x^{4M}, y^{4N}) &= H^{\free}_{y_\ell}\Phi^{(N)}(x^{4M}, y^{4N}).
\end{align}
\end{subequations}
Here, $N\in\NNN\cup\{0\}=:\NNN_0$, $q^4:=(x^{4M},y^{4N})=(x_1,\ldots,x_M,y_1,\ldots,y_N)\in \sS^{(M+N)}$ is a spacelike space-time configuration, $H^{\free}_{x_k}$ is the free Dirac operator acting on particle $x_k$, $g\in \CCC^4$ is a fixed spinor playing the role of a coupling constant, $s_\ell\in\{1,2,3,4\}$ is the spin index of particle $y_\ell$ while most spin indices are not explicitly written, $\widehat{\:}$ means omission, $\backslash$ means to remove an entry, and $G_0$ is a Green function: it is the $\CCC^4$-valued distribution on $\RRR^4$ that is the solution of
\be\label{Gdef1}
i\frac{\partial G_0}{\partial t} = H^{\free}_y \, G_0
\ee
with initial condition
\be\label{Gdef2}
G_{0,s}(0,\by) = g_s \delta^3(\by)\,. 
\ee
Complete definitions tailored to our model are given in Sections~\ref{sec:singletimemodeldef} and \ref{sec:multitimemodeldef} below. (The superscripts $(N)$, $(N+1)$ etc.\ are actually unnecessary because the sector is determined by the argument of $\Phi$; they are mentioned merely for easier readability.)

In contrast to \eqref{eq:multisys}, we here encounter a $q$-dependent (and unbounded) number of PDEs, since the number $N$ of $ y $-particles is unbounded, and a separate PDE is assigned to each particle.
In \cite{pt:2013c}, non-rigorous arguments were given for the consistency of \eqref{multi12}. 

It follows from \eqref{multi12} that the single-time wave function $\Psi$ evolves with the (UV-divergent) Hamiltonian 
\begin{align}
(H\Psi)(\bx^{3M},\by^{3N}) 
&= \sum_{k=1}^M H^{\free}_{x_k} \Psi(\bx^{3M},\by^{3N}) + \sum_{\ell=1}^N H^{\free}_{y_\ell} \Psi(\bx^{3M},\by^{3N})\nonumber\\
&\quad + \sqrt{N+1} \sum_{k=1}^M \sum_{s_{N+1}=1}^4 g_{s_{N+1}}^* \, \Psi_{s_{N+1}}\bigl(\bx^{3M}, (\by^{3N}, \bx_k)\bigr)\nonumber \\
& \quad + \frac{1}{\sqrt{N}} \sum_{k=1}^M \sum_{\ell=1}^N g_{s_\ell}\, \delta^3(\by_\ell - \bx_k) \, \Psi_{\widehat{s_\ell}}(\bx^{3M}, \by^{3N} \backslash \by_\ell) 
\label{Hdef}
\end{align}
at the spatial configuration $q^3:=(\bx^{3M},\by^{3N})=(\bx_1,\ldots,\bx_M,\by_1,\ldots,\by_N)$. Put in a different notation,
\begin{multline}
H =  d\Gamma_x(H^{\free}_{x}) + d\Gamma_y(H^{\free}_{y})\\
+\sum_{r,s=1}^4 \int_{\RRR^3}d^3\bx \, a_{x,r}^\dagger(\bx) \, a_{x,r}(\bx) \, \bigl(g_s^* a_{y,s}(\bx) + g_s a_{y,s}^\dagger(\bx)\bigr)\,,
\end{multline}
where we have assumed a fermionic Fock space for the $x$-particles (although $H$ will map its $M$-particle sector to itself). We have written $d\Gamma(S)$ for the second quantization of the 1-particle operator $S$, $H_y^\free$ for the 1-particle Dirac operator, and $a_x$ and $a_y$ for the annihilation operators for $x$- and $y$-particles in the position representation. 

In this paper, we formulate our proofs for the slightly more general possibility that the coupling coefficients $g$ also act on the spin index $r$ of the emitting or absorbing $x$-particle according to
\begin{multline}
H =  d\Gamma_x(H^{\free}_{x}) + d\Gamma_y(H^{\free}_{y})\\
+\sum_{r,r',s=1}^4 \int_{\RRR^3}d^3\bx \, a_{x,r}^\dagger(\bx) \, a_{x,r'}(\bx) \, \bigl(g_{r'rs}^* a_{y,s}(\bx) + g_{rr's} a_{y,s}^\dagger(\bx)\bigr)\,,
\end{multline}
with the consequence that the ``coupling constant'' $g$ is an element of $\CCC^4 \otimes \CCC^4 \otimes \CCC^4$. That is analogous to quantum electrodynamics, where the coupling coefficients are proportional to the 4-vector of Dirac gamma matrices $\gamma^\mu$ with their $4\times 4\times 4$ entries $\gamma^\mu_{rr'}$.

The UV divergence of $H$ manifests itself in \eqref{Hdef} in the fact that $\delta^3$ is not an $L^2$ function and thus $H\Psi\notin \sH$; the UV cut-off will consist in replacing $\delta^3$ by an $L^2$ function $\cutoff$. We now turn to defining mathematically the cut-off version, our model in this paper, first in the single-time formulation, then multi-time.

\subsection{Single-Time Formulation With Cut-Off}
\label{sec:singletimemodeldef}

Although our goal is a multi-time formulation, we will use the single-time formulation as a tool, along with the Hilbert space and the Hamiltonian operator.

To begin with, the wave function of a single Dirac particle can be described in two different ways: Either we assign a $ \CCC^4 $-vector to each point of $\RRR^3$, or we consider a $ \CCC $-valued function on $\RRR^3\times\{1,2,3,4\}$ (i.e., 4 disjoint copies of $ \RRR^3 $, see Figure~\ref{fig:configspace3D}):
\be
	\Psi \in 
	L^2(\RRR^3, \CCC^4) \cong L^2(\RRR^3\times\{1,2,3,4\}, \CCC) 
\ee

\begin{figure}[hbt]
    \centering
    \def\cube at (#1,#2){
\filldraw[draw=black, fill=blue!25!white] (#1,#2) rectangle ++(1,1);
\filldraw[draw=black, fill=blue!35!white] (#1+1,#2) -- ++(0.375,0.375) -- ++(0,1) -- ++(-0.375,-0.375) -- cycle;
\filldraw[draw=black, fill=blue!15!white] (#1,#2+1) -- ++(1,0) -- ++(0.375,0.375) -- ++(-1,0) -- cycle;
}
\begin{tikzpicture}
	\cube at (0.6,1.3);
	\fill (1.3,2) circle (0.06);
	\draw (1.5,2) .. controls (1.9,2) and (2.1,2.4) .. (2.3,2.8);
	\node at (2.3,3) {$ q $};
	\node at (3.6,1.7) {$ \Psi(q) = \begin{pmatrix} \Psi_1\\ \Psi_2\\ \Psi_3\\ \Psi_4 \end{pmatrix}$};
	\node at (1.3,0.5) {$ \mathcal{Q}_1 = \mathbb{R}^3 $};
	
	\cube at (7,1)
	\cube at (8.75,1)
	\cube at (7,2.6)
	\cube at (8.75,2.6)
	\fill (9.5,1.8) circle (0.06);
	\draw (9.7,1.8) .. controls (10,1.8) and (10.3,1.8) .. (10.6,2);
	\node at (10.8,2.2) {$ (q,s) $};
	\node at (11.8,1.2) {$ \Psi(q,s) \in \mathbb{C} $};
	\node at (8.6,0.5) {$ \mathcal{Q}_1^s = \mathbb{R}^3 \times \{1,2,3,4\}$};
	
\end{tikzpicture}
    \caption{Wave functions $ \RRR^3 \rightarrow \CCC^4 $ and $ \RRR^3\times\{1,2,3,4\} \rightarrow \CCC $ are equivalent.}
    \label{fig:configspace3D}
\end{figure}

The corresponding \emph{spin-configuration spaces} \x{of $M$ $x$-particles and $N$ $y$-particles} are
\be
	\cQ^{s3}_x = (\RRR^3\times\{1,2,3,4\})^M, \quad\cQ_y^{s3,(N)} = (\RRR^3\times\{1,2,3,4\})^N.
\ee
\x{In our model, $M$ is fixed and $N$ is variable, so}
\be
	 \cQ^{s3} := \bigcup_{N = 0}^{\infty} \left( \cQ^{s3}_x \times \cQ_y^{s3,(N)} \right) =: \bigcup_{N = 0}^{\infty} \cQ^{s3,(N)}.
\label{eq:Q}
\ee
An element of $\cQ^{s3}$, i.e., a spin-configuration in 3D, will be denoted by $q^{s3}$. In the analogous way we define the ``spin-free'' configuration spaces $ \cQ_x^3, \cQ_y^3 $ and $ \cQ^3 $ (where the superscript means an index, not a Cartesian power). 

An equal-time quantum state is an element of the Hilbert space
\be\label{eq:H}
\sH=\sH_x\otimes \sH_y := L^2(\RRR^3, \CCC^4)^{\otimes_A M} \otimes \bigoplus_{N=0}^{\infty} L^2(\RRR^3, \CCC^4)^{\otimes_S N}\,,
\ee
where $M\in\NNN$ is the (fixed) number of $x$-particles, $\otimes_A$ denotes the anti-symmetric and $\otimes_S$ the symmetric tensor power; in particular, $\sH_y$ is the bosonic Fock space. Elements of $\sH$ can be represented as wave functions in $L^2(\cQ^{s3}):=L^2(\cQ^{s3}, \CCC)$, denoted by
\be
	\Psi = \Psi_{r_1,...,r_M,s_1,...,s_N}^{(N)}(\bx_1,...,\bx_M, \by_1,...,\by_N) =: \Psi_{\br,\bs}^{(N)}(\bx^{3M},\by^{3N}) = \Psi(q^{s3}),
\ee
which are suitably symmetric against permutations. Here, the spin indices are gathered in vectors $ \br \in \{1,2,3,4\}^M $ and $ \bs \in \{1,2,3,4\}^N $. 

The free time evolution is given by the Dirac operators
\be
	\bH^{\free}_{x_k} := - i \sum_{a = 1}^3 \alpha^a \; \frac{\partial}{\partial x^a_k} + m_x \beta , \quad \bH^{\free}_{y_\ell} := - i \sum_{a = 1}^3 \alpha^a \; \frac{\partial}{\partial y^a_\ell} + m_y \beta ,
\label{eq:diracops}
\ee
where $ \alpha^1, \alpha^2, \alpha^3 $ and $ \beta $ are the Dirac alpha and beta matrices. Fermions are assumed to have a rest mass $ m_x > 0 $ and bosons have a rest mass $ m_y > 0 $. (Domains of operators will be specified in Section~\ref{sec:results}. Henceforth, we use bold face font for operators on Hilbert space, as well as still for vectors in dimensions $3,3M,3N,M,N$.)

Let $e_1,\ldots,e_4$ denote the standard basis in $\CCC^4$. The annihilation operator of a $y$-particle in the position representation at location $\bx$ in the spin state $e_s$ will be denoted by $\ba(e_s\delta^3(\cdot-\bx))$. To implement the UV cut-off, we will replace $\delta^3$ in the Hamiltonian by a smooth (i.e., infinitely often differentiable) cut-off function 
\be
\cutoff \in C_c^{\infty}(\RRR^3, \RRR)
\ee
with compact support inside a ball $ B_\delta(\bzero)$ of radius $\delta > 0 $ around the origin $\bzero\in\RRR^3$. Physically, the radius $\delta$ should be small, but our mathematical results apply to any positive $\delta$. The corresponding smeared-out annihilation operator $\ba(e_s\cutoff(\cdot-\bx))$ will be abbreviated as $\ba_s(\bx)$. For the location $\bx\in\RRR^3$, we will need to insert the position operator of (say) the $k$-th $x$-particle, and the resulting annihilation operator will be denoted $\ba_s(\bx_k^{op})$. That is,
\be
\begin{aligned}
	(\ba_s(\bx_k^{op}) \Psi)^{(N)}(\bx^{3M},\by^{3N}) &= \sqrt{N + 1} \int d^3 \boldsymbol{\tilde{y}} \, \cutoff(\boldsymbol{\tilde{y}} - \bx_k) \Psi^{(N+1)}_{s_{N+1} = s}(\bx^{3M},(\by^{3N},\boldsymbol{\tilde{y}}))\\
	(\ba_s^{\dagger}(\bx_k^{op}) \Psi)^{(N)}(\bx^{3M},\by^{3N}) &= \frac{1}{\sqrt{N}} \sum_{\ell=1}^{N} \delta_{s s_\ell} \cutoff(\by_\ell - \bx_k) \Psi^{(N-1)}_{\widehat{s_\ell}}(\bx^{3M},\by^{3N} \setminus \by_\ell).
\end{aligned}
\label{eq:a}
\ee

The interaction Hamiltonian is of the form
\be
\bH^{\inter}= \sum_{k=1}^M \bH_{x_k}^{\inter}\,,
\ee
where the $k$-th term represents emission and absorption by particle $x_k$ and is defined by
\be
	(\bH_{x_k}^{\inter} \Psi)_{r_k = r} := \sum_{s,r' = 1}^4 \Bigl( g^*_{r'rs} \ba_s(\bx_k^{op}) \Psi_{r_k = r'} + g_{rr's}\ba_s^{\dagger}(\bx_k^{op})  \Psi_{r_k = r'}\Bigr)\,.
\label{eq:Hint} 
\ee
The full Hamiltonian $\bH : \sH \supset \mathrm{dom}(\bH) \rightarrow \sH$ (where $\mathrm{dom}$ means domain) of the model reads
\be
	\bH := \sum_{k = 1}^M \left( \bH_{x_k}^{\free} + \bH_{x_k}^{\inter} \right) + d\Gamma_y(\bH_{y}^{\free}).
\label{eq:Hamiltonian}
\ee

\subsection{Multi-Time Formulation With Cut-Off}
\label{sec:multitimemodeldef}

We now write down the system of multi-time equations corresponding to the Hamiltonian \eqref{eq:Hamiltonian}; the equations are a version of \eqref{multi12} with UV cut-off $\cutoff$. As a preparation, we first need to

\begin{itemize}
\item define the set $\sS_\delta$ of multi-time configurations $q^4=(x^{4M},y^{4N})$ for which $\Phi(q^4)$ will be defined (Section~\ref{sec:Sdeltadef});

\item describe which functions on $\sS_\delta$ are regarded as smooth, and which directional derivatives can be taken of them (Section~\ref{sec:smoothdef}).
\end{itemize}
Afterwards, we will formulate the multi-time equations in Section~\ref{sec:multieq}.

\subsubsection{Admissible Configurations}
\label{sec:Sdeltadef}

We now define the set $\sS_\delta$ of $\delta$-spacelike configurations. A similar modification of $\sS$ was used in Sections 1.5.3 and 4 of \cite{pt:2013a} for the purpose of using interaction potentials of range less than $\delta$. 

We write $\cQ^4$ for the set of \emph{all} space-time configurations (4D configurations) $q^4=(x^{4M},y^{4N})$,
\be
\cQ^4 := \bigcup_{N=0}^\infty \sM^{M+N}\,.
\ee
Sometimes we want to talk about spin-configurations in 4D, 
\be
	q^{s4} = (x_1,r_1,...,x_M,r_M,y_1,s_1,...,y_N,s_N) 
	\in \bigcup_{N=0}^\infty \bigl(\sM\times \{1,2,3,4\}\bigr)^{M+N} =: \cQ^{s4}\,.
\ee

As a consequence of the cut-off $\cutoff$ over the distance $\delta>0$, an $x$-particle can instantaneously interact with (i.e., emit or absorb) a $y$-particle at a distance up to $\delta$, and two $x$-particles can instantaneously interact with each other (if one emits a $y$ and the other absorbs it) at a distance up to $2\delta$. Since this instantaneous interaction will generically conflict with the consistency of the multi-time evolution, we make the domain $\sS$ of $\Phi$ slightly smaller and allow configurations involving two particles with such a small distance \emph{only} if the time coordinates of the two particles are equal. So on each sector $N$, we define the set $\sS_\delta^{(N)}$ of $\delta$-spacelike configurations as follows:
\be
\begin{aligned}
	\sS_{\delta}^{(N)} := \Bigl\{(x^{4M},y^{4N})\in \sM^{M+N}~\Big|~
	&\Vert \bx_k - \bx_{k'} \Vert > \vert x^0_k - x^0_{k'} \vert + 2\delta ~~\text{or}~~ x^0_k = x^0_{k'}\\
	&\Vert \bx_k - \by_\ell \Vert > \vert x^0_k - y^0_\ell \vert + \delta ~~~~~\,\text{or}~~ x^0_k = y^0_\ell\\[2mm]
	&\Vert \by_\ell - \by_{\ell'} \Vert > \vert y^0_\ell - y^0_{\ell'} \vert ~~~~~~~~~~\text{or}\;~~ y_\ell = y_{\ell'}\\[1mm]
	&\forall k,k' \in \{1,...,M\}, \; \ell,\ell' \in \{1,...,N\}\Bigr\}.
\end{aligned}
\label{eq:SdeltaN}
\ee
We set
\be\label{Sdeltadef}
\sS_\delta := \bigcup_{N=0}^\infty \sS_\delta^{(N)} \subset \cQ^{4}\,.
\ee
The set $ \sS_\delta$ is illustrated in Figure \ref{fig:Sdelta}. Whenever two particles satisfy the appropriate inequality in \eqref{eq:SdeltaN}, we say that they \emph{keep their safety distance}. The sets $ \sS^{s,(N)}_\delta \subset (\sM\times \{1,2,3,4\})^{M+N}$ and $ \sS^{s}_\delta \subset \cQ^{s4}$ are defined analogously (i.e., by the same conditions as in \eqref{eq:SdeltaN}) for spin-configurations in 4D.

\begin{figure}[hbt]
    \centering
    \begin{tikzpicture}
	\fill[red,opacity = 0.1] (-1,0) -- ++(-2.5,-2.5) -- ++(7,0) -- ++(-2.5,2.5) -- ++(2.5,2.5) -- ++ (-7,0) -- cycle;
	\fill[green!50!black,opacity = 0.4] (-1,0) -- ++(-2.5,-2.5) -- ++ (0,5) -- cycle;
	\fill[green!50!black,opacity = 0.4] (1,0) -- ++(2.5,-2.5) -- ++ (0,5) -- cycle;
	\draw[red,thick,dotted] (-3.5,-2.5) -- ++(2.5,2.5) -- ++ (-2.5,2.5);
	\draw[red,thick,dotted] (3.5,-2.5) -- ++(-2.5,2.5) -- ++ (2.5,2.5);
	\draw[->,thick] (-3.5,0) -- ++(7,0) node[anchor = south] {$ \boldsymbol{x}_j - \boldsymbol{x}_k $};
	\draw[->,thick] (0,-2.5) -- ++(0,5) node[anchor = south] {$ x^0_j - x^0_k $};
	\draw[green!50!black,line width = 3] (-1,0) -- ++(2,0);
	\node at (-0.3,-0.3) {$ 0 $};
	\node[green!50!black] at (2.4,-0.6) {\huge{$ \mathscr{S}_{\delta}^s $}};
\end{tikzpicture}
    \caption{A cross-section of the set $ \sS_{\delta}$ is depicted in green (or dark grey), showing for which values of $x_j-x_k$ the space-time configuration can be in $\sS_\delta$.}
    \label{fig:Sdelta}
\end{figure}
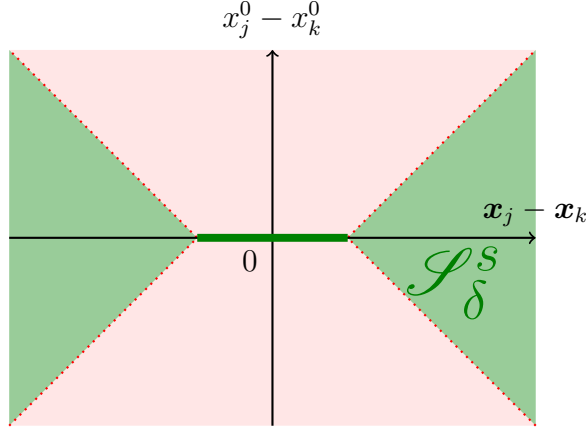

A \emph{multi-time wave function} is a mapping
\be
\begin{aligned}
	\Phi: \sS^s_\delta &\rightarrow \CCC\\
	q^{s4} &\mapsto \Phi(q^{s4}) = \Phi^{(N)}_{\br,\bs}(x_1,...,x_M,y_1,...,y_N).
\end{aligned}
\ee
Note that the definition of $\sS_\delta$ and $\sS_\delta^s$ depends on the frame of reference, as Lorentz invariance is broken by the cut-off $ \cutoff$.

\subsubsection{Admissible Wave Functions}
\label{sec:smoothdef}

Our considerations focus on \emph{smooth} wave functions $\Phi$; for them, derivatives can be understood in the classical sense. However, since $\sS_\delta^{(N)}$ is not an open set in $(\RRR^4)^{M+N}$, we need to explain what we mean by a smooth function on $\sS_\delta^{(N)}$. Put briefly, we regard a function as \emph{smooth} at $q^4\in \sS_\delta$ if it is smooth in the local number of dimensions of $\sS_\delta$ at $q^4$. 

We now approach the detailed definition of smoothness, following \cite{pt:2013a}. To begin with, a function $\Psi: \cQ^{s3}\to\CCC$ will be called \emph{smooth}, $\Psi\in C^\infty(\cQ^{s3})$, if its restriction to each sector is smooth.

For any $ q^4 \in \sS_\delta $, if two particles do not keep their safety distance (say, $ \Vert \bx_j - \bx_k \Vert < 2\delta $), 
then, by the definition \eqref{eq:SdeltaN}, their times must be equal ($x_j^0=x_k^0$) and must remain so in the vicinity of $ q^4 $. Hence, we cannot vary the two time coordinates independently of each other, we can only increase both by the same amount. And hence, we cannot form the partial $x_k^0$ derivative of $\Phi$, we can only form the directional derivative in the direction $e_{x_j}^0+e_{x_k}^0$ in $(\RRR^4)^{M+N}$ (unless further particles do not keep their safety distance from either $x_j$ or $x_k$); here, $e^\mu$ means the standard basis of $\RRR^4$; see Figure~\ref{fig:Sdelta2}. That is why we proceed by grouping particles into families of equal time coordinate.

\begin{figure}[hbt]
    \centering
    \begin{tikzpicture}
	\fill[red,opacity = 0.1] (-1,0) -- ++(-2.5,-2.5) -- ++(7,0) -- ++(-2.5,2.5) -- ++(2.5,2.5) -- ++ (-7,0) -- cycle;
	\fill[green!50!black,opacity = 0.4] (-1,0) -- ++(-2.5,-2.5) -- ++ (0,5) -- cycle;
	\fill[green!50!black,opacity = 0.4] (1,0) -- ++(2.5,-2.5) -- ++ (0,5) -- cycle;
	\draw[red,thick,dotted] (-3.5,-2.5) -- ++(2.5,2.5) -- ++ (-2.5,2.5);
	\draw[red,thick,dotted] (3.5,-2.5) -- ++(-2.5,2.5) -- ++ (2.5,2.5);
	\draw[->,thick] (-3.5,0) -- ++(7,0) node[anchor = south] {$ \boldsymbol{x}_j - \boldsymbol{x}_k $};
	\draw[->,thick] (0,-2.5) -- ++(0,5) node[anchor = south] {$ x^0_j - x^0_k $};
	\draw[green!50!black,line width = 3] (-1,0) -- ++(2,0);
	\node at (-0.3,-0.3) {$ 0 $};
	\filldraw[blue] (0.5,0) circle (0.1) node[anchor = south east] {\Large{$q$}};
	\draw[->,green!20!black,line width=2] (0.5,0) -- ++(1.5,0) node[anchor = north] {\Large{admissible}};
	\draw[->,red!50!black,line width=2] (0.5,0) -- ++(0,2) node[anchor = west] {\begin{tabular}{l} \Large{not} \\ \Large{admissible} \end{tabular}};
\end{tikzpicture}
    \caption{At configurations with $x_j^0=x_k^0$ and 
    	$\|\bx_j-\bx_k\|<\delta$, only derivatives in directions 
		changing $ x_j^0 $ and $ x_k^0 $ simultaneously are admissible.}
    \label{fig:Sdelta2}
\end{figure}
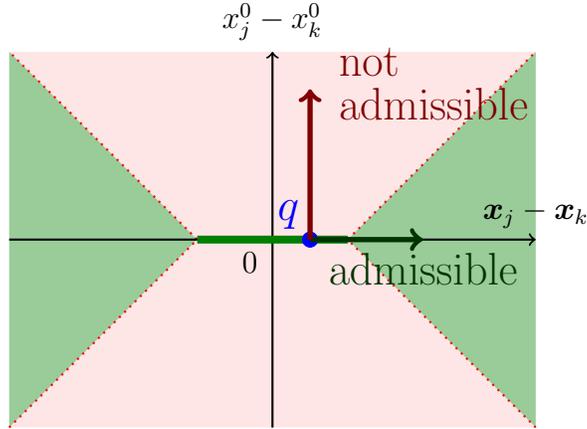

To this end, let $\mathscr{L}^{(N)}$ be the $M+N$-element set of particle labels; that is,
\be
\mathscr{L}^{(N)}=\{x_1,\ldots,x_M,y_1,\ldots,y_N\}\,,
\ee
where $x_k$ and $y_\ell$ are now not space-time points but names for the particles. A \emph{partition} $P$ of $\mathscr{L}^{(N)}$ is a set $P = \{ P_1,\dots,P_J \}$ of non-empty subsets $P_j$ of $\mathscr{L}^{(N)}$ (called \emph{families} in the following) with $\cup_{j=1}^J P_j = \mathscr{L}^{(N)}$ and $P_i \cap P_j = \emptyset$ for $i \neq j$. Let $\cP^{(N)}$ denote the set of partitions of $\mathscr{L}^{(N)}$, and let, as a unified notation, $z^\mu_{x_k} = x_k^\mu$ and $z^\mu_{y_\ell}=y_\ell^\mu$, so $z$ could represent any particle ($x$ or $y$). For every $P\in \cP^{(N)}$, we define
\be
\begin{aligned}
	\sS_{\delta}^P := \Bigl\{q^4\in \sM^{M+N}~\Big|~
	&\forall j \in \{1,\dots,J\}~\forall m,n \in P_j : z_m^0 = z_n^0\\
	&\forall j \neq j'~\forall m\in P_j ~\forall n\in P_{j'}: \Vert \bz_m-\bz_n \Vert > \vert z^0_m - z^0_n \vert + d(m,n)\Bigr\},
\end{aligned}
\label{eq:SdeltaP}
\ee
where we abbreviated the ``safety distance'' by
\be\label{safetyd}
d(x_k,x_{k'}):=2\delta, ~~~d(x_k,y_\ell):=\delta, ~~~d(y_\ell, y_{\ell'}):=0.
\ee
In words, $\sS_{\delta}^P$ contains those configurations for which particles in the same family have equal time coordinate and particles in different families keep their safety distance. 
Note that $\sS_\delta^P \subset \sS_\delta^{(N)}$; that for different choices of $P$, the $\sS_\delta^P$ are not necessarily disjoint (because it is allowed that particles in different families have equal time coordinate); and that together, they cover
$\sS_\delta^{(N)}$, i.e., $\cup_{P\in \cP^{(N)}} \sS_\delta^P = \sS_\delta^{(N)}$ (because for given $q^4\in\sS_\delta^{(N)}$, we can group particles into the same family whenever they do not keep their safety distance; that is, $P$ consists of the equivalence classes of the reflexive and transitive hull of the relation ``do not keep their safety distance''; this is the finest partition with $q^4\in\sS_\delta^P$).  

If $q^4\in \sS_\delta^P$, then we also write $t_j$ for the joint time variable of all particles in $P_j$, and $q_j$ for the list of space coordinates of all particles belonging to $P_j$; using that notation, we also write
\be\label{qjdef}
q^4 = (t_1,q_1;\dots;t_J,q_J)\,,
\ee
so that $\sS_\delta^P$ can also be regarded as an open subset of $\RRR^{3M+3N+J}$ (while we continue to use the notation $q^4=(x_1,\ldots,x_M,y_1,\ldots,y_N)$).

\begin{definition}\label{def:smooth}
	A function $\Phi$ is \emph{smooth on $\sS_\delta^P$} if it is smooth as a function of the variables $t_1,q_1,\ldots,t_J,q_J$; $\Phi$ is \emph{smooth on $\sS_\delta^{(N)}$} if it is smooth on $\sS_\delta^P$ for every $P\in\cP^{(N)}$; $\Phi$ is \emph{smooth on $\sS_\delta$} if it is smooth on $\sS_{\delta}^{(N)}$ for every $N\in\NNN_0$. Likewise for functions on spin-configurations: A function $ \Phi$ is \emph{smooth on $\sS_\delta^{s,(N)}$} if for each $(\br,\bs)\in \{1,2,3,4\}^{M+N}$, its spin component $\Phi_{\br,\bs}$ is smooth on $\sS_\delta^{(N)}$; $\Phi$ is \emph{smooth on $\sS_\delta^{s}$} if it is smooth on $\sS_{\delta}^{s,(N)}$ for every $N\in\NNN_0$. The set of all smooth complex-valued functions on $\sS_\delta^s$ is denoted by $C^\infty(\sS_\delta^s)$, the set of smooth functions $\sS_\delta^P\to \CCC^k$ by $C^\infty(\sS_\delta^P, \CCC^k)$.
\end{definition}

Put differently, $\sS_\delta^P$ is a submanifold of $\sM^{M+N}$ of dimension $3M+3N+J$, and $\Phi$ counts as smooth on $\sS_\delta^P$ if it is smooth as a function on this submanifold. It then also follows that at $q^4\in \sS_\delta^P$, derivatives of $\Phi$ in any direction tangent to the submanifold can be taken, as well as higher-order derivatives. That is, $\Phi$ can be differentiated relative to any position coordinate and relative to any family time coordinate $t_j$.

\subsubsection{Multi-Time Equations}
\label{sec:multieq}
\label{sec:IVP}

We now formulate the analog of the multi-time equations \eqref{multi12} with UV cut-off. 

\x{Here, we have to face an issue that we touched upon already in Section~\ref{sec:smoothdef} and Figure~\ref{fig:Sdelta2}: At configurations in which $x_j$ and $x_k$ have equal time coordinate but spatial distance $<\delta$, we cannot vary $x_j^0$ and $x_k^0$ independently without leaving the set $\sS_\delta$ on which $\Phi$ is defined. As a consequence, the partial time derivatives such as $\partial_{x_k^0}$ that were specified in \eqref{multi12} (the original equations using Dirac delta functions) do not make immediate sense with UV cut-off. However, suitable linear combinations of these equations will specify derivatives such as $\partial_{x_j^0}+\partial_{x_k^0}$, which can be understood as the derivative in the direction $e_{x_j}^0+e_{x_k}^0$, which does make sense. Correspondlingly, we take these combinations as the precise evolution equations, as the way how \eqref{multi12} should be interpreted. Again, we group together particles that do not keep their safety distance (and thus have equal time coordinate) into families with common time $t_j$ and formulate equations for the $\partial_{t_j}$-derivative; in particular, at $q^4\in \sS_\delta^P$ with $P=\{P_1,\ldots,P_J\}$, we specify $J$ equations rather than $M+N$.} 
Concretely, the evolution equations can be expressed as follows:
For any $N\in\NNN_0$, $P\in\cP^{(N)}$, and $q^4\in \sS_\delta^P$,
\begin{subequations}
\begin{align}\label{multi34}
&i\frac{\partial\Phi}{\partial t_j}(q^4) 
= \sum_{x_k\in P_j} H^{\free}_{x_k} \Phi(q^4) + \sum_{y_\ell\in P_j} H^{\free}_{y_\ell} \Phi(q^4)\nonumber\\
& + \sqrt{N+1} \sum_{x_k\in P_j} \sum_{r_k',s_{N+1}} g^*_{r'_k r_k s_{N+1}} \int_{B_\delta(\bx_k)} \hspace{-7mm} d^3\tilde\by\:\: \cutoff(\tilde\by-\bx_k) \:\: \Phi^{(N+1)}_{r_k',s_{N+1}}\Bigl(x^{4M},\bigl(y^{4N}, (x_k^0, \tilde\by)\bigr)\Bigr) \nonumber\\
& + \frac{1}{\sqrt{N}} \sum_{x_k\in P_j} \sum_{y_\ell\in P_j} \sum_{r'_k} g_{r_k r'_k s_\ell} \: \cutoff(\by_\ell-\bx_k) \: \Phi^{(N-1)}_{r_k' \widehat{s_\ell}} \bigl(x^{4M}, y^{4N}\setminus y_\ell \bigr)
\\[3mm] \label{HjPdef}
&=:H_j^P \, \Phi(q^4)
\end{align}
\end{subequations}
for all $j\in\{1,\ldots,J\}$. Here, $\sum_{x_k\in P_j}$ means the sum over all $x$-particles in $P_j$, etc.; the index $r_k$ is made explicit at some terms and not others, but occurs at each term. Note that the arguments of $\Phi^{(N+1)}$ and $\Phi^{(N-1)}$ lie in $\sS_\delta$ again. Thus, the right-hand side of \eqref{multi34} defines, for given $N,P,j$, an operator
\be
H_j^P: C^\infty(\sS_\delta^{s}) \to C^\infty \Bigl(\sS_\delta^{P},(\CCC^4)^{\otimes M+N}\Bigr)\,.
\ee
Note that the number of time variables $t_j$ involved in the system of equations \eqref{multi34} is $J=\# P$. Since $q^4$ can lie in $\sS_\delta^P$ for several $P$ (say $P$ and $P'$), one needs to check that the equations \eqref{multi34} from $P$ and $P'$ are compatible with each other. Indeed, if $P'$ is a refinement of $P$, then then the equations from $P$ are linear combinations of the equations from $P'$. Now for any given $q^4$ there is a coarsest partition 
(particles with equal time coordinate belong to the same family) and a finest partition (only particles that do not keep the safety distance belong to the same family), 
so the equations from either $P$ or $P'$ are just linear combinations of the equations obtained from the finest partition. Thus, if the equations \eqref{multi34} hold for the finest partition then they hold for every partition $P$ such that $q^4\in\sS_\delta^P$.

Eq.~\eqref{multi34} is the system of equations that we require to hold for every $N,P,j,q^4\in\sS_\delta^P$. It is the system defining the multi-time evolution of $\Phi$. The \emph{initial value problem} amounts to solving \eqref{multi34} for a given initial datum $\Psi_0\in\sH$,
\be\label{IVP}
\Phi(0,\bx_1, \ldots, 0,\by_N) = \Psi_0(\bx_1,\ldots,\by_N)\,.
\ee
The single-time wave function $\Psi$ is the restriction of $\Phi$ to (the union over $N$ of) $\sS_\delta^P$ with $P=\{\mathscr{L}^{(N)}\}$ (the coarsest of all partitions). It is immediate from \eqref{multi34} that $\Psi$ obeys the Schr\"odinger equation with Hamiltonian \eqref{eq:Hamiltonian} (with strong derivatives).

\section{Results}
\label{sec:results}

The main statement of this paper, Theorem~\ref{thm:1} in Section~\ref{sec:thm} below, asserts that the system of multi-time equations \eqref{multi34} has a unique solution for every sufficiently regular initial datum, and that the solution has the expected properties.

\subsection{Supports}

We also need to talk about the \emph{support} of a wave function; in particular, it will be relevant to exploit the advantages of a \emph{compact support}. 
However, we cannot expect $\Psi$ to have compact support in $\cQ^{s3}$ as that would imply a concentration of $\Psi$ on finitely many sectors. We will thus additionally define the 3-support of $\Psi$, i.e., the region $G \subseteq \RRR^3$ where particles can be encountered at all. \x{We use again the notation $q^3:=(\bx^{3M},\by^{3N})$.}

\begin{definition}
For $\Psi\in\sH$ or $\Psi:\cQ^{s3}\to\CCC$ let $\supp\, \Psi$ denote the essential support of $\Psi$ in $\cQ^3$, i.e., the smallest closed set $G\subseteq \cQ^3$ such that $\Psi=0$ almost everywhere outside $G$. We set
\be
\begin{aligned}
	\supp_{3x}\Psi &:= \overline{\left\{ \bx\in \RRR^3 \;| \;\exists q^3\in \supp\, \Psi\;\exists k: \bx_k = \bx  \right\} }\\[2mm]
	\supp_{3y}\Psi &:= \overline{\left\{ \by\in\RRR^3 \; | \; \exists q^3\in \supp\, \Psi\; \exists \ell:\by_\ell = \by  \right\} } \\[2mm]
	\supp_3\Psi &:= \supp_{3x}\Psi \cup \supp_{3y}\Psi,
\end{aligned}
\label{eq:support}
\ee
where the overbar means the closure in $\RRR^3$. \x{Equivalently, $ \supp_{3x} \Psi $ is the closed union of the projections of $\supp \; \Psi$ to each $\bx_k$-variable in $\RRR^3$ (and similarly for $\supp_{3y} \Psi$).} 
\end{definition}

The following definition will be convenient for expressing propagation locality.

\begin{definition}
We define the \emph{grown set} of $G\subseteq \RRR^3$ as
\be
	\Gr(G,t) := \Bigl\{ \bx' \in \RRR^3 ~\Big|~  \exists \bx \in G ~:~ \Vert \bx - \bx' \Vert \leq t \Bigr\}= \bigcup_{\bx\in G}\overline{B_t}(\bx)
\ee
for any $t\geq 0$, where $\overline{B_r}(\bx)$ means the closed ball of radius $r$ around $\bx$.
\end{definition}

\subsection{Admissible Initial Data} 
\label{sec:Hcinfty}

For every $N,n\in\NNN_0$, let $M(N,n)$ be the set of multi-indices 
\be
\alpha=(\alpha_{x_1^1},\ldots,\alpha_{x_M^3},\alpha_{y_1^1},\ldots,\alpha_{y_N^3}) \in \NNN_0^{\mathscr{L}^{(N)}\times\{1,2,3\}}
\ee
of degree $|\alpha|:=\alpha_{x_1^1}+\ldots+\alpha_{y_N^3}=n$, and let $\partial^\alpha=\partial_{x_1^1}^{\alpha_{x_1^1}} \cdots \partial_{y_N^3}^{\alpha_{y_N^3}}$ be the corresponding derivative.

\begin{definition}\label{def:Hcinfty}
Let $\sH_c^\infty$ be the set of $\Psi\in\sH\subset L^2(\cQ^{s3})$ such that
\begin{enumerate}
\item $\Psi$ possesses a smooth representative (again denoted by $\Psi$), $\Psi \in C^\infty(\cQ^{s3})$
\item $\supp_3 \, \Psi\subseteq \RRR^3$ is compact
\item For every $m,n\in\NNN_0$,
\be\label{Hcinftynorm}
\sum_{N=0}^\infty N^m\hspace{-4mm} \sum_{\alpha\in M(N,n)} \Bigl\|\partial^\alpha \Psi^{(N)}\Bigr\|_N^2<\infty\,,
\ee
where $\|\cdot\|_N$ means the norm of $L^2(\cQ^{s3,(N)})$.
\end{enumerate}
\end{definition}

\bigskip

Note that, since the $r$-th Sobolev norm of $\Psi^{(N)}$ is given by
\be
\|\Psi^{(N)}\|^2_{\HHH^r(\cQ^{s3,(N)})}= \sum_{n=0}^r \sum_{\alpha\in M(N,n)} \Bigl\|\partial^\alpha \Psi^{(N)}\Bigr\|_N^2\,,
\ee
condition 3 is equivalent to the condition that for every $m,r\in\NNN_0$, the sum over $N$ of $N^m$ times the square of the $r$-th Sobolev norm of $\Psi^{(N)}$ is finite. 

Note also that $\sH_c^\infty$ is a dense subspace of $\sH$; for example, it contains the dense subspace of all smooth functions $\cQ^{s3}\to\CCC$ with compact support in $\cQ^{s3}$ (in particular, which vanish outside finitely many sectors) that satisfy the fermionic and bosonic permutation symmetry. An alternative characterization of $\sH_c^\infty$ is given in Lemma~\ref{lem:Hcinfty} in Section~\ref{sec:lemHcinfty}.

\subsection{Theorem About Multi-Time Evolution}
\label{sec:thm}

The following summability property of the solutions $\Phi:\sS^s_\delta\to \CCC$ will be relevant: {\it For every $J\in \NNN$ and every choice of $t_1,\ldots, t_J\in\RRR$,}
\be\label{summability}
\sum_{N=0}^\infty N^m \hspace{-5mm} \int\limits_{\sS^{(N)}(t_1...t_J)} \hspace{-5mm} dq^4 \; \bigl|\Phi(q^4)\bigr|^2 < \infty \quad \text{\it for }m\in\{0,1\}.
\ee
Here,
\be\label{St1tJ}
\sS^{(N)}(t_1...t_J) := \sS_\delta^{(N)}\cap \bigl(\{t_1...t_J\}\times \RRR^3\bigr)^{M+N}
\ee
is the set of all $\delta$-spacelike configurations with $N$ bosons in which only the times $t_1, \ldots,t_J$ occur; it has dimension $3M+3N$. We are now ready to formulate our main result.

\begin{theorem}
For any initial datum $ \Psi_0 \in \sH_c^{\infty} $, the multi-time initial value problem \eqref{multi34}, \eqref{IVP} has a solution $ \Phi \in C^{\infty}(\sS^s_\delta) $ satisfying the summability property \eqref{summability}. The solution is unique among the functions in $C^{\infty}(\sS^s_\delta)$ satisfying \eqref{summability} and has the following further properties:
\begin{enumerate}

\item $ \Phi $ is anti-symmetric under fermion and symmetric under boson permutations.

\item The single-time wave function $ \Psi_t $ recovered from $ \Phi $ as in \eqref{psiphit} evolves unitarily according to $ \Psi_t = e^{-i\bH t} \Psi_0 $.

\item Propagation locality up to $\delta$, i.e., the following bounds on the growth of 3-supports (depicted in Figures \ref{fig:supp} and \ref{fig:supp2}): $ \Phi(q^4) = 0 $ whenever
\be\label{thm1Gr}
\begin{aligned}
\boldsymbol{x}_k &\notin \Gr(\supp_{3x} \Psi_0,\vert x_k^0 \vert)\\
\text{or} \quad
\boldsymbol{y}_\ell &\notin \Gr(\supp_{3y} \Psi_0,\vert y_\ell^0 \vert) \cup \Gr(\supp_{3x} \Psi_0,\vert y_\ell^0 \vert + \delta)
\end{aligned}
\ee
for any $ x_k $ or $ y_\ell $ in $ q^4 $.

\end{enumerate}
\label{thm:1}
\end{theorem}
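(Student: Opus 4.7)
The plan is to construct $\Phi$ from the single-time evolution $\Psi_t = e^{-i\bH t}\Psi_0$ and then verify the stated properties one by one. The construction, inspired by the usual reduction of multi-time to single-time when families of particles are well separated, runs as follows. Given $q^4\in\sS_\delta^P$, relabel the families so that their common times satisfy $t_1\leq t_2\leq\dots\leq t_J$, and set $\Psi_{t_1}:=e^{-i\bH t_1}\Psi_0$; successively, for $j=1,\dots,J$, I first ``extract'' from $\Psi_{t_j}$ the family $P_j$ at its spatial positions (pointwise evaluation for the $x$-particles in $P_j$, and the pointwise $y$-annihilation obtained by replacing $\cutoff$ in \eqref{eq:a} by a delta), and then propagate the resulting reduced-sector state from $t_j$ to $t_{j+1}$ by the appropriate restriction of $e^{-i\bH(t_{j+1}-t_j)}$. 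After the last step the remaining quantity is a scalar, which I set equal to $\Phi(q^4)$. Pointwise evaluations and smooth dependence on $q^4$ rest on Lemma~\ref{lem:diffable}, the control of 3-supports rests on Lemma~\ref{lem:supp}, and the overall well-definedness needs $e^{-i\bH t}$ to preserve $\sH_c^\infty$, which will be an energy/Sobolev propagation argument belonging to the single-time toolbox of Section~\ref{sec:singletime}.

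The crucial geometric fact underlying consistency of this construction is built into the definition \eqref{eq:SdeltaP} of $\sS_\delta^P$: whenever $m\in P_j$ and $n\in P_{j'}$ with $j\neq j'$, the spatial positions $\bz_m$ and $\bz_n$ are farther apart than the safety distance \eqref{safetyd}, so the cut-off kernel $\cutoff$ forces every interaction term in \eqref{eq:Hint} that would mix $P_j$ with $P_{j'}$ to vanish identically on $\sS_\delta^P$. Consequently the extraction of $P_j$ commutes, up to fermionic signs, with the pieces of $\bH$ that act outside the support of the $P_j$ cut-off kernels. This delivers two things at once: (a) independence of the construction on the permutation of simultaneous families, and hence on the choice of a compatible partition $P$, so that $\Phi$ is well-defined on all of $\sS_\delta^s$; (b) the commutator produced by $\partial_{t_j}$ acting on the formula collapses exactly to the partial Hamiltonian $H_j^P$ on the right-hand side of \eqref{multi34}. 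Verifying (a) and (b) rigorously, uniformly in the sector number $N$ so that the infinite sums implicit in $\bH\Psi_t$ and in \eqref{summability} converge, is the main technical obstacle I anticipate; the summability itself then follows from running the construction on an equal-time slice (where it reduces to the unitary $e^{-i\bH t}$) combined with the $\sH_c^\infty$ estimates propagated through each iteration.

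For uniqueness, given two solutions $\Phi,\tilde\Phi\in C^\infty(\sS_\delta^s)$ of \eqref{multi34}--\eqref{IVP} satisfying \eqref{summability}, their difference $\Delta\Phi$ has vanishing initial datum and still satisfies \eqref{multi34}. On the simultaneous slice $t_1=\dots=t_J=t$ (where the coarsest compatible partition $P=\{\mathscr{L}^{(N)}\}$ applies), the system \eqref{multi34} collapses to the single-time Schr\"odinger equation with Hamiltonian \eqref{eq:Hamiltonian}, and by \eqref{summability} with $J=1$ the slice lies in $\sH$; self-adjointness of $\bH$ and the unitarity of $e^{-i\bH t}$ then force $\Delta\Phi\equiv 0$ on every simultaneous slice. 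To propagate this zero to a general $q^4\in\sS_\delta^P$, I connect the simultaneous slice to $q^4$ by a continuous path of admissible configurations inside $\sS_\delta^P$ (varying the family times while keeping spatial coordinates fixed) and view \eqref{multi34} along this path as a linear first-order ODE in the family times whose initial value vanishes; ODE uniqueness then gives $\Delta\Phi(q^4)=0$.

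The remaining assertions follow rather quickly. Fermi/Bose symmetry of $\Phi$ descends from the corresponding symmetries of $\Psi_0$ and of the (anti)commuting extraction operators. Item~2 (single-time recovery) is immediate because setting all times equal makes the construction collapse to a single unitary step $\Psi_t=e^{-i\bH t}\Psi_0$. Finally, the propagation-locality bounds \eqref{thm1Gr} follow inductively from Lemma~\ref{lem:supp} applied to each evolution piece in the construction: each $e^{-i\bH(t_{j+1}-t_j)}$ can grow the $y$-support by at most the elapsed time (speed of light) plus the extra $\delta$ per creation event produced by the interaction part of $\bH$, while the $x$-support grows only at the speed of light.
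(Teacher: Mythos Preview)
Your overall architecture---sequential evolution of families ordered by time, with the safety-distance geometry suppressing cross-family interaction---matches the paper's. But two steps in your uniqueness argument do not go through as written.

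First, on the simultaneous slice $t_1=\dots=t_J=t$: you say that $\Delta\Phi_t\in\sH$ together with unitarity of $e^{-i\bH t}$ forces $\Delta\Phi_t\equiv 0$. This is not automatic. A smooth function $\Psi:\RRR\times\cQ^{s3}\to\CCC$ solving the PDE \eqref{1time} pointwise with $\|\Psi_t\|<\infty$ need not coincide with $e^{-i\bH t}\Psi_0$; you would need $t\mapsto\Psi_t$ to be differentiable \emph{in $\sH$} with derivative $-i\bH\Psi_t$, and that requires $\Psi_t\in\mathrm{dom}(\bH)$, which is not given. The paper closes this gap with a separate argument (Lemma~\ref{lem:1timeunique}, proved via Lemma~\ref{lem:smoothsupp} by a divergence-theorem/current-balance computation on truncated cones), and crucially uses the second half of \eqref{summability}, namely $\|N^{1/2}\Psi_t\|<\infty$, to make a telescoping sum over sectors converge. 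Without that bound, uniqueness actually fails: the paper exhibits a nonzero smooth solution of the toy equation \eqref{1point} with zero initial data.

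Second, your propagation of $\Delta\Phi=0$ off the simultaneous slice by ``viewing \eqref{multi34} along a path as a linear first-order ODE'' does not work: the right-hand side of \eqref{multi34} couples $\Phi^{(N)}$ to $\Phi^{(N\pm 1)}$ at \emph{other} spatial configurations (the integral over $\tilde\by$ and the removal of $y_\ell$), so restricting to a single path $q^4(s)$ does not close to an ODE in a finite-dimensional variable. The paper instead fixes $t_1,\dots,t_{J-1},q_1,\dots,q_{J-1}$ and regards $\Phi$ as a function of $t_J$ and the full family-$J$ configuration (all sectors $N_J$), then applies the single-time uniqueness Lemma~\ref{lem:1timeunique} to this reduced system; \eqref{summability} supplies the needed $\|\cdot\|$ and $\|N^{1/2}\cdot\|$ finiteness at each $t_J$.

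On the construction side, your ``extraction by pointwise $y$-annihilation'' and ``appropriate restriction of $e^{-i\bH\Delta t}$'' are the right intuition but will need care: the reduced evolution is not a restriction of the original $e^{-i\bH t}$ but the unitary generated by a Hamiltonian with fewer $x$-particles, and you must check that the intermediate objects stay in the analogue of $\sH_c^\infty$ so the next step is defined. The paper sidesteps the distributional annihilation by treating the already-placed families' coordinates as smooth parameters $\lambda$ and invoking Lemma~\ref{lem:smoothsingletime}; to apply that lemma (which needs compact support in $\lambda$), it multiplies by cutoff functions $f(t_j)$ in the family times. Finally, that your $\Phi$ solves \eqref{multi34} for \emph{all} $j$ (not only the last one) is obtained in the paper from the commutator Lemma~\ref{lem:commutator} together with the single-time uniqueness lemma again---this is where your item (b) is made rigorous.
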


\begin{figure}[hbt]
    \centering
    \def\coordinates at (#1,#2){
	\draw[->,thick] (#1-0.2,#2) -- ++(5.2,0) node[anchor = south west] {$\boldsymbol{x}^1$};
	\draw[->,thick] (#1,#2-0.2) -- ++(0,4.2) node[anchor = south east] {$\boldsymbol{x}^2$};
	\node at (#1-0.3,#2-0.3) {$0$};
}
\begin{tikzpicture}
	\coordinates at (0,0);
	\coordinates at (7,0);
	
	\filldraw[draw = blue, fill = blue!40!white] (1,1.5) .. controls (1,1.2) and (1.5,0.9) .. (2,0.9) .. controls (2.5,0.9) and (2.8,1) .. (2.8,1.3) .. controls (2.8,1.7) and (2,2.2) .. (1.6,2.2) node[blue,anchor = south] {$ supp_{3x} $} .. controls (1.2,2.2) and (1,1.8) .. cycle;
	\filldraw[green!50!black,fill opacity = 0.2] (4,3) circle (0.2);
	\node[green!50!black] at (4,2.5) {$supp_{3y}$};
	\filldraw[fill = gray!20!white] (5,1.5) -- ++(1,0) -- ++(0,-0.2) -- ++(0.5,0.5) -- ++(-0.5,0.5) -- ++(0,-0.2) -- ++(-1,0) -- cycle;
	\node at (5.7,2.8) {\Large{$U(t)$}};
	
	\filldraw[green!50!black,fill opacity = 0.2] (7.1,1.5) .. controls (7.1,0.8) and (7.9,0) .. (9,0) .. controls (10.1,0) and (10.7,0.7) .. (10.7,1.3) .. controls (10.7,2) and (9.5,3.1) .. (8.6,3.1) .. controls (7.7,3.1) and (7.1,2.2) .. cycle;
	\filldraw[draw = blue, fill = blue!40!white] (7.5,1.5) .. controls (7.5,1) and (8.2,0.4) .. (9,0.4) .. controls (9.8,0.4) and (10.3,0.8) .. (10.3,1.3) .. controls (10.3,1.9) and (9.3,2.7) .. (8.6,2.7) .. controls (7.9,2.7) and (7.5,2) .. cycle;
	\draw[blue, dashed] (8,1.5) .. controls (8,1.2) and (8.5,0.9) .. (9,0.9) .. controls (9.5,0.9) and (9.8,1) .. (9.8,1.3) .. controls (9.8,1.7) and (9,2.2) .. (8.6,2.2) .. controls (8.2,2.2) and (8,1.8) .. cycle;
	\filldraw[green!50!black,fill opacity = 0.2] (11,3) circle (0.7);
	\draw[green!50!black,dashed] (11,3) circle (0.2);
	\draw (8.5,2.2) -- ++(0.4,0);
	\draw (8.5,2.7) -- ++(0.4,0);
	\draw (8.5,3.1) -- ++(0.4,0);
	\draw[<->, thick] (8.8,2.2) -- ++(0,0.5);
	\draw[<->, thick,green!50!black] (8.8,2.7) -- ++(0,0.4); 
	\node at (9,2.45) {$t$};
	\node[green!50!black] at (9,2.9) {$\delta$};
	\draw (10.9,2.3) -- ++(0.4,0);
	\draw (10.9,2.8) -- ++(0.4,0);
	\draw[<->, thick] (11.2,2.3) -- ++(0,0.5);
	\node at (11.4,2.55) {$t$};
\end{tikzpicture}
    \caption{Growth of $ \supp_{3x} $ and $ \supp_{3y} $ in physical 3-space (only two dimensions drawn).}
    \label{fig:supp}
\end{figure}

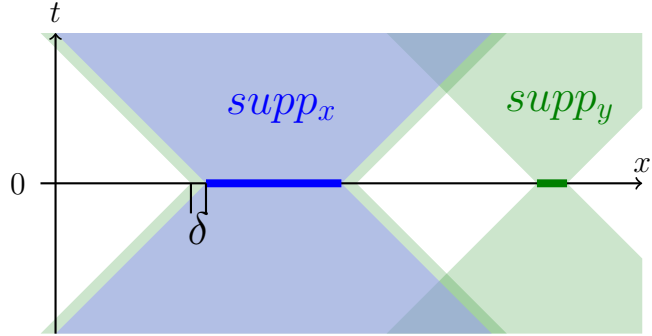
\begin{figure}[hbt]
    \centering
    \begin{tikzpicture}
	\fill[green!50!black, opacity = 0.2] (1.8,0) -- ++(-2,-2) -- ++(6.2,0) -- ++(-2,2) -- ++(2,2) -- ++(-6.2,0) -- cycle;
	\fill[green!50!black, opacity = 0.2] (6.4,0) -- ++(-2,-2) -- ++(3.4,0) -- ++(0,1) -- ++(-1,1) -- ++(1,1) -- ++(0,1) -- ++ (-3.4,0)-- cycle;
	\fill[blue!40!white, opacity = 0.5] (2,0) -- ++(-2,-2) -- ++(5.8,0) -- ++(-2,2) -- ++(2,2) -- ++(-5.8,0) -- cycle;
	\draw[->,thick] (-0.2,0) -- ++(8,0) node[anchor = south] {$ x $};
	\draw[->,thick] (0,-2) -- ++(0,4) node[anchor = south] {$ t $};
	\node at (-0.5,0) {$0$};
	\draw[blue,line width = 3] (2,0) -- ++(1.8,0);
	\draw[green!50!black,line width = 3] (6.4,0) -- ++(0.4,0);
	\node[blue] at (3,1) {\Large{$supp_x$}};
	\node[green!50!black] at (6.7,1) {\Large{$supp_y$}};
	\draw[thick] (1.8,0) -- ++(0,-0.4);
	\draw[thick] (2,0) -- ++(0,-0.4);
	\node at (1.9,-0.6) {\Large{$\delta$}};
\end{tikzpicture}
    \caption{Growth of $ \supp_{3x} $ and $ \supp_{3y} $ in a Minkowski diagram.}
    \label{fig:supp2}
\end{figure}

\noindent{\bf Remarks.}
\begin{enumerate}
\setcounter{enumi}{\theremarks}
\item {\it 4-supports.} The condition \eqref{thm1Gr} for configurations outside the support of $\Phi$ can equivalently be re-expressed as
\be
\begin{aligned}
x_k &\notin \mathrm{Infl}\bigl( \{0\} \times \supp_{3x} \Psi_0\bigr)\\
\text{or} \quad y_\ell &\notin \mathrm{Infl} \Bigl(\{0\} \times \Bigl[\supp_{3y} \Psi_0 \cup \Gr\bigl( \supp_{3x} \Psi_0,\delta \bigr) \Bigr]\Bigr)\,,
\end{aligned}
\ee
where $\mathrm{Infl}(A)$ means the domain of influence of the space-time set $A\subseteq \sM$,
\be
\mathrm{Infl}(A) := \mathrm{future}(A) \cup \mathrm{past}(A)\,,
\ee
where future$(A)$ is the union over $x\in A$ of the closed future light cone of $x$, and correspondingly past$(A)$.
We can also express the situation in terms of the 4-support of $\Phi$:
\be
\begin{aligned}
	\supp_{4x}\Phi &:= \overline{\left\{ x\in \sM \;| \;\exists q^4\in \supp\, \Phi\;\exists k: x_k = x  \right\} }\\[2mm]
	\supp_{4y}\Phi &:= \overline{\left\{ y\in\sM \; | \; \exists q^4\in \supp\, \Phi\; \exists \ell:y_\ell = y  \right\} } \\[2mm]
	\supp_4\Phi &:= \supp_{4x}\Phi \cup \supp_{4y}\Phi,
\end{aligned}
\label{supp4def}
\ee
Then
\be
\begin{aligned}
\supp_{4x}\Phi &\subseteq \mathrm{Infl}\bigl( \{0\} \times \supp_{3x} \Psi_0\bigr)\\
\supp_{4y}\Phi &\subseteq \mathrm{Infl}\Bigl(\{0\} \times \Bigl[\supp_{3y} \Psi_0 \cup \Gr\bigl( \supp_{3x} \Psi_0,\delta \bigr) \Bigr]\Bigr)\,.
\end{aligned}
\label{4supp}
\ee
These sets are shown in Figure~\ref{fig:supp2}.

\item \x{{\it Asymmetry between $x$ and $y$.} Patently, \eqref{thm1Gr} is not symmetric in $x$ and $y$. The reason is that the Hamiltonian $ \bH $ makes all $x$-particles create and annihilate $y$-particles, so even if at $ t=0 $ there are no $y$-particles near an $x$-particle, this situation will change at $t>0$ and $\supp_{3y} \Psi$ will immediately be extended by $ \Gr(\supp_{3x} \Psi_0, \delta) $. The same holds for $t<0$. However, if there is a separated $y$-particle, $\bH$ will not cause a creation of $x$-particles near it. So $ \supp_{3x} \Psi$ does not grow by an additional amount depending on $\supp_{3y} \Psi$.}

\item {\it Uniqueness.} The summability property \eqref{summability} is relevant to the \emph{uniqueness} of the solution; in $C^\infty(\sS_\delta^s)$, the solution would not be unique, essentially because a further \x{contribution to the wave function could come ``from infinity on the $N$ axis.''} This effect has nothing to do with the multiple time variables, it arises from considering smooth functions that solve the equations in the classical sense and is, in the usual one-time formulation, taken care of by choosing as the Hamiltonian a \emph{self-adjoint extension} of the differential expression for $\bH$. This point is elucidated further in Section~\ref{sec:smooth1time}.

\item {\it $L^2$ solutions.} Usually, when considering solutions of the 1-time Schr\"odinger equation, one does not require classical differentiability. One allows weak (distributional) derivatives and thus wave functions from Sobolev spaces, and even more, one allows arbitrary $L^2$ functions as initial data, as the unitary time evolution operator $\exp(-i\bH t)$ is defined on the entire Hilbert space. In the same way, we can define a notion of $L^2$ solution of our multi-time equation \eqref{multi34}, and then allow for arbitrary $\Psi_0\in\sH$. In fact, our construction of the solution $\Phi$ works in this way, and we have to invest further work for proving that for nice initial data $\Psi_0\in\sH^\infty_c$, the function $\Phi$ we construct is a classical solution.

\item\label{rem:Shat} {\it Domain of consistency.} As it happens, the simple model we are considering is consistent on an even larger set than $\sS_\delta$. This fact is presumably a curious artifact of our simple model that will not extend to more realistic models. This larger set
$\widehat{\sS_\delta} \supset \sS_\delta$ is defined by the same condition as for $\sS_\delta$ between $x$-particles but no restriction on the $y$-particles, not even being spacelike separated, neither from $x$-particles nor among themselves. That is,
\begin{align}
	\widehat{\sS_{\delta}}^{(N)} := \Bigl\{(x^{4M},y^{4N})\in \sM^{M+N}~\Big|~
	&\Vert \bx_k - \bx_{k'} \Vert > \vert x^0_k - x^0_{k'} \vert + 2\delta ~~\text{or}~~ x^0_k = x^0_{k'}\nonumber\\
	&\forall k,k' \in \{1,...,M\}\Bigr\}\label{eq:hatSdeltaN}
\end{align}
and
\be\label{Shatdeltadef}
\widehat{\sS_\delta} := \bigcup_{N=0}^\infty \widehat{\sS_\delta}^{(N)}\,.
\ee
\x{The physical reason is that the bosons do not interact with each other, nor with the fermions, except for being emitted and absorbed by the fermions. It is well known \cite{LPT:2020} that for non-interacting particles, multi-time equations are consistent on \emph{all} space-time configurations ($\in \cQ^4$), not only the spacelike ones.}
Correspondingly for $\widehat{\sS_\delta^s}^{(N)}$ and $\widehat{\sS_\delta^s}$; in the same way as for $\sS_\delta^s$, one defines $C^\infty(\widehat{\sS_\delta^s})$. 
In Section~\ref{sec:Shat}, we show that
\emph{for any initial datum $\Psi_0\in\sH_c^\infty$, the system of multi-time equations \eqref{multi56} below with initial values $\Psi_0$ as in \eqref{IVP} has a unique solution $\Phi\in C^\infty(\widehat{\sS_\delta^s})$ satisfying the summability condition} \eqref{summability}.

The multi-time equations are essentially given by \eqref{multi12} with the Green function $G_0$ replaced by a family (indexed by $r,r'$) of cut-off Green functions $\G_{rr'}$, each in $C^\infty(\RRR^4,\CCC^4)$, viz., the solution of
\be\label{Gdeltadef1}
i\frac{\partial \G}{\partial t} = H^{\free}_y \, \G
\ee
with initial condition
\be\label{Gdeltadef2}
\G_{rr's}(0,\by) = g_{rr's}\, \cutoff(\by)\,. 
\ee
More precisely, since, as discussed in Section~\ref{sec:multieq} for $\sS_\delta$, the derivative $\partial_{x^0_k}$ is not admissible within $\widehat{\sS_\delta}$ when two $x$-particles are closer than $2\delta$, we need to group together $x$-particles that do not keep their safety distance into a family with a common time. That is, for a given $q^4\in\widehat{\sS_\delta}$ let $\{P_1,\ldots,P_J\}$ be the finest partition of $\{x_1,\ldots,x_M\}$ such that particles that do not keep their safety distance belong to the same partition set. By the definition \eqref{eq:hatSdeltaN} of $\widehat{\sS_\delta}$, all particles in $P_j$ have the same time coordinate, now called $t_j$. The multi-time equations \x{\eqref{multi34} can then be written as}
\begin{subequations}\label{multi56}
\begin{align}\label{multi5}
i \frac{\partial\Phi}{\partial t_j}(x^{4M}, y^{4N}) &= \sum_{x_k\in P_j}H^{\free}_{x_k} \Phi(x^{4M}, y^{4N}) \nonumber\\
& \hspace{-24mm} + \sqrt{N+1} \sum_{x_k\in P_j} \sum_{r_k',s_{N+1}} g^*_{r'_k r_k s_{N+1}} \int_{B_\delta(\bx_k)} \hspace{-7mm} d^3\tilde\by\:\: \cutoff(\tilde\by-\bx_k) \:\: \Phi^{(N+1)}_{r_k',s_{N+1}}\Bigl(x^{4M},\bigl(y^{4N}, (x_k^0, \tilde\by)\bigr)\Bigr) \nonumber\\
& + \frac{1}{\sqrt{N}} \sum_{x_k\in P_j}\sum_{\ell=1}^N\sum_{r'_k} \G_{r_k r'_k s_\ell}(y_\ell - x_k) \, \Phi_{r'_k\widehat{s_\ell}}^{(N-1)}\bigl(x^{4M}, y^{4N} \backslash y_\ell\bigr)  \\[3mm] 
\label{multi6}
i \frac{\partial\Phi}{\partial y_\ell^0}(x^{4M}, y^{4N}) &= H^{\free}_{y_\ell}\Phi(x^{4M}, y^{4N}).
\end{align}
\end{subequations}
We note that every solution $\Phi\in C^\infty(\widehat{\sS^s_\delta})$ of \eqref{multi56}, when restricted to $\sS^s_\delta$, also solves \eqref{multi34}. For verifying this, the only real issue is that \eqref{multi5} involves a sum over \emph{all} $y$-particles ($\ell$ runs from 1 to $N$), whereas \eqref{multi34} involves a sum over only the $y$-particles in the family $P_j$ of $x_k$. However, this difference has no consequences in $\sS_\delta$, as there every $y$-particle outside the family of $x_k$ is $\delta$-spacelike from $x_k$, while $G$ vanishes on all points $\delta$-spacelike from $x_k$.
\end{enumerate}
\setcounter{remarks}{\theenumi}

\subsection{Lemmas About Single-Time Evolution} \label{sec:lemmas}

A few statements about the single-time evolution that we need as tools for proving Theorem~\ref{thm:1} may be worth mentioning in their own right.

\begin{lemma}[Self-adjointness]\label{lem:selfadjoint}
The single-time Hamiltonian $ \bH $ defined in \eqref{eq:Hamiltonian} is essentially self-adjoint on $ \sH_c^{\infty} \subset \sH $.
\end{lemma}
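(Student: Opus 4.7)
The approach is Nelson's commutator theorem (see \cite{reedsimon2}, Theorem~X.37). First I check that $\bH$ maps $\sH_c^\infty$ into $\sH$ and is symmetric there. A Cauchy--Schwarz bound applied to \eqref{eq:a} gives
\[
\|\ba_s(\bx_k^{op})\Psi^{(N+1)}\|_N \leq \|\cutoff\|_{L^2}\sqrt{N+1}\,\|\Psi^{(N+1)}\|_{N+1},
\]
and the analogous bound for $\ba_s^\dagger$. Summing over sectors yields $\|\bH^{\inter}\Psi\|\le C_1\|(\bN_y+1)^{1/2}\Psi\|$, where $\bN_y$ denotes the boson number operator; this is finite on $\sH_c^\infty$ by item~3 of Definition~\ref{def:Hcinfty}. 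The free part $\bH_0:=\sum_{k=1}^M\bH^{\free}_{x_k}+d\Gamma_y(\bH^{\free}_y)$ involves only first spatial derivatives and is controlled by the same Sobolev-type bounds. Symmetry follows from the standard adjoint relation between the smeared annihilators/creators and the Hermiticity of the Dirac operators.

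Next I introduce the comparison operator
\[
\bN_c := \sum_{k=1}^M \omega_{x_k} + d\Gamma_y(\omega_y) + \bN_y + 1, \qquad \omega_{x/y}:=\sqrt{-\bDelta+m_{x/y}^2},
\]
which is positive self-adjoint with $\bN_c\geq 1$. Crucially $\bN_c$ commutes with $\bH_0$, because $(\bH^{\free})^2=\omega^2$ on each one-particle sector, so $\omega$ is a function of $\bH^{\free}$ via the functional calculus. A standard truncation argument (momentum cutoff, particle-number cutoff, spatial cutoff, mollification) shows that $\sH_c^\infty$ is a core for $\bN_c$.

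Then I verify Nelson's two conditions: (a) $\|\bH\Psi\|\leq c_1\|\bN_c\Psi\|$, and (b) $|\langle\bH\Phi,\bN_c\Psi\rangle-\langle\bN_c\Phi,\bH\Psi\rangle|\leq c_2\|\bN_c^{1/2}\Phi\|\|\bN_c^{1/2}\Psi\|$. For (a), one has $\|\bH_0\Psi\|\leq\|\bN_c\Psi\|$: writing $\bH^{\free}=(\mathrm{sign}\,\bH^{\free})\,\omega$ and noting that for $j\neq j'$ the sign operators $U_j,U_{j'}$ commute and satisfy $U_j^2=U_{j'}^2=I$, a short computation shows $(d\Gamma(\bH^{\free}))^2\leq(d\Gamma(\omega))^2$ as positive operators. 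Combined with $\|\bH^{\inter}\Psi\|\leq C_1\|(\bN_y+1)^{1/2}\Psi\|\leq C_1\|\bN_c\Psi\|$, this yields (a). For (b), $[\bH_0,\bN_c]=0$, so only $[\bH^{\inter},\bN_c]$ contributes, and it decomposes into $[\bH^{\inter},\bN_y]=\bH^{\inter}_{\mathrm{creat}}-\bH^{\inter}_{\mathrm{ann}}$ (same form, hence $(\bN_y+1)^{1/2}$-bounded); $[\bH^{\inter},d\Gamma_y(\omega_y)]$, which replaces the smearing $\cutoff$ by $\omega_y\cutoff\in L^2(\RRR^3)$ (finite since $\cutoff\in C_c^\infty$); and $[\bH^{\inter},\sum_k\omega_{x_k}]$, which reduces to commutators $[\omega_{x_k},\cutoff(\by_\ell-\bx_k)]$.

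The main technical obstacle is this last commutator, because $\omega_{x_k}$ is a nonlocal pseudodifferential operator of order $1$ while $\cutoff(\by_\ell-\bx_k)$ is multiplication by a smooth, compactly supported function in $\bx_k$ (with $\by_\ell$ a parameter). A direct Fourier computation reduces the estimate to $|\omega(k)-\omega(k-k')|\leq|k'|$ integrated against $|\hat\cutoff(k')|$, yielding an $L^2$-bounded operator in $\bx_k$ with norm independent of $\by_\ell$. With this estimate in place, condition (b) follows, and Nelson's commutator theorem delivers essential self-adjointness of $\bH$ on $\sH_c^\infty$.
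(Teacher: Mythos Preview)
Your overall strategy coincides with the paper's: both invoke Nelson's commutator theorem (Reed--Simon~X.37). The difference lies in the comparison operator. The paper chooses
\[
\comp=\sum_{k=1}^M(\bone-\bDelta_{x_k})+d\Gamma_y(\bone-\bDelta_y),
\]
while you take $\bN_c=\sum_k\omega_{x_k}+d\Gamma_y(\omega_y)+\bN_y+1$. Both commute with $\bH_0$, so only $[\bH^{\inter},\,\cdot\,]$ contributes to condition~(b). With the paper's choice this commutator stays \emph{local}: by the product rule, $[\ba_s^{\#}(\bx_k^{op}),-\bDelta_{x_k}]$ and $[\ba_s^{\#}(\bx_k^{op}),d\Gamma_y(-\bDelta_y)]$ are again smeared annihilation/creation operators, now with explicit $L^2$ form factors $\partial_a\cutoff$, $\bDelta\cutoff$; the standard $(\bN+\bone)^{1/2}$-bound \eqref{eq:nelsonbound} then applies directly.

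Your choice introduces the pseudodifferential commutator $C_\by:=[\omega_{x_k},M_{\cutoff(\by-\bx_k)}]$, and here there is a genuine gap. You correctly observe that $C_\by$ is bounded on $L^2_{\bx_k}$ uniformly in $\by$ (via $|\omega(\bk)-\omega(\bk')|\le|\bk-\bk'|$ against $|\cdot|\hat\cutoff\in L^1$). But this uniform bound is \emph{not} sufficient for condition~(b). In the creation piece, the $N$-sector of $[\omega_{x_k},\ba_s^\dagger(\bx_k^{op})]\Psi$ is $N^{-1/2}\sum_\ell C_{\by_\ell}\Psi^{(N-1)}$, where $\Psi^{(N-1)}$ is independent of $\by_\ell$; the only $\by_\ell$-dependence enters through $C_{\by_\ell}$. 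Square-integrability in $\by_\ell$ therefore requires
\[
\int d\bx_k\,d\by\,\bigl|C_\by\psi(\bx_k)\bigr|^2\;\le\;C\,\|\psi\|_{L^2_{\bx_k}}^2,
\]
and a uniform operator bound $\|C_\by\|\le K$ yields only a constant in $\by$, hence a divergent $\by$-integral. The gap can be repaired: a Plancherel computation in $\by$ gives exactly the displayed estimate with $C=(2\pi)^3\bigl\||\cdot|\hat\cutoff\bigr\|_{L^2}^2<\infty$, after which the form bound follows as you intend. But this is a different and strictly stronger estimate than the one you state, and it is precisely the complication that the paper's local choice $\comp$ sidesteps.
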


By Lemma \ref{lem:selfadjoint}, $ \bH $ has a unique self-adjoint extension, which we will also denote by $ \bH $ and which provides the unitary single-time evolution operator
\be
\bU(t) := e^{-i\bH t}: \sH\to \sH 
\ee
for all $t\in\RRR$. 
Lemma~\ref{lem:supp} below yields information about the growth of 3-supports, and Lemma~\ref{lem:diffable} the regularity of $\Psi_t$.

\begin{lemma}[Support growth]\label{lem:supp}
Let $ \Psi_0 \in \sH$ and $ \Psi_t = \bU(t) \Psi_0 $ for all $ t \in \RRR $. Supports will grow under time evolution at most as follows:
\be
\begin{aligned}
	&\supp_{3x}\Psi_{t} \subseteq \Gr(\supp_{3x}\Psi_{0},\vert t \vert) \quad &&\forall t \in \RRR\\
	&\supp_{3y}\Psi_{t} \subseteq \Gr(\supp_{3y}\Psi_{0},\vert t \vert) \cup \Gr(\supp_{3x}\Psi_{0},\vert t \vert + \delta) \quad &&\forall t \in \RRR,
\end{aligned}
\label{eq:growxy}
\ee
in particular
\be
	\supp_3\Psi_{t} \subseteq \Gr(\supp_{3y}\Psi_{0},\vert t \vert) \cup \Gr(\supp_{3x}\Psi_{0},\vert t \vert + \delta) \quad \forall t \in \RRR.
\label{eq:grow}
\ee
\end{lemma}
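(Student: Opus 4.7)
The plan is to split
\[
\bH = \bH_0 + \bH^{\inter}, \qquad \bH_0 := \sum_{k=1}^M \bH^{\free}_{x_k} + d\Gamma_y(\bH^{\free}_y),
\]
apply the Trotter product formula, and then track $\supp_{3x}$ and $\supp_{3y}$ separately under the free and interaction factors. Both $\bH_0$ and $\bH^{\inter}$ leave $\sH_c^\infty$ invariant and are essentially self-adjoint there: $\bH_0$ is a direct sum of mutually commuting free Dirac operators, while $\bH^{\inter}$ consists of creation/annihilation operators smeared against the $L^2$ function $\cutoff$, dominated sector-wise by $\sqrt{N+1}$ and hence by the number operator. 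Together with Lemma~\ref{lem:selfadjoint}, Kato--Trotter then gives
\[
\bU(t)\Psi_0 = \lim_{n\to\infty}\bigl(e^{-i\bH_0 t/n}\, e^{-i\bH^{\inter} t/n}\bigr)^n\Psi_0
\]
strongly in $\sH$; I would establish the support bounds for $\Psi_0\in\sH_c^\infty$ and then conclude for general $\Psi_0\in\sH$ by density, since an $\sH$-limit of functions all vanishing a.e.\ on a \emph{fixed} closed set still vanishes there.

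For the free factor I would prove the single-particle finite-propagation statement
\[
\supp_{3x}(e^{-i\bH_0 s}\Psi)\subseteq \Gr(\supp_{3x}\Psi,\vert s\vert),\quad \supp_{3y}(e^{-i\bH_0 s}\Psi)\subseteq \Gr(\supp_{3y}\Psi,\vert s\vert).
\]
Since $\bH_0$ is a sum of one-particle free Dirac operators acting on disjoint position variables, each particle evolves independently, and the classical energy-estimate proof of finite propagation speed for the symmetric hyperbolic Dirac equation (with speed bounded by $1$, the coefficient being determined by the spectral radius of the matrices $\alpha^a$) transfers sector-wise to $\bH_0$.

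For the interaction factor I would show
\[
\supp_{3x}(e^{-i\bH^{\inter} s}\Psi)\subseteq \supp_{3x}\Psi,\quad \supp_{3y}(e^{-i\bH^{\inter} s}\Psi)\subseteq \supp_{3y}\Psi\,\cup\,\Gr(\supp_{3x}\Psi,\delta).
\]
The first inclusion is immediate because $\bH^{\inter}$ acts by \emph{multiplication} in the $\bx$-variables (the $\bx_k$-dependence enters only through $\cutoff(\,\cdot\,-\bx_k)$), so it commutes with every multiplication by a function of $\bx_1,\ldots,\bx_M$, and so does $e^{-i\bH^{\inter}s}$, preserving the $x$-support. For the second, inspection of \eqref{eq:Hint} shows that both the creation and annihilation kernels in $\bH^{\inter}_{x_k}$ are supported in $B_\delta(\bx_k)$ with respect to the new/old $y$-variable; hence a single application of $\bH^{\inter}$ sends any function whose $y$-support lies in $F_y:=\supp_{3y}\Psi\cup \Gr(\supp_{3x}\Psi,\delta)$ to one still supported in $F_y$. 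This inclusion is preserved term-by-term in the $\sH$-convergent exponential series $e^{-i\bH^{\inter}s}\Psi=\sum_n (-is)^n(\bH^{\inter})^n\Psi/n!$ (convergence on $\sH_c^\infty$ follows because the $(N{+}n)^{n/2}$-type operator growth of $(\bH^{\inter})^n$ is beaten by $n!$, using the fast $N$-decay in Definition~\ref{def:Hcinfty}).

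Combining the two factor-wise bounds by induction over the $n$ alternating steps of the Trotter approximant $\Psi^{(n)}$ gives
\[
\supp_{3x}\Psi^{(n)}\subseteq \Gr(\supp_{3x}\Psi_0,\vert t\vert),\quad \supp_{3y}\Psi^{(n)}\subseteq \Gr(\supp_{3y}\Psi_0,\vert t\vert)\cup \Gr(\supp_{3x}\Psi_0,\vert t\vert+\delta),
\]
for every $n$. The crucial point is that the $\delta$-extensions introduced by the $n$ interaction factors do \emph{not} accumulate: at step $k$ the current $x$-support is a subset of $\Gr(\supp_{3x}\Psi_0, k|t|/n)$, its $\delta$-neighbourhood sits inside $\Gr(\supp_{3x}\Psi_0, k|t|/n+\delta)$, and subsequent free evolution of duration $(n-k)|t|/n$ only enlarges this to $\Gr(\supp_{3x}\Psi_0, |t|+\delta)$, independently of $k$. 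Passing to the $\sH$-limit then yields \eqref{eq:growxy} and hence \eqref{eq:grow}. The main technical obstacle I foresee is not the combinatorial support bookkeeping but the rigorous verification of the Trotter--Kato hypotheses (essential self-adjointness of $\bH_0$, $\bH^{\inter}$, and $\bH$ on a common invariant core) together with the classical finite-propagation-speed estimate for the free Dirac equation in the many-particle setting.
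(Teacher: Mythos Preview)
Your overall strategy---Trotter product formula, factor-wise support tracking, and the crucial observation that the successive $\delta$-enlargements from the interaction steps do not accumulate---is precisely the paper's approach. The paper splits into more Trotter factors (separating $\sum_k\bH_{x_k}^{\free}$, $d\Gamma_y(\bH_y^{\free})$, and each $\bH_{x_k}^{\inter}$) and works directly on all of $\sH$ rather than first on $\sH_c^\infty$, but these differences are cosmetic.

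There is, however, a genuine gap in your treatment of $\supp_{3y}$ under $e^{-i\bH^{\inter}s}$. You assert that the series $\sum_n(-is)^n(\bH^{\inter})^n\Psi/n!$ converges on $\sH_c^\infty$ because ``the $(N{+}n)^{n/2}$-type operator growth is beaten by $n!$.'' But Definition~\ref{def:Hcinfty} only forces $\sum_N N^m\|\Psi^{(N)}\|^2<\infty$ for every $m$, i.e.\ faster-than-polynomial sector decay; this is strictly weaker than the sub-exponential decay needed for $\Psi$ to be an analytic vector of a field-type operator. For instance, take $\|\Psi^{(N)}\|=e^{-(\log N)^2}$: every moment condition holds, yet $M_n:=\sum_N N^n\|\Psi^{(N)}\|^2\sim e^{cn^2}$ (the maximum of $n\log N-2(\log N)^2$ over $N$ is of order $n^2$), so $\sqrt{M_n}/n!\to\infty$ and the exponential series diverges. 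The paper sidesteps this entirely with an operator-theoretic argument on each $\bx^{3M}$-fiber: for any closed $G\supseteq\bigcup_k B_\delta(\bx_k)$ one has the Fock factorisation $\sH_y\cong\sH_G\otimes\sH_{\RRR^3\setminus G}$, and relative to it $\bH^{\inter}_{x_k}$ acts as $\bH_G\otimes\bone$, so $e^{-i\bH^{\inter}_{x_k}s}$ preserves $\sH_G\otimes|\emptyset\rangle$, which is exactly the $\supp_{3y}$ bound and needs no analytic vectors. Your argument is easily repaired by adopting this, or by running the power-series reasoning only on the dense set of \emph{finite}-particle vectors (genuine analytic vectors for $\bH^{\inter}$) and then closing up in $\sH$.
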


\noindent{\bf Remark.}
\begin{enumerate}
\setcounter{enumi}{\theremarks}
\item\label{rem:W} {\it Local evolution operator.} As a consequence, the wave function $\Psi_t$ on configurations with all particles in a region $G\subset \RRR^3$ depends only on the initial data in $\Gr(G,|t|+\delta)$. Thus, the single-time evolution defines an operator
\be\label{Wdef}
W_t: \sH\bigl(\Gr(G,|t|+\delta)\bigr) \to \sH(G)\,, \quad W_t(\Psi_0)=\Psi_t\Big|_G\,,
\ee
where $\sH(G)$ means the Hilbert space associated with the region $G$, i.e., the subspace of $L^2(\cQ^{s3}(G))$ with the appropriate permutation symmetry, where
\be
\cQ^{s3}(G) = (G\times \{1,2,3,4\})^M \times \bigcup_{N=0}^\infty (G\times \{1,2,3,4\})^N
\ee
is the set of spin-configurations concentrated in $G$. 
\end{enumerate}
\setcounter{remarks}{\theenumi}

\begin{lemma}[Invariance of $\sH_c^\infty$]\label{lem:diffable}
For all $t\in\RRR$, ~~$\bU(t) \, \sH_c^\infty \subseteq \sH_c^\infty$.
\end{lemma}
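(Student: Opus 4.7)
The plan is to verify the three defining conditions of Definition~\ref{def:Hcinfty} for $\Psi_t=\bU(t)\Psi_0$. Condition~2 (compact 3-support) is immediate from Lemma~\ref{lem:supp}: since $\supp_{3x}\Psi_0$ and $\supp_{3y}\Psi_0$ are compact, the bound \eqref{eq:grow} forces $\supp_3\Psi_t$ to lie inside a bounded subset of $\RRR^3$ for each fixed $t$.

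The substantive content is condition~3, the weighted Sobolev bound, which I plan to establish by a commutator/Gronwall argument. For each $m,n\in\NNN_0$, write
\begin{equation*}
\|\Psi\|_{m,n}^2 := \sum_{N=0}^\infty N^m \sum_{\alpha\in M(N,n)} \|\partial^\alpha \Psi^{(N)}\|_N^2,
\end{equation*}
so that $\Psi_0\in\sH_c^\infty$ means every such seminorm is finite. Assuming for the moment that $\Psi_t$ may be differentiated in $q^3$ freely (to be justified below), the Schr\"odinger equation $\partial_t\Psi_t=-i\bH\Psi_t$ together with self-adjointness of $\bH$ (Lemma~\ref{lem:selfadjoint}) yields
\begin{equation*}
\frac{d}{dt}\|\partial^\alpha \Psi_t^{(N)}\|_N^2 = 2\,\Im\,\langle \partial^\alpha \Psi_t^{(N)},\,[\partial^\alpha,\bH]\Psi_t^{(N)}\rangle_N.
\end{equation*}
Since $\bH_{x_k}^{\free}$ and $\bH_{y_\ell}^{\free}$ have constant coefficients, $\partial^\alpha$ commutes with them and only the interaction contributes. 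Applying $\partial^\alpha$ to $\cutoff(\by_\ell-\bx_k)$ inside \eqref{eq:Hint} produces an operator of exactly the same structure as $\bH^{\inter}$, but with $\cutoff$ replaced by one of its (still smooth, compactly supported) partial derivatives; hence the standard creation/annihilation estimates yield
\begin{equation*}
\|[\partial^\alpha,\bH^{\inter}]\Psi^{(N)}\|_N \leq C_{|\alpha|}\sqrt{N+1}\sum_{|\beta|\leq|\alpha|}\bigl(\|\partial^\beta\Psi^{(N+1)}\|_{N+1}+\|\partial^\beta\Psi^{(N-1)}\|_{N-1}\bigr).
\end{equation*}
Multiplying by $N^m$, summing over $N$, and applying Cauchy--Schwarz controls the right-hand side by a finite linear combination of seminorms $\|\Psi_t\|_{m',n'}$ with $m'\leq m+1$ and $n'\leq n$. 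Gronwall's inequality applied to this closed finite system of differential inequalities then gives $\|\Psi_t\|_{m,n}\leq e^{C_{m,n}|t|}P_{m,n}\bigl(\|\Psi_0\|_{m',n'}\bigr)<\infty$.

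To rigorize the formal differentiation, I first prove that $\bH$ maps $\sH_c^\infty$ into itself: smoothness of $\bH\Psi$ follows from $\cutoff\in C_c^\infty$, compactness of $\supp_3(\bH\Psi)$ from the fact that $\bH^{\inter}$ extends $\supp_{3y}$ by at most $\delta$ while leaving $\supp_{3x}$ unchanged, and the seminorm bounds from one application of the above commutator estimate. Iteration then yields $\Psi_0\in\bigcap_{k}\mathrm{dom}(\bH^k)$, so $\bH^k\Psi_t=\bU(t)\bH^k\Psi_0$ stays in $\sH$ for all $k,t$; this functional-analytic $C^\infty$ regularity lets one justify the Gronwall step via mollification of the initial data. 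Finally, condition~1 (smoothness as a function on $\cQ^{s3}$) follows from the just-proved condition~3 by local Sobolev embedding on the compact set $\supp_3\Psi_t$.

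I expect the main obstacle to be the bookkeeping in the commutator step: verifying that the coupled system of seminorms $\{\|\Psi_t\|_{m',n'}\}_{(m',n')\in F}$ closes under a \emph{finite} index set $F$, and that the $\sqrt{N+1}$ from creation is absorbed precisely by the shift $m\mapsto m+1$ in the $N^m$ weight. A secondary subtlety is that $\partial_{\bx_k}$ and $\partial_{\by_\ell}$ derivatives couple through $\partial_{\bx_k}\cutoff(\by_\ell-\bx_k)=-\partial_{\by_\ell}\cutoff(\by_\ell-\bx_k)$, so derivatives of both particle species must be tracked simultaneously in $\partial^\alpha$.
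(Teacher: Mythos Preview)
Your overall strategy (propagate a family of weighted Sobolev-type norms via a commutator/Gronwall argument) is a legitimate alternative to the paper's route, which instead packages all the seminorms into powers of a single comparison operator $\comp = \sum_k(\bone-\bDelta_{x_k}) + d\Gamma_y(\bone-\bDelta_y)$ and invokes Huang's invariance theorem (showing that $\bZ_n=\comp^{n-1}[\bH,\comp]\comp^{-n}$ is bounded forces $e^{-i\bH t}$ to preserve $\bigcap_p\mathrm{dom}(\comp^p)$, which by Lemma~\ref{lem:Hcinfty} is exactly conditions~1 and~3 of $\sH_c^\infty$). Both approaches ultimately rest on the same commutator estimates, so the difference is organisational rather than conceptual.

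That said, your concrete implementation has a genuine gap at exactly the point you flag as ``the main obstacle''. First, the identity
\[
\frac{d}{dt}\|\partial^\alpha\Psi_t^{(N)}\|_N^2 = 2\,\Im\,\langle\partial^\alpha\Psi_t^{(N)},[\partial^\alpha,\bH]\Psi_t^{(N)}\rangle_N
\]
is not well-defined sector by sector: $\bH^{\inter}$ couples sector $N$ to $N\pm1$, so $\bH\partial^\alpha\Psi$ makes no sense for a multi-index $\alpha\in M(N,n)$ living on a single sector, and self-adjointness of $\bH$ only produces cancellation after summing over $N$. Once you do sum with the weight $N^m$, the weight itself fails to commute with $\bH^{\inter}$, and you pick up additional ``number-commutator'' terms $(N^m-(N\pm1)^m)\langle\ldots\rangle$ that are absent from your estimate. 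Second, and more seriously, the range ``$m'\le m+1$ and $n'\le n$'' you quote does \emph{not} give a finite closed system: controlling $\|\Psi_t\|_{m,n}$ through $\|\Psi_t\|_{m+1,n}$ forces you to $\|\Psi_t\|_{m+2,n}$, and so on indefinitely. A correct accounting shows that the $[\partial^\alpha,\bH^{\inter}]$ piece strictly lowers $n$ (at least one derivative must hit $\cutoff$, so $|\beta|\le n-1$), while the number-commutator piece lowers the effective $m$; together these do close the system, but establishing this requires a different organisation than what you wrote. The paper avoids the whole bookkeeping problem by working with $\comp^p$, for which the single scale of inequalities $\|\comp^{n-1}[\bH,\comp]\Psi\|\le d_n\|\comp^n\Psi\|$ suffices.
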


The following lemma ensures that the solution of the single-time evolution is smooth also in time; in fact, $\Psi(t,q^3)$ will be smooth as a function of $t$ and $q^3$. 

\begin{lemma}[Smoothness in $q$ and $t$]\label{lem:smootht}
For every $\Psi_0\in \sH_c^\infty$, one can choose a representative of each $\Psi_t$ in such a way that $\Psi_t(q^{s3})$ is a smooth function of $t$ and $q^{s3}$.
\end{lemma}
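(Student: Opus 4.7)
The plan is to upgrade the invariance Lemma~\ref{lem:diffable} to the statement that $t\mapsto \Psi_t$ is smooth as an $\sH_c^\infty$-valued function of $t$, and then apply Sobolev embedding on each sector to produce a jointly smooth representative of $\Psi_t(q^{s3})$ in $(t,q^{s3})$. The key preliminary step is to verify that $\bH$ itself preserves $\sH_c^\infty$, not merely $\mathrm{dom}(\bH)$. The free Dirac part $\sum_k \bH_{x_k}^\free + d\Gamma_y(\bH_y^\free)$ is a first-order constant-coefficient differential operator that does not change the sector, preserves smoothness and compact 3-support, and only shifts the orders of derivatives entering the defining summability \eqref{Hcinftynorm}. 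For the interaction $\bH^\inter$, the operators $\ba_s(\bx_k^{op})$ and $\ba_s^\dagger(\bx_k^{op})$ couple neighboring sectors via integration against, respectively multiplication by, $\cutoff \in C_c^\infty$, so smoothness and the $\delta$-enlargement of compact 3-supports are clear; the summability \eqref{Hcinftynorm} is propagated by bounding $N^m$-weighted Sobolev norms of $(\bH^\inter \Psi)^{(N)}$ against analogous weighted norms on sectors $N \pm 1$, using the uniform boundedness of $\cutoff$ and its derivatives.

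Granting $\bH\,\sH_c^\infty \subseteq \sH_c^\infty$, the Schr\"odinger equation gives the strong $\sH$-derivative $\partial_t \Psi_t = -i\bH\Psi_t$, and Lemma~\ref{lem:diffable} combined with the previous step puts the right-hand side back into $\sH_c^\infty$; iterating, $\partial_t^k \Psi_t = (-i\bH)^k \Psi_t \in \sH_c^\infty$ for every $k \in \NNN_0$, with each $t \mapsto \bH^k \Psi_t$ strongly continuous in $\sH$. To pass from $\sH$-valued smoothness to pointwise joint smoothness, fix any compact interval $[-T,T]$ and sector $N$; by Lemma~\ref{lem:supp} the 3-supports of $\Psi_t$ remain in a fixed compact set for $t \in [-T,T]$, so each $(\bH^j \Psi_t)^{(N)}$ has support in a fixed compact subset of $(\RRR^3)^{M+N}$. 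The $\sH_c^\infty$-seminorms bound the $H^s$-norm of $(\bH^j \Psi_t)^{(N)}$ uniformly in $t\in[-T,T]$; the Sobolev embedding $H^s\hookrightarrow C^r$ for $s > 3(M+N)/2 + r$, combined with strong $L^2$-continuity of $t\mapsto \bH^j\Psi_t$ and interpolation, then yields continuity in $C^r$ of the smooth representatives. A standard difference-quotient argument identifies the pointwise $\partial_t^j$ of these representatives with the pointwise representative of $(-i\bH)^j \Psi_t$, and a diagonal argument over $r,j,N$ produces a jointly $C^\infty$ representative.

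The main obstacle is the first step, the preservation of the summability \eqref{Hcinftynorm} by $\bH^\inter$: the creation/annihilation operators carry combinatorial prefactors $\sqrt{N+1}$ and $1/\sqrt{N}$ and convolve with $\cutoff$, which does not commute with the spatial derivatives $\partial^\alpha$ and so generates extra terms involving $\partial^\alpha \cutoff$. The fact that \eqref{Hcinftynorm} is required for \emph{every} $m \in \NNN_0$ is crucial, since it supplies arbitrarily many powers of $N$ to absorb these factors; the uniform boundedness and compact support of all derivatives of $\cutoff$ then handle the non-commutation.
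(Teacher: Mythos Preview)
Your approach is sound and reaches the same conclusion, but it follows a somewhat different technical route from the paper. The paper treats $(t,q)$ jointly: it shows $\Psi_\square \in H^p\bigl((-T,T)\times \cQ^{s3}\bigr)$ for every $p$ (using $\partial_t^p \Psi_\square = (-i\bH)^p \Psi_\square$ for the time direction and the $\comp$-regularity for the spatial directions), and then applies the Sobolev embedding theorem once in all variables simultaneously. You instead apply Sobolev embedding in the spatial variables only, sector by sector, and recover joint $t$-smoothness afterwards through a separate interpolation and difference-quotient argument.

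Your preliminary observation $\bH\,\sH_c^\infty \subseteq \sH_c^\infty$ is correct and genuinely useful; the paper's proof tacitly needs its consequence $\sH_c^\infty \subseteq \mathrm{dom}(\bH^p)$ but does not isolate it. The paper's route is shorter because the single Sobolev step on the product space absorbs the $t$-regularity automatically. Your route is more hands-on but relies on one ingredient that you assert without proof: the \emph{uniform} bound on the $H^s$-norms of $(\bH^j\Psi_t)^{(N)}$ for $t\in[-T,T]$. Lemma~\ref{lem:diffable} gives only membership $\Psi_t\in\sH_c^\infty$ for each fixed $t$, not locally uniform control of the seminorms. The missing bound does follow from the commutator estimates in the proof of Lemma~\ref{lem:diffable}: boundedness of each $\bZ_n$ yields that $\|\comp^p e^{-i\bH t}\comp^{-p}\|$ is locally bounded in $t$ via a Gronwall-type argument, and you should invoke this explicitly.
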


If, moreover, $\Psi_0$ depends smoothly on a parameter $\lambda$, then the solution $\Psi(t,q^3)$ will also depend smoothly on $\lambda$.
To formulate this statement precisely, let $d\in\NNN$, and let 
$\sH_d$ denote the subspace of $L^2(\RRR^d\times \cQ^{s3})$ of functions $\Psi(\lambda,\bx^{3M},\br,\by^{3N},\bs)$ that are anti-symmetric against permutations of $(\bx^{3M},\br)$ and symmetric against permutations of $(\by^{3N},\bs)$. Let $\sH_{cd}^\infty$ denote the subspace of $\sH_d$ of functions such that
\begin{enumerate}
\item $\Psi\in C^\infty(\RRR^d\times \cQ^{s3})$
\item $\supp_{3x}\Psi$, $\supp_{3y}\Psi$, and $\supp_\lambda \Psi \subseteq \RRR^d$
are compact
\item For every $m,n\in\NNN_0$,
\be\label{HcLambdainftynorm}
\sum_{N=0}^\infty N^m \!\! \sum_{\alpha\in M(N,n,d)}\Bigl\| \partial^\alpha \Psi^{(N)}\Bigr\|_N^2 < \infty\,,
\ee
where $M(N,n,d)$ is the set of multi-indices of degree $n$ for the variables $\lambda_1,\ldots,\lambda_d$, $x_1^1,\ldots,x_M^3$, $y_1^1,\ldots,y_N^3$.
\end{enumerate}

\begin{lemma}[Smoothness with parameter]\label{lem:smoothsingletime}
Suppose $\Psi_0\in \sH_{cd}^\infty$. Then $\Psi_0(\lambda,\cdot)\in \sH_c^\infty$ for almost every $\lambda\in \RRR^d$. For those $\lambda\in\RRR^d$ and every $t\in\RRR$, define
\be\label{Psitlambdadef}
\Psi(t,\lambda,\cdot) := \bU(t) \Psi_0(\lambda,\cdot)\,.
\ee
Then $\Psi$ is smooth in $t,\lambda,q^3$, $\Psi\in C^\infty(\RRR\times \RRR^d \times \cQ^{s3})$. 
\end{lemma}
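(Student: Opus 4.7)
The first assertion follows by Fubini: restricting \eqref{HcLambdainftynorm} to multi-indices $\alpha$ involving only $q^3$-coordinates expresses the quantity \eqref{Hcinftynorm} for the slice $\Psi_0(\lambda,\cdot)$ as an $L^1$-integrand in $\lambda$, hence finite for a.e.\ $\lambda$. Smoothness of $\Psi_0$ on $\RRR^d\times\cQ^{s3}$ restricts to smoothness of each slice in $q^{s3}$, and the uniform compactness of $\supp_{3x}\Psi_0\cup\supp_{3y}\Psi_0\subset\RRR^3$ gives uniform compact 3-support for every slice. Thus $\Psi_0(\lambda,\cdot)\in\sH_c^\infty$ for a.e.\ $\lambda$, and \eqref{Psitlambdadef} is well-defined for those $\lambda$.

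\textbf{Commuting $\partial_\lambda$ with $\bU(t)$.} The central structural observation is that $\bH$ acts only on the $q^{s3}$-coordinates and is $\lambda$-independent, so $\lambda$-differentiation formally commutes with $\bU(t)$. Applying the Fubini step to $\partial_\lambda^\beta\Psi_0\in\sH_{cd}^\infty$ shows $\partial_\lambda^\beta\Psi_0(\lambda,\cdot)\in\sH_c^\infty$ for a.e.\ $\lambda$ and every $\beta\in\NNN_0^d$. A difference-quotient argument in $\lambda$, using unitarity of $\bU(t)$ together with $\sH$-norm convergence of the $\lambda$-difference quotients of $\Psi_0(\lambda,\cdot)$ (by dominated convergence, since $\Psi_0\in\sH_{cd}^\infty$ gives compactly supported smooth $\lambda$-dependence with uniformly bounded difference quotients), then yields
\[
\partial_\lambda^\beta \Psi(t,\lambda,\cdot) \;=\; \bU(t)\,\partial_\lambda^\beta \Psi_0(\lambda,\cdot) \quad \text{in } \sH .
\]
Combined with the strong Schr\"odinger equation $\partial_t\Psi=-i\bH\Psi$, which holds on $\sH_c^\infty$ by Lemmas~\ref{lem:diffable} and~\ref{lem:smootht}, and with $[\bH,\partial_\lambda^\beta]=0$, one finds that every mixed $(t,\lambda)$-derivative takes the form $\partial_t^j\partial_\lambda^\beta\Psi(t,\lambda,\cdot) = \bU(t)(-i\bH)^j\partial_\lambda^\beta\Psi_0(\lambda,\cdot)$, which again lies in $\sH_c^\infty$ by Lemma~\ref{lem:diffable}.

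\textbf{From $\sH$-valued to pointwise smoothness.} To upgrade $\sH$-valued smoothness in $(t,\lambda)$ to pointwise smoothness in $(t,\lambda,q^{s3})$, I apply Sobolev embedding sectorwise: $\HHH^r((\RRR^3)^{M+N})\hookrightarrow C^k((\RRR^3)^{M+N})$ for $r$ sufficiently large relative to $k$ and $M+N$. By the identities above, every mixed $(t,\lambda,q)$-derivative of $\Psi$ is $L^2$-bounded in the $q$-variables with the summability in $N$ built into \eqref{Hcinftynorm}, and the bounds are locally uniform on compact sets in $(t,\lambda)$ by strong continuity of $t\mapsto\bU(t)$ on $\sH$ and of $\lambda\mapsto\partial_\lambda^\beta\Psi_0(\lambda,\cdot)$ in $\sH$ (with compact $\lambda$-support). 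Consequently every such mixed derivative is continuous in $(t,\lambda,q^{s3})$, and since this holds for all orders, $\Psi\in C^\infty(\RRR\times\RRR^d\times\cQ^{s3})$.

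\textbf{Main obstacle.} The principal technical point is justifying the commutation $\partial_\lambda^\beta\circ\bU(t)=\bU(t)\circ\partial_\lambda^\beta$ on $\sH$ through difference quotients, since this must survive all orders $\beta$ simultaneously and the derivatives must be interpreted in the correct Hilbert-space sense. Once that is in hand, the remainder is a parametric version of the Sobolev-embedding argument already underlying Lemma~\ref{lem:smootht}, with $\lambda$ handled as an inert parameter to which $\bH$ is blind.
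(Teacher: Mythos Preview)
Your proof is correct in outline, but it takes a genuinely different route from the paper's. The paper does not argue via difference quotients at all. Instead, it observes that one may regard $\lambda$ as the position coordinates of $d$ additional fictitious particles with zero Hamiltonian, defines the enlarged comparison operator
\[
\comp_d \;=\; \sum_{i=1}^d(\bone-\bDelta_{\lambda_i}) \;+\; \sum_{k=1}^M(\bone-\bDelta_{x_k}) \;+\; d\Gamma_y(\bone-\bDelta_y),
\]
and then simply re-runs the proofs of Lemmas~\ref{lem:selfadjoint}, \ref{lem:Hcinfty}, \ref{lem:diffable}, and \ref{lem:smootht} verbatim with $\bH_d$ and $\comp_d$ in place of $\bH$ and $\comp$. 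Since $\bH_d$ does not act on $\lambda$, one has $e^{-i\bH_d t}\Psi_0=\Psi_t$ as in \eqref{Psitlambdadef}, and the smoothness in all variables $(t,\lambda,q^{s3})$ jointly then falls out of the Sobolev argument of Lemma~\ref{lem:smootht} applied in the enlarged space in one stroke.

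The trade-off: the paper's approach is maximally economical, requiring essentially no new estimates, only the recognition that $\lambda$ can be absorbed into the configuration space. Your approach is more explicit and self-contained---you never leave the original Hilbert space $\sH$---but you pay for this by having to justify the commutation $\partial_\lambda^\beta\bU(t)=\bU(t)\partial_\lambda^\beta$ at every order and then carefully assemble joint continuity of mixed derivatives from $\sH$-valued smoothness plus sectorwise Sobolev embedding with $N$-dependent dimension. Both work; the paper's trick is worth knowing because it turns the parametric statement into a special case of what has already been proved.
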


\subsection{Strong Single-Time Solutions}
\label{sec:smooth1time}

Since we have chosen to consider \emph{strong} solutions (smooth functions with classical derivatives) of our multi-time PDEs, and since we will apply single-time considerations to each time variable, we also need to consider \emph{strong} solutions of our single-time evolution equation, which reads explicitly, with $q^{s3} = (\br,\bx^{3M},\bs,\by^{3N})$,
\begin{align}
&i\frac{\partial\Psi}{\partial t}(t,q^{s3}) 
= \sum_{k=1}^M H^{\free}_{x_k} \Psi(t,q^{s3}) + \sum_{\ell=1}^N H^{\free}_{y_\ell} \Psi(t,q^{s3})\nonumber\\
& + \sqrt{N+1} \sum_{k=1}^M \sum_{r_k',s_{N+1}} g^*_{r'_k r_k s_{N+1}} \int_{B_\delta(\bx_k)} \hspace{-7mm} d^3\tilde\by\:\: \cutoff(\tilde\by-\bx_k) \:\: \Psi^{(N+1)}_{r_k',s_{N+1}}\Bigl(t,\bx^{3M},\bigl(\by^{3N}, \tilde\by \bigr)\Bigr) \nonumber\\
& + \frac{1}{\sqrt{N}} \sum_{k=1}^M \sum_{\ell=1}^N \sum_{r'_k, s_\ell} g_{r_k r'_k s_\ell} \: \cutoff(\by_\ell-\bx_k) \: \Psi^{(N-1)}_{r_k' \widehat{s_\ell}} \bigl(t,\bx^{3M}, \by^{3N}\setminus \by_\ell \bigr) \,.
\label{1time}
\end{align}
While Lemmas~\ref{lem:selfadjoint}, \ref{lem:diffable}, and \ref{lem:smootht} together guarantee that the function $\Psi$ obtained by applying the exponential $e^{-i\bH t}$ of the self-adjoint operator $\bH$ to $\Psi_0$ is a strong solution of \eqref{1time},\footnote{That is because $\Psi_t\in \sH_c^\infty$ by Lemma~\ref{lem:diffable} and $\Psi_t\in\mathrm{dom}(\bH)$ by Lemma~\ref{lem:selfadjoint}, so $t\mapsto \Psi_t$ is a differentiable curve in $\sH$, whose derivative agrees on the one hand with the strong $t$-derivative of the smooth function $\Psi$ provided by Lemma~\ref{lem:smootht} and on the other hand with the right-hand side of \eqref{1time} by the definition of $\bH$ on $\sH_c^\infty$.} we also need to verify that \eqref{1time} has no further classical solutions with initial value $\Psi_0$; this is done in Lemma~\ref{lem:1timeunique} below. We will also use that our bounds on support growth follow directly from \eqref{1time}, a fact established in Lemma~\ref{lem:smoothsupp}.

\begin{lemma}[Support growth of strong single-time solutions]\label{lem:smoothsupp}
If $\Psi\in C^\infty(\RRR_t\times \cQ^{s3})$ solves the 1-time equation \eqref{1time}, and if $\Psi_t(\cdot):=\Psi(t,\cdot)$ satisfies $\|\Psi_t\|<\infty$ and $\|N^{1/2}\Psi_t\|<\infty$ with $N$ the $y$-number operator for every $t\in\RRR$, then the support of $\Psi_t$ grows at most according to \eqref{eq:growxy} (and thus to \eqref{eq:grow}).
\end{lemma}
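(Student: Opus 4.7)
The plan is to show that the expected ``bad-mass'' $E(t):=\langle\Psi_t,\chi^{\mathrm{bad}}(t)\Psi_t\rangle$ is identically zero, where $\chi^{\mathrm{bad}}(t)$ is the multiplication operator on $\sH$ given by the indicator of configurations violating \eqref{eq:growxy}. Concretely, set $A^x(t):=\Gr(\supp_{3x}\Psi_0,|t|)$ and $A^y(t):=\Gr(\supp_{3y}\Psi_0,|t|)\cup\Gr(\supp_{3x}\Psi_0,|t|+\delta)$, and on the $N$-sector let
\begin{equation}
\chi^{\mathrm{bad}}_N(t,q^{s3}):=1-\prod_{k=1}^M 1_{A^x(t)}(\bx_k)\prod_{\ell=1}^N 1_{A^y(t)}(\by_\ell).
\end{equation}
The initial-support hypothesis gives $E(0)=0$, while $\|\Psi_t\|<\infty$ and $\|\bN^{1/2}\Psi_t\|<\infty$ guarantee $\bH\Psi_t\in\sH$ (the second being precisely what is required for $\bH^{\inter}\Psi_t\in L^2$). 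I would then differentiate $E$ using the Schr\"odinger equation to obtain (after a suitable smooth approximation of $\chi^{\mathrm{bad}}$)
\begin{equation}
\tfrac{d}{dt}E(t)=\langle\Psi_t,(\partial_t\chi^{\mathrm{bad}})\Psi_t\rangle+i\langle\Psi_t,[\bH,\chi^{\mathrm{bad}}(t)]\Psi_t\rangle,
\end{equation}
and show both contributions together yield $dE/dt\le 0$.

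The crucial structural step is $[\bH^{\inter},\chi^{\mathrm{bad}}(t)]=0$. Unfolding the annihilation piece produces an integrand proportional to $\cutoff(\tilde\by-\bx_k)\bigl[\chi^{\mathrm{bad}}_{N+1}(q^3,\tilde\by)-\chi^{\mathrm{bad}}_N(q^3)\bigr]$, and the bracket is nonzero only when $q^3$ is ``good'' and $(q^3,\tilde\by)$ is ``bad'' (the converse being impossible, since adjoining a particle cannot repair a pre-existing violation); this forces $\tilde\by\notin A^y(t)$. But $\cutoff(\tilde\by-\bx_k)\ne 0$ implies $\tilde\by\in B_\delta(\bx_k)$, and $q^3$ good implies $\bx_k\in A^x(t)$, so $\tilde\by\in\Gr(A^x(t),\delta)=\Gr(\supp_{3x}\Psi_0,|t|+\delta)\subseteq A^y(t)$ --- a contradiction. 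An identical case analysis for the creation kernel $\cutoff(\by_\ell-\bx_k)$ closes the other sector coupling. The ``$+\delta$'' in the definition of $A^y(t)$ is precisely the structural feature that makes this cancellation work. For the free part, $m\beta$ commutes with multiplication, while $[-i\alpha_k\cdot\nabla_k,\chi^{\mathrm{bad}}]$ contributes $\sum_k\int(\nabla_{\bx_k}\chi^{\mathrm{bad}})\cdot\bj_{x_k}\,dq$ plus the analogous $\ell$-term, where $\bj_{x_k}^a:=\Psi^\dagger\alpha_k^a\Psi$ and $|\bj_{x_k}|\le|\Psi|^2$ since each $\alpha^a$ has operator norm $1$. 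Because $A^x(t)$ and $A^y(t)$ grow at unit speed by construction, $\partial_t\chi^{\mathrm{bad}}$ is (distributionally) the negative of the same boundary measure to which $\nabla\chi^{\mathrm{bad}}$ is proportional, with matching weight $1$. The Dirac-current inequality $|\bj|\le|\Psi|^2$ thus produces a pointwise cancellation on the boundary, giving $dE/dt\le 0$. Combined with $E(0)=0$ this forces $E(t)\equiv 0$ for $t\ge 0$; the case $t<0$ follows by applying the argument to $s\mapsto\Psi_{-s}$, since $\bH$ is time-independent and the sets use $|t|$.

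The main technical hurdle will be making the distributional calculation with the sharp indicator rigorous. I would mollify $1_{A^x(t)}$ and $1_{A^y(t)}$ to smooth $\sigma^{x,y}_\epsilon$ with transition layers of width $\epsilon$ placed \emph{outside} the respective allowed regions; for $\sigma^x_\epsilon$ this is unconstrained, and for $\sigma^y_\epsilon$ the $\delta$-margin engineered into $A^y(t)$ ensures the transition layer can be placed outside $\Gr(A^x(t),\delta)$, so the commutator computation $[\bH^{\inter},\chi^{\mathrm{bad}}_\epsilon]=0$ goes through verbatim. Taking $\sigma^{x,y}_\epsilon(t,\bz):=\rho_\epsilon(\mathrm{dist}(\bz,\supp_{3x\text{ or }y}\Psi_0)-|t|)$ with $\rho_\epsilon$ a standard cut-off gives $\partial_t\sigma^{x,y}_\epsilon=|\nabla\sigma^{x,y}_\epsilon|$ a.e.\ (unit speed), so the Dirac-current cancellation becomes a pointwise almost-everywhere inequality rather than a distributional identity, yielding $E_\epsilon'(t)\le 0$ rigorously. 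Passing to the limit $\epsilon\to 0$ by dominated convergence (controlled by $\|\Psi_t\|<\infty$, with the sector-sum interchanges justified by $\|\bN^{1/2}\Psi_t\|<\infty$) then completes the argument.
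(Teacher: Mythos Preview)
Your approach is genuinely different from the paper's and the core structural insight is correct. The paper does \emph{not} differentiate a global energy; instead it fixes a target configuration $Q^3$ outside the claimed support, builds a truncated past cone $C$ in $\RRR_t\times\cQ^3$, applies the Ostrogradski--Gauss theorem to the Dirac current $\bj$ on each sector $C^{(n)}$, and then shows that $\sum_n\int_{C^{(n)}}\mathrm{div}\,\bj=0$ by a telescoping cancellation of the interaction contributions across sectors. Your commutator identity $[\bH^{\inter},\chi^{\mathrm{bad}}]=0$ is exactly the global analogue of their telescoping step; both exploit the same fact, that $B_\delta(\bx_k)\subseteq A^y(t)$ whenever $\bx_k\in A^x(t)$.

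Two points where your sketch needs tightening. First, the hypotheses of the lemma give only $\Psi\in C^\infty$ with $\|\Psi_t\|,\|\bN^{1/2}\Psi_t\|<\infty$; there is \emph{no} assumption that $\bH^{\free}\Psi_t\in L^2$ or that $\Psi$ decays at spatial infinity. Your differentiation of $E_\epsilon$ and the subsequent integration by parts $\int\chi_\epsilon\,\nabla\!\cdot\!\bj=-\int(\nabla\chi_\epsilon)\cdot\bj$ both implicitly use one of these. The paper's local cone avoids this entirely: all integrals are over compact sets, so smoothness alone justifies every integration by parts, and the sole place where a global bound enters is the sector sum, controlled precisely by $\|\bN^{1/2}\Psi_t\|<\infty$. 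In your framework you would need an additional spatial cutoff $\eta_R$ and a limit $R\to\infty$, using that $\int_{|q|>R}|\Psi_t|^2\to0$; this is doable but should be stated. Second, your explicit mollifier $\sigma^y_\epsilon(t,\bz)=\rho_\epsilon(\mathrm{dist}(\bz,\supp_{3y}\Psi_0)-|t|)$ only captures the first piece of $A^y(t)$, not $\Gr(\supp_{3x}\Psi_0,|t|+\delta)$; and to make $[\bH^{\inter},\chi^{\mathrm{bad}}_\epsilon]=0$ hold exactly you need $\sigma^y_\epsilon\equiv1$ on $\Gr(\mathrm{supp}\,\sigma^x_\epsilon,\delta)$, which is strictly larger than $\Gr(A^x(t),\delta)$ by $\epsilon$ --- so the $y$-layer must be pushed out by an extra $\epsilon$ relative to the $x$-layer. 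Both fixes are routine, but as written the commutator does not vanish on the nose.
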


The differences between Lemma~\ref{lem:supp} and Lemma~\ref{lem:smoothsupp} arise from their different methods of proof: Lemma~\ref{lem:supp} uses the Trotter product formula, Lemma~\ref{lem:smoothsupp} integration of the current; Lemma~\ref{lem:supp} concerns $L^2$ initial data and a time evolution defined by a self-adjoint operator $\bH$, Lemma~\ref{lem:smoothsupp} concerns a $C^\infty$ solution of \eqref{1time}.

\begin{lemma}[Uniqueness of strong single-time solutions]\label{lem:1timeunique}
For every $\Psi_0\in C^\infty(\cQ^{s3})$ there is at most one solution $\Psi\in C^\infty(\RRR \times \cQ^{s3})$ to \eqref{1time} satisfying $\|\Psi_t\|<\infty$ and $\|N^{1/2}\Psi_t\|<\infty$ for every $t\in\RRR$ with initial data $\Psi_0$.
\end{lemma}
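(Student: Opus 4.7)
The plan is to reduce the uniqueness assertion to the support-growth statement of Lemma~\ref{lem:smoothsupp}. Let $\Psi^{(1)},\Psi^{(2)}\in C^\infty(\RRR\times\cQ^{s3})$ be two strong solutions of \eqref{1time} with the common initial datum $\Psi_0$ and both satisfying $\|\Psi^{(j)}_t\|<\infty$ and $\|N^{1/2}\Psi^{(j)}_t\|<\infty$ for every $t\in\RRR$ and $j\in\{1,2\}$. I would form the difference $\Phi := \Psi^{(1)} - \Psi^{(2)}\in C^\infty(\RRR\times\cQ^{s3})$. Since \eqref{1time} is linear in $\Psi$, $\Phi$ is itself a smooth solution; by the triangle inequality, $\|\Phi_t\|<\infty$ and $\|N^{1/2}\Phi_t\|<\infty$ hold for every $t$, so $\Phi$ satisfies the hypotheses of Lemma~\ref{lem:smoothsupp}. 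By construction $\Phi_0\equiv 0$, hence $\supp_{3x}\Phi_0=\emptyset$ and $\supp_{3y}\Phi_0=\emptyset$.

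Applying Lemma~\ref{lem:smoothsupp} to $\Phi$ then yields
\be
\supp_{3x}\Phi_t \subseteq \Gr(\emptyset,|t|)=\emptyset \quad\text{and}\quad \supp_{3y}\Phi_t \subseteq \Gr(\emptyset,|t|)\cup\Gr(\emptyset,|t|+\delta)=\emptyset
\ee
for every $t\in\RRR$. Since the model fixes $M\geq 1$ $x$-particles, every $q^3\in\cQ^3$ has at least one $\bx_k$-coordinate, so an empty $\supp_{3x}\Phi_t$ already forces $\supp\Phi_t=\emptyset$, i.e., $\Phi_t=0$ almost everywhere on $\cQ^{s3}$. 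The smoothness (and in particular continuity) of $\Phi_t$ then upgrades this to $\Phi_t\equiv 0$ pointwise, so $\Psi^{(1)}\equiv\Psi^{(2)}$, as claimed.

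The main, and essentially only, conceptual obstacle is already absorbed into Lemma~\ref{lem:smoothsupp}: what is really needed is that finite propagation speed holds for smooth solutions of \eqref{1time} under the stated $L^2$ and $N^{1/2}L^2$ bounds, and that is exactly what that lemma provides. An alternative route would be a direct energy estimate for $\|\Phi_t\|^2$ based on the formal symmetry of $\bH$, but to make such an argument rigorous one would need to justify an integration by parts for the free Dirac terms, which in turn requires $L^2$-control of the spatial derivatives of $\Phi_t$ that is not granted by the hypotheses (we have only $\Phi_t\in L^2$ and $N^{1/2}\Phi_t\in L^2$, not $\partial_a\Phi_t\in L^2$). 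The support-growth route via Lemma~\ref{lem:smoothsupp} sidesteps this difficulty entirely and reduces the uniqueness claim to essentially a one-line application.
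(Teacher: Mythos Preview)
Your proof is correct and matches the paper's own argument essentially line for line: reduce to the zero initial datum by linearity, invoke Lemma~\ref{lem:smoothsupp} to conclude that both $3x$- and $3y$-supports stay empty, and deduce $\Phi_t\equiv 0$. Your additional remarks (that $M\geq 1$ is what makes $\supp_{3x}\Phi_t=\emptyset$ already force $\Phi_t=0$, and that smoothness upgrades a.e.\ vanishing to pointwise vanishing) are details the paper leaves implicit.
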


Without demanding the summability properties $\|\Psi_t\|<\infty$ and $\|N^{1/2}\Psi_t\|<\infty$, the solution would not be unique in $C^\infty(\RRR_t\times \cQ^{s3})$, with further contributions coming from $N\to\infty$. This point can be illustrated by the following minimalistic variant of \eqref{1time}, in which spin is dropped and space $\RRR^3$ is replaced by a single point, so that the configuration is fully described by the number of $y$-particles and no space variables appear any more, $\Psi=(\Psi^{(0)},\Psi^{(1)},\ldots)$, while $\Psi^{(N)}(t)\in\CCC$ is governed by the Schr\"odinger equation
\be\label{1point}
i\frac{d \Psi^{(N)}}{dt}= g^*\, \Psi^{(N+1)}(t) + g \, \Psi^{(N-1)}(t)\,.
\ee
Then an example of a smooth non-zero solution with initial condition $\Psi_0=(0,0,\ldots)$ is provided by
\be
\Psi^{(N)}(t) = \begin{cases}
P_N(1/t) \, e^{-1/t}&t>0\\
0&t=0\\
\tilde P_N(1/t) \, e^{1/t}&t<0
\end{cases}
\ee
with polynomials $P_N,\tilde P_N$ defined recursively by
\begin{align}
P_{N+1}(\nu) &= \hspace{3mm} \tfrac{i}{g^*} \nu^2 P_N(\nu) -\tfrac{i}{g^*} \nu^2 P'_N(\nu)-\tfrac{g}{g^*}P_{N-1}(\nu)\\
\tilde P_{N+1}(\nu) &= -\tfrac{i}{g^*} \nu^2 \tilde P_N(\nu) -\tfrac{i}{g^*} \nu^2 \tilde P'_N(\nu)-\tfrac{g}{g^*}\tilde P_{N-1}(\nu)
\end{align}
(the prime means the $\nu$-derivative) with starting values $P_{-1}=0, P_0=1, \tilde P_{-1}=0, \tilde P_0=1$.

\section{Proofs: Single-Time Evolution}	
\label{sec:singletime}

In the following, we will repeatedly use that
\be
 \left\| \sum_{i=1}^n a_i\right\|^2 \leq \left( \sum_{i = 1}^n \|a_i\| \right)^2 \le n \sum_{i = 1}^n \|a_i\|^2
\label{eq:ineq1b}
\ee
for elements $a_i$ of a normed space.

\subsection{Essential Self-Adjointness}	
\label{subsec:selfadjoint}

\begin{proof}[Proof of Lemma~\ref{lem:selfadjoint}]
We first verify that $\bH$ is well defined as an operator $\sH_c^\infty\to\sH$, using standard arguments. Suppose $\Psi\in\sH_c^\infty$. It is clear that $\bH_{x_k}^\free\Psi^{(N)}\in L^2(\cQ^{s3,(N)})$ because $\Psi^{(N)}$ is smooth and has compact support; $\sum_k \bH_{x_k}^\free\Psi^{(N)}$ is anti-symmetric against $x$-permutations; and $\sum_k \bH_{x_k}^\free\Psi\in\sH$ because
\begin{align}
\sum_{N=0}^\infty \biggl\|\sum_{k=1}^M \bH_{x_k}^\free\Psi^{(N)} \biggr\|_N^2
&\leq\sum_{N=0}^\infty \biggl(\Bigl\|\sum_{k=1}^M (-i)\balpha_k\cdot\bnabla_k\Psi^{(N)} \Bigr\|_N+ \Bigl\|m_x\sum_{k=1}^M\beta_k\Psi^{(N)}\Bigr\|_N\biggr)^2\\
&\overset{\eqref{eq:ineq1b}}{\leq}\sum_{N=0}^\infty 2\biggl(\Bigl\|\sum_{k=1}^M (-i)\balpha_k\cdot\bnabla_k\Psi^{(N)} \Bigr\|_N^2+ \Bigl\|m_x\sum_{k=1}^M\beta_k\Psi^{(N)}\Bigr\|_N^2\biggr)\\
&\overset{\eqref{eq:ineq1b}}{\leq}\sum_{N=0}^\infty 2M\sum_{k=1}^M \Bigl\|\balpha_k\cdot\bnabla_k\Psi^{(N)} \Bigr\|_N^2+ 2Mm_x^2\sum_{N=0}^\infty \sum_{k=1}^M\bigl\|\beta_k\Psi^{(N)}\bigr\|_N^2\\
&\overset{\eqref{eq:ineq1b}}{\leq}6M\sum_{N=0}^\infty \sum_{k=1}^M \sum_{a=1}^3\bigl\|\partial_{x_k^a}\Psi^{(N)} \bigr\|_N^2+ 2Mm_x^2\sum_{N=0}^\infty \sum_{k=1}^M\bigl\|\Psi^{(N)}\bigr\|_N^2
\end{align}
because the $\alpha$ and $\beta$ matrices are unitary. The first term is finite by \eqref{Hcinftynorm} with $m=0$ and $n=1$, the second because $\Psi\in\sH$.

In the same way, one can show that $d\Gamma_y(\bH_y^\free)\Psi\in\sH$, except that one has to apply \eqref{Hcinftynorm} once with $m=1,n=1$ and once with $m=2,n=0$.

Now we show that $\bH^\inter \Psi\in \sH$. Since $\cutoff$ is smooth with compact support, $\bH^\inter \Psi$ is smooth and has compact 3-support, so each sector of it is square-integrable; it is clear that $\bH^\inter\Psi$ has the right permutation symmetry; it remains to show that $\sum_N \| \tilde\Psi^{(N)} \|_N^2<\infty$ for $\tilde\Psi = \bH^\inter\Psi$. We show this for $\tilde\Psi=\ba(\bx_k^{op})\Psi$ and $\tilde\Psi=\ba^\dagger(\bx_k^{op})\Psi$.

In cases where the operators
\be
	\ba(f) := \int d^3 \bx\, f(\bx)\, \ba(\bx)  \quad \text{or} \quad \ba^{\dagger}(f) := \int d^3 \bx\, f(\bx) \, \ba^{\dagger}(\bx)
\ee
are annihilating or creating a fixed function $ f \in L^2(\RRR^3) $, we may bound them against the $y$-number operator
\be
(\bN\Psi)^{(N)} = N \Psi^{(N)}
\ee
as follows (e.g., \cite{Nel64}):
\be
	\Vert \ba(f) \Psi \Vert \le \Vert f \Vert \, \Vert \bN^{1/2} \Psi \Vert \quad \text{and} \quad \Vert \ba^{\dagger}(f) \Psi \Vert \le \Vert f \Vert \, \Vert (\bN + \bone)^{1/2} \Psi \Vert\,.
\label{eq:nelsonbound}
\ee
In our case, the function $ f(\by) = \cutoff(\by - \bx_k) $ depends on the position $ \bx_k $ of the fermion, so $ f \in L^2 $, but it is not fixed. However, we may write the Hilbert space $ \sH $ as a direct integral over all fermion configurations
\be
	\sH = \int_{\RRR^{3M}}^{\oplus} d\bx^{3M} \sH_y
\ee
and decompose $ \ba(\bx^{op}_k) $ and $ \ba^{\dagger}(\bx^{op}_k) $ into fibers:
\be
	\ba(\bx^{op}_k) = \int_{\RRR^{3M}}^{\oplus} d\bx^{3M} \ba(\cutoff(\cdot - \bx_k)), \qquad \ba^{\dagger}(\bx^{op}_k) = \int_{\RRR^{3M}}^{\oplus} d\bx^{3M} \ba^{\dagger}(\cutoff(\cdot - \bx_k)).
\label{eq:fiber}
\ee
Applying the bound \eqref{eq:nelsonbound} on each fiber, we get the estimates
\be\label{eq:nelsonboundcutoff}
	\Vert \ba(\bx^{op}) \Psi \Vert \le \Vert \cutoff \Vert \, \Vert \bN^{1/2} \Psi \Vert \quad \text{and} \quad \Vert \ba^{\dagger}(\bx^{op}) \Psi \Vert \le \Vert \cutoff \Vert \, \Vert (\bN + \bone)^{1/2} \Psi \Vert\,.
\ee
Since $(N+1)^{1/2}\leq N+1$ for every $N\in\NNN_0$, we have that $(\bN+\bone)^{1/2} \leq \bN+\bone$, so
\be
\sum_{N=0}^\infty \Bigl\| (\bH_{x_k}^\inter \Psi)^{(N)} \Bigr\|_N^2 \leq c_1 \sum_{N=0}^\infty N \|\Psi^{(N)}\|_N^2 + c_2 \sum_{N=0}^\infty \|\Psi^{(N)}\|_N^2\,.
\ee
The first term is finite by \eqref{Hcinftynorm} with $m=1,n=0$, the second by $\Psi\in\sH$. This completes the proof that $\bH\Psi$ is well defined and $\in\sH$ for $\Psi\in\sH_c^\infty$.

\bigskip

Now we turn to essential self-adjointness. Following \cite{ND:2019} and \cite{arai}, we use the following ``commutator theorem'' \cite[Theorem X.37]{reedsimon2}, \cite[Theorem 1]{farislavine}:
{\it Let $\comp$ be a self-adjoint operator in $\sH$ with $\comp\geq \bone$, and let $\bH$ be a symmetric operator with domain $D\subseteq \sH$ which is a core for $\comp$. Suppose that for some $c,d\in\RRR$ and all $\Psi\in D$,
\be
	\Vert \bH \Psi \Vert < c \Vert \comp \Psi \Vert\,,
\label{eq:commcond1}
\ee
\be
	\bigl\vert \langle \bH \Psi, \comp \Psi \rangle - \langle \comp \Psi, \bH \Psi \rangle \bigr\vert \le d \Vert \comp^{1/2} \Psi \Vert^2\,.
\label{eq:commcond2}
\ee
Then $\bH$ is essentially self-adjoint on $D$.}

In our case, $D=\sH_c^\infty$; it is straightforward to verify that $\bH$ is symmetric on $\sH_c^\infty$. We choose as the comparison operator
\be\label{compdef}
	\comp := \sum_{k = 1}^M (\bone- \bDelta_{x_k}) + d\Gamma_y(\bone-\bDelta_y) \,.
\ee

We first prove that $ \comp $ is essentially self-adjoint  on $ \sH_c^{\infty} $, using standard arguments: To begin with, it is clearly defined on $\sH_c^\infty$. The negative Laplacian $-\bDelta$ is known to be positive and essentially self-adjoint on $C_c^\infty(\RRR^3)$ (compactly supported smooth functions). Adding a bounded self-adjoint operator such as $\bone$ has no effect on essential self-adjointness. It is known further \cite[Thm.~VIII.33 and Ex.~2 p.~302]{reedsimon1} that $\comp$ (and thus also $\comp$) is essentially self-adjoint on $C_c^\infty(\cQ^{s3})$ (smooth functions that vanish on all but finitely many sectors and have compact support in each sector). Since the latter space is contained in $\sH_c^\infty$ and dense in $\sH$, $\comp$ is also essentially self-adjoint on $\sH_c^\infty$. Positivity follows from positivity on each sector (using that $\comp$ maps each sector to itself); in fact, $\comp\geq \bone$ because $M\geq 1$.

It remains to prove the two inequalities \eqref{eq:commcond1} and \eqref{eq:commcond2}. For \eqref{eq:commcond1}, we need to bound $ \bH_{x_k}^{\free} $, $ \bH_{y_\ell}^{\free} $, and $ \bH_{x_k}^{\inter} $ against $ \comp $. The subsequent paragraphs deal with these terms separately.

\underline{$ \bH_{x_k}^{\free} $:} In the following, we focus on a single sector $N$ and simply write $\Psi^{(N)}=\Psi$ and $\|\cdot\|_N=\|\cdot\|$. 
The free Dirac operator is rather easy to bound against the Laplacian: First, 
\be
	\Vert \bH_{x_k}^{\free} \Psi \Vert
	= \left\Vert  \left( - i \balpha_k \cdot \bnabla_k + m_x \beta_k \right) \Psi \right\Vert
	\leq \left\Vert - i \balpha_k \cdot \bnabla_k\Psi \right\Vert + \left\Vert m_x \beta_k  \Psi \right\Vert.
\label{eq:hfreexk1}
\ee
Since $\beta$ is unitary, the right term is simply bounded by $ m_x \|\Psi\| \le m_x \|\comp\Psi\|$. The left term can be bounded by $ - \bDelta_k $, using that the anti-commutator of the $ \alpha $-matrices is $ \{ \alpha^a, \alpha^b \} = 2 \delta^{ab} $:
\be
\begin{aligned}
	& \Vert - i \balpha_k\cdot \bnabla_k \Psi \Vert^2 =  \left\Vert - i \sum_a \alpha^a_k \partial_{x_k^a} \Psi \right\Vert^2 =  \sum_{ab} \left\langle \Psi , - \alpha^a_k \alpha^b_k \partial_{x_k^a} \partial_{x_k^b} \Psi \right\rangle_N\\
	= & \sum_a \left\langle \Psi , -\partial_{x_k^a} \partial_{x_k^a} \Psi \right\rangle_N =   \left\langle \Psi , -\bDelta_k \Psi \right\rangle_N \,.
\end{aligned}
\ee
By considering the Fourier transform of $\Psi$ and using that $|\bk|\leq \sqrt{1+|\bk|^2}\leq 1+|\bk|^2$, we obtain that
\be
	\left\langle \Psi, - \bDelta_k \Psi \right\rangle_N = \bigl\|\sqrt{-\bDelta_k} \Psi \bigr\|^2
	\leq  \bigl\|(\bone-\bDelta_k) \Psi\bigr\|^2 \le \Vert \comp \Psi \Vert^2.
\label{eq:derivbound}
\ee
Hence,
\be
\Vert \bH_{x_k}^{\free} \Psi \Vert \leq (1+m_x)\Vert \comp\Psi\Vert
\ee
in every sector and thus also in $\sH$.

\underline{$ \bH_{y_\ell}^\free $:} The same reasoning shows that
\begin{align}
\Bigl\| \sum_{\ell=1}^N \bH_{y_\ell}^\free \Psi^{(N)}\Bigr\|_N 
&\leq  \Bigl\| \sum_{\ell=1}^N (-i)\balpha_\ell \cdot \bnabla_\ell \Psi^{(N)} \Bigr\|_N +Nm_y \|\Psi^{(N)}\|_N \\
&\leq  \| \comp\Psi^{(N)} \|_N + m_y \|\comp \Psi^{(N)}\|_N\,.
\end{align}

\underline{$ \bH_{x_k}^{\inter} $:} is a finite sum of annihilation and creation operators $ \ba_s(\bx_k^{op}) $ and $ \ba^{\dagger}_s(\bx_k^{op}) $. They can be bounded using \eqref{eq:nelsonboundcutoff}: Since $(\bN+\bone)^{1/2}\leq \bN+\bone = d\Gamma(\bone)+\bone\leq \comp$, both $\|\ba(\bx^{op})\Psi\|$ and $\|\ba^\dagger(\bx^{op})\Psi\|$ are bounded by $c_1 \|\comp\Psi\|$ for some $c_1>0$. 
Summing up the bounds for $ \bH_{x_k}^{\free} $, $ \bH_{y_\ell}^{\free} $, and $ \bH_{x_k}^{\inter} $, we obtain the desired comparison inequality \eqref{eq:commcond1}.

\bigskip

For a proof of the commutator inequality \eqref{eq:commcond2}, we need to bound the commutator $ \vert \langle \Psi, [\bH, \comp] \Psi \rangle \vert $ by $ \Vert \comp^{1/2} \Psi \Vert^2 = \langle \Psi, \comp \Psi \rangle $. Since $ \bH $ is a sum of three terms, $\sum \bH_{x_k}^{\free} +\sum \bH_{y_\ell}^{\free} +\sum \bH_{x_k}^{\inter} $, and $ \comp $ is a sum of two terms, $\sum(\bone-\bDelta_k) +\sum(\bone-\bDelta_\ell)$, the commutator $ [\bH, \comp] $ contains 6 kinds of terms. The $ \bone $ trivially commutes with all operators and so can be dropped; both $ \bH_{x_k}^{\free} $ and $ \bH_{y_\ell}^{\free} $ commute with both $\bDelta_{k'} $ and $\bDelta_{\ell'}$, so 4 out of the 6 summands vanish; we are left with terms of the kind $ [\bH^{\inter}_{x_k}, - \bDelta_{k'}] $ and $ [\bH^{\inter}_{x_k}, -\bDelta_{\ell'}] $.

\underline{$ [\bH^{\inter}_{x_k}, - \bDelta_{k'}] $:} vanishes unless $ k = k' $ because $\bH^{\inter}_{x_k}$ does not involve $\bx_{k'}$. 
We drop the fermion index $ k $, write $ - \bDelta = \sum_{a=1}^3 i \partial_a i \partial_a $ and shift one derivative to the other side of the scalar product:
\be
\begin{aligned}
	&\Bigl\vert \bigl\langle \Psi, [\bH^{\inter}, - \bDelta] \Psi \bigr\rangle \Bigr\vert
	= \left\vert \sum_a \Bigl(\langle i \partial_a \bH^{\inter} \Psi, i \partial_a \Psi \rangle - \langle i \partial_a \Psi, i \partial_a \bH^{\inter} \Psi \rangle\Bigr) \right\vert\\
	\le &2 \sum_a \Bigl|\Im \langle i \partial_a \bH^{\inter} \Psi, i \partial_a \Psi \rangle\Bigr|
\end{aligned}
\ee
What happens if a derivative $ i \partial_a $ hits $ \bH^{\inter} \Psi $? The coordinate $ \bx_k $ appears in each of $ \ba(\bx^{op}_k) \Psi^{(N)} $ and $ \ba^{\dagger}(\bx^{op}_k) \Psi^{(N)} $ twice, cf.~\eqref{eq:a}: once in $ \cutoff(\by - \bx_k) $ and once in the factor $ \Psi^{(N+1)} $ or $ \Psi^{(N-1)} $. Therefore, by the product rule, $ i \partial_a \bH^{\inter} \Psi $ will contain 2 terms: In the first term, $ \cutoff $ is replaced by $ i \partial_a \cutoff $; 
the annihilation and creation operators modified in this way will be denoted by $\ba_{i\partial_a\cutoff}(\bx^{op})$ and $\ba^{\dagger}_{i\partial_a\cutoff}(\bx^{op}) $. The second term is just $ i \partial_a $ directly hitting $ \Psi $ before creation or annihilation takes place. That is,
\be
\begin{aligned}
	i \partial_a \left( \ba(\bx^{op}) \Psi \right)
	= &\ba_{i\partial_a \cutoff}(\bx^{op})  \Psi 
	+ \ba(\bx^{op}) \left( i \partial_a \Psi \right)\\
	i \partial_a \left( \ba^{\dagger}(\bx^{op}) \Psi \right)
	= &\ba^{\dagger}_{i\partial_a\cutoff}(\bx^{op})  \Psi 
	+ \ba^{\dagger}(\bx^{op}) \left( i \partial_a \Psi \right).
\end{aligned}
\label{eq:acomm}
\ee
This allows for further treatment of the commutator:
\be
\begin{aligned}
	&\Bigl\vert \langle \Psi, [\bH^{\inter}, - \bDelta] \Psi \rangle \Bigr\vert
	\le 2 \sum_a \Bigl| \Im \langle i \partial_a \bH^{\inter} \Psi, i \partial_a \Psi \rangle \Bigr|\\
	\le &c_3 \sum_a \biggl| \Im \Bigl\langle i \partial_a (\ba(\bx^{op}) + \ba^{\dagger}(\bx^{op})) \Psi, i \partial_a \Psi \Bigr\rangle \biggr|\\
	\le &c_3 \sum_a \biggl| \Im \biggl( \Bigl\langle \bigl( \ba_{i\partial_a\cutoff}(\bx^{op}) +  \ba^{\dagger}_{ i \partial_a\cutoff}(\bx^{op}) \bigr)\Psi, i \partial_a \Psi \Bigr\rangle + \Bigl\langle \bigl(\ba(\bx^{op}) + \ba^{\dagger}(\bx^{op})\bigr)  i \partial_a \Psi, i \partial_a \Psi \Bigr\rangle \biggr)\biggr|
\end{aligned}\label{83}
\ee
with $ c_3 \in \RRR_+ $. By symmetry of $ (\ba + \ba^{\dagger}) $, the second scalar product is real, so its imaginary part is 0. Using $|\Im\,z|\leq |z|$, the Cauchy-Schwarz $(CS)$ inequality, and \eqref{eq:ineq1b}, we obtain that
\be
\begin{aligned}
	\Bigl\vert \langle \Psi, [\bH^{\inter}, - \bDelta] \Psi \rangle \Bigr\vert
	&\le c_3 \sum_a \biggl\vert \Bigl\langle \bigl(  \ba_{i \partial_a\cutoff}(\bx^{op})) + \ba^{\dagger}_{ i \partial_a\cutoff}(\bx^{op}) \bigr) \Psi, i \partial_a \Psi \Bigr\rangle \biggr\vert \\
	&\overset{(CS)}{\le} c_3 \sum_a \Bigl\Vert \bigl( \ba_{i \partial_a \cutoff}(\bx^{op})+ \ba^{\dagger}_{i \partial_a\cutoff}(\bx^{op}) \bigr) \Psi\Bigr\Vert \; \bigl\Vert i \partial_a \Psi \bigr\Vert\\
	&\overset{\eqref{eq:nelsonboundcutoff}}{\leq} c_3 \sum_a \|i\partial_a\cutoff\| \Bigl( \|\bN^{1/2}\Psi\| + \|(\bN+\bone)^{1/2}\Psi\| \Bigr) \bigl\Vert i \partial_a \Psi \bigr\Vert \\
	&\leq c_3 \sum_a \|i\partial_a\cutoff\| \: 2 \bigl\|(\bN+\bone)^{1/2}\Psi\bigr\| \: \bigl\Vert i \partial_a \Psi \bigr\Vert \\
	&\leq d_1 \sum_a  \bigl\|\comp^{1/2}\Psi\bigr\| \: \bigl\Vert i \partial_a \Psi \bigr\Vert \\	
	&= d_1  \bigl\|\comp^{1/2}\Psi\bigr\| \sum_a  \langle \Psi, - \partial_a \partial_a \Psi \rangle^{1/2} \\	
	&\overset{\eqref{eq:ineq1b}}{\leq} d_1  \bigl\|\comp^{1/2}\Psi\bigr\| \: \biggl(3 \sum_a\langle \Psi, - \partial_a\partial_a \Psi \rangle\biggr)^{1/2} \,.	
\end{aligned}
\label{eq:bound4}
\ee
In the same way as in \eqref{eq:derivbound} but using $|\bk|\leq \sqrt{1+|\bk|^2}$, we find that
\be\label{DeltaboundR}
\langle \Psi^{(N)}, -\bDelta \Psi^{(N)}\rangle_N \leq \|\comp^{1/2} \Psi^{(N)}\|_N^2\,.
\ee
Hence,
\be
	\Bigl\vert \langle \Psi, [\bH^{\inter}_{x_k}, - \bDelta_k] \Psi \rangle \Bigr\vert
	\leq d_2  \bigl\|\comp^{1/2}\Psi\bigr\| \: \bigl\| \comp^{1/2}\Psi\bigr\| 
	= d_2 \|\comp^{1/2}\Psi\|^2\,,
\ee
as desired.

\bigskip

\underline{$ [\bH^{\inter}_{x_k}, -\bDelta_{\ell'}] $:} can be evaluated using the commutation relations
\be
\begin{aligned}
	\bigl[ \ba(\bx_k^{op}), d\Gamma_y(-\bDelta_y)\bigr] \Psi
	&=  \ba_{-\bDelta\cutoff}(\bx_k^{op})  \Psi \\
	\bigl[ \ba^{\dagger}(\bx_k^{op}), d\Gamma_y(-\bDelta_y) \bigr] \Psi
	&= - \ba^{\dagger}_{-\bDelta\cutoff}(\bx_k^{op}) \Psi\, ,
\end{aligned}
\label{eq:acomm2}
\ee
\x{which are perhaps most easily verified through a few lines of calculation by applying $d\Gamma_y(-\bDelta_y)$ to the defining equations \eqref{eq:a} of $\ba$ and $\ba^\dagger$; for the first relation, integrate by parts twice in $\tilde\by$; for the second, note that $\Psi^{(N-1)}$ does not depend on $\by_\ell$. From \eqref{eq:acomm2}, we obtain that} 
\be
\begin{aligned}
	&\left\vert \left\langle \Psi , \bigl[ \bH^{\inter}_{x_k}, d\Gamma_y(-\bDelta_y) \bigr] \Psi \right\rangle \right\vert\\
	&\le c_4 \biggl\vert \left\langle \Psi , \left[ \ba^{\dagger}(\bx_k^{op}), d\Gamma_y(-\bDelta_y) \right] \Psi \right\rangle + \left\langle \Psi , \left[ \ba(\bx_k^{op}), d\Gamma_y(-\bDelta_y) \right] \Psi \right\rangle  \biggr\vert\\
	&\overset{\eqref{eq:acomm2}}{=} c_4 \biggl\vert -\Bigl\langle \Psi ,  \ba^{\dagger}_{-\bDelta\cutoff}(\bx_k^{op})  \Psi \Bigr\rangle + \Bigl\langle \Psi , \ba_{-\bDelta\cutoff}(\bx_k^{op})  \Psi \Bigr\rangle  \biggr\vert\\
	&\overset{(CS)}\le c_4 \|\Psi\| \biggl( \Bigl\Vert  \ba_{-\bDelta\cutoff}^{\dagger}(\bx_k^{op})\Psi \Bigr\Vert + \Bigl\Vert \ba_{-\bDelta\cutoff}(\bx_k^{op})  \Psi \Bigr\Vert \biggr)\\
	&\overset{\eqref{eq:nelsonboundcutoff}}{\le} 2c_4 \|\bDelta\cutoff\| \; \|\Psi\| \; \Bigl\Vert  (\bN+\bone)^{1/2} \Psi \Bigr\Vert \\
	&\leq d_2 \, \|\comp^{1/2}\Psi\| \; \|\comp^{1/2}\Psi\|
	= d_2 \Vert \comp^{1/2} \Psi \Vert^2
\end{aligned}
\label{eq:bound5}
\ee
with $ c_4, d_2 \in \RRR_+ $. This concludes the proof of \eqref{eq:commcond2} and thus of Lemma~\ref{lem:selfadjoint}.
\end{proof}

\label{sec:lemHcinfty}

\begin{lemma}\label{lem:Hcinfty}
$\sH_c^\infty$ contains exactly those $\Psi\in\sH$ for which
\begin{enumerate}
\item For every $p\in\NNN_0$, $\Psi \in \mathrm{dom}(\comp^p)$ with $\comp$ as in \eqref{compdef}.
\item $\supp_3\,\Psi \subseteq \RRR^3$ is compact.
\end{enumerate}
In other words, the first and the third condition in the definition of $\sH_c^\infty$ can equivalently be replaced by $\Psi\in \mathrm{dom}(\comp^p)$ $\forall p$.
\end{lemma}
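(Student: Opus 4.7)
The strategy is to work sector by sector and pass to Fourier space, exploiting the fact that $\comp$ preserves the boson-number sectors and acts on the $N$-sector as the constant-coefficient operator $(M+N)\bone-\bDelta$, where $\bDelta$ denotes the full Laplacian on $\RRR^{3(M+N)}$. In Fourier variables $\bk\in\RRR^{3(M+N)}$,
\be
\widehat{\comp\,\Psi^{(N)}}(\bk) \;=\; \bigl(M+N+|\bk|^2\bigr)\,\hat\Psi^{(N)}(\bk)\,,
\ee
so by Plancherel
\be\label{compPlanch}
\|\comp^p\Psi\|^2 \;=\; \sum_{N=0}^\infty \int_{\RRR^{3(M+N)}} \bigl(M+N+|\bk|^2\bigr)^{2p}\,\bigl|\hat\Psi^{(N)}(\bk)\bigr|^2\,d\bk\,.
\ee
All the work reduces to comparing this single quantity with the summability expression \eqref{Hcinftynorm} via elementary polynomial inequalities.

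\textbf{Forward direction ($\sH_c^\infty \Rightarrow$ domain condition).} Suppose $\Psi\in\sH_c^\infty$. Using $(a+b)^{2p}\leq 2^{2p}(a^{2p}+b^{2p})$ with $a=M+N$, $b=|\bk|^2$, together with Plancherel and the multinomial identity $\int |\bk|^{4p}|\hat\Psi^{(N)}|^2\,d\bk = \sum_{|\alpha|=2p}\binom{2p}{\alpha}\|\partial^\alpha\Psi^{(N)}\|_N^2$, I obtain
\be
\|\comp^p\Psi\|^2 \;\leq\; C_p \sum_N (M+N)^{2p}\|\Psi^{(N)}\|_N^2 \;+\; C_p'\sum_N \sum_{|\alpha|=2p}\|\partial^\alpha\Psi^{(N)}\|_N^2\,.
\ee
Both sums are finite by condition 3 of Definition~\ref{def:Hcinfty} applied with $(m,n)=(2p,0)$ and $(0,2p)$; hence $\Psi\in\mathrm{dom}(\comp^p)$ for every $p\in\NNN_0$. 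Since smoothness and compact $3$-support are preserved under applying $\comp$, condition 2 is retained automatically.

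\textbf{Reverse direction (domain condition $\Rightarrow$ $\sH_c^\infty$).} Conversely, assume $\Psi\in\bigcap_{p\geq 0}\mathrm{dom}(\comp^p)$ and $\supp_3\Psi$ is compact. Each summand in \eqref{compPlanch} being finite gives $\Psi^{(N)}\in H^{2p}(\RRR^{3(M+N)})$ for every $N$ and every $p$, so Sobolev embedding yields a smooth representative of $\Psi^{(N)}$ on each sector; collecting them produces $\Psi\in C^\infty(\cQ^{s3})$. For the summability \eqref{Hcinftynorm}, given $m,n\in\NNN_0$ pick any $p\geq (m+n)/2$. The pointwise bound
\be
N^m|\bk|^{2n} \;\leq\; (M+N)^m(M+N+|\bk|^2)^n \;\leq\; (M+N+|\bk|^2)^{m+n} \;\leq\; (M+N+|\bk|^2)^{2p}
\ee
(using $M\geq 1$, so the base is $\geq 1$) combined with $\sum_{|\alpha|=n}\bk^{2\alpha}\leq|\bk|^{2n}$, Plancherel, and \eqref{compPlanch} yields
\be
\sum_N N^m \!\!\sum_{|\alpha|=n}\|\partial^\alpha\Psi^{(N)}\|_N^2 \;\leq\; \sum_N N^m \!\int |\bk|^{2n}|\hat\Psi^{(N)}|^2\,d\bk \;\leq\; \|\comp^p\Psi\|^2 \;<\;\infty\,,
\ee
establishing condition 3.

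\textbf{Expected obstacle.} No serious difficulty is anticipated: the sector-preserving, Fourier-multiplier structure of $\comp$ makes the translation between ``membership in $\mathrm{dom}(\comp^p)$'' and ``weighted Sobolev control'' essentially mechanical, once the comparison $N^m|\bk|^{2n}\lesssim (M+N+|\bk|^2)^{2p}$ is noted. The only mild point is the smoothness deduction in the reverse direction, which relies on choosing Sobolev representatives consistently across sectors; but since the sectors are disjoint components of $\cQ^{s3}$, no compatibility issue arises.
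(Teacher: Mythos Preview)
Your proof is correct and takes essentially the same approach as the paper: both work sector-wise in Fourier space, identify $\comp$ as the multiplier $(M+N+|\bk|^2)$, compare the polynomial weights $N^m\sum_{|\alpha|=n}\bk^{2\alpha}$ against $(M+N+|\bk|^2)^{2p}$ via elementary inequalities (with $N$-independent constants), and invoke Sobolev embedding for the smoothness in the reverse direction. The paper's version phrases the comparison slightly more abstractly (bounding one family of polynomials $P_{Nmn}$ by another $Q_{Np}$ and vice versa), but the content is the same.
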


\begin{proof}
In terms of the (sector-wise) Fourier transform $\widehat\Psi$ of $\Psi$, the third condition \eqref{Hcinftynorm} can be expressed as
\be
\sum_{N=0}^\infty \int\limits_{\cQ^{s3,(N)}} \!\!\! dk \: P_{Nmn}(k) \: \bigl|\widehat\Psi^{(N)}(k)\bigr|^2   <\infty
\ee
for all $m,n\in\NNN_0$, where $P_{Nmn}$ is the polynomial
\be\label{PNmn}
P_{Nmn}(k_1,\ldots,k_{d}) = N^m \! \sum_{\substack{\alpha\in\NNN_0^{d}\\|\alpha|=n}} \! k_1^{2\alpha_1}\cdots k_{d}^{2\alpha_{d}}
\ee
with $d=3M+3N$. Likewise, the condition $\Psi\in \mathrm{dom}(\comp^p)$ can be expressed as
\be
\sum_{N=0}^\infty \int\limits_{\cQ^{s3,(N)}} \!\!\! dk \: Q_{Np}(k) \: \bigl|\widehat\Psi^{(N)}(k)\bigr|^2   <\infty\,,
\ee
where $Q_{Np}$ is the polynomial
\be\label{QNp}
Q_{Np}(k_1,\ldots,k_{d}) = \biggl(M+N+\sum_{j=1}^{d}k_j^2 \biggr)^{2p}\,. 
\ee
Every $Q_{Np}$ is bounded from above by a linear combination with $N$-independent coefficients of a finite number of $P_{Nmn}$'s with the same $N$, $m\leq 2p$, and $n\leq 2p$; thus, $\sH_c^\infty \subseteq \mathrm{dom}(\comp^p)$.

Conversely, every $P_{Nmn}$ is bounded from above by a linear combination with $N$-independent coefficients of a finite number of $Q_{Np}$'s with the same $N$ and $p\leq (m + n +1)/2$. Suppose that $\Psi\in \mathrm{dom}(\comp^p)$ for all $p$; then the weak derivatives of $\Psi$ satisfy \eqref{Hcinftynorm}. Therefore, each $\Psi^{(N)}$ lies in the Sobolev space $\HHH^n(\RRR^{3M+3N},(\CCC^4)^{\otimes (M+N)})\subset L^2(\RRR^{3M+3N},(\CCC^4)^{\otimes (M+N)})$ for every $n$; by the Sobolev embedding theorem \cite{adams}, $\Psi^{(N)}$ possesses a smooth representative and satisfies \eqref{Hcinftynorm} in the strong sense.
\end{proof}

\subsection{Support Growth}	\label{subsec:supp}

\begin{proof}[Proof of Lemma~\ref{lem:supp}]
For any number $M+N$ of free Dirac particles, it is well known \cite{dimock,zenk}, \cite[Thm.~2.20]{DM14}, \cite[Lemma 14]{pt:2013a}\footnote{The proof in \cite{pt:2013a} concerns smooth wave functions, but that implies the same support growth for $L^2$ functions, as an $L^2$ function can be approximated in the $L^2$ norm by a smooth function with the same support.} that the wave function propagates no faster than light, in particular that 
\begin{align}
\supp_{3x} \Psi_t &\subseteq \Gr(\supp_{3x} \Psi_0,|t|)\label{3xgrowth}\\
\supp_{3y} \Psi_t &\subseteq \Gr(\supp_{3y} \Psi_0,|t|)\label{3ygrowth}\,.
\end{align}
In fact, if we took the Hamiltonian of the $M+N$ particles to be \emph{only}
\be
	\bH_{Mx}^{\free} := \sum_{k = 1}^M \bH_{x_k}^{\free},
\ee
without the $\bH^\free_{y_\ell}$, then the $3y$-support would be invariant, $\supp_{3y}\Psi_t=\supp_{3y}\Psi_0$, while $\supp_{3x}\Psi$ would grow according to \eqref{3xgrowth}. Conversely, if we took \emph{only} $\sum_\ell \bH_{y_\ell}^\free$ as the Hamiltonian, then the $3x$-support would be invariant, while the $3y$-support grows according to \eqref{3ygrowth}.

Let us turn to the case with particle creation. Following a strategy of \cite[Theorem 3.4]{zenk}, we now prove that $ \supp_{3x}\Psi_t $ is not altered if we change the free evolution by including $ \bH^{\inter}$. 

First, we consider the evolution with \emph{only} $\bH^{\inter}_{x_k}$ as the Hamiltonian, without the free Dirac operators, and claim that then the 3$x$-support is invariant,
\be\label{supp3xHinterxk}
\supp_{3x}\Psi_t = \supp_{3x}\Psi_0\,.
\ee
Indeed, since $\bH^{\inter}_{x_k}$ can be decomposed into fibers as in \eqref{eq:fiber} (the continuous analog of being block diagonal), so can $\exp(-i \bH^{\inter}_{x_k} t)$ \cite[Thm.~XIII.85(c)]{reedsimon4}. Considering $\Psi_0$ in this fiber decomposition (i.e., as a function of $\bx^{3M}$ with values in $\sH_y$), it follows that $\Psi_t = \exp(-i \bH^{\inter}_{x_k} t)\Psi_0$ vanishes in those fibers where $\Psi_0$ does. So the $3M$-$x$-support is invariant, and therefore also the $3x$-support.

Second, under $ \bH^{\inter}_{x_k} $,
\be\label{supp3yHinterxk}
\supp_{3y} \Psi_t\subseteq \supp_{3y} \Psi_0 \cup \Gr(\supp_{3x} \Psi_0,\delta)\,.
\ee
Indeed, due to the fiber decomposition over $\bx^{3M}$ just mentioned, it suffices to consider just one fiber $\bx^{3M}$ and show that on it, for any set $G\subseteq \RRR^3$ with $B_\delta(\bx_k)\subseteq G$ and any vector $\Psi_0\in \sH_y$ with $3y$-support in $G$, also $\exp(-i \bH^{\inter}_{x_k}t)\Psi_0$ has $3y$-support in $G$. To see this, note that for disjoint sets $A,B$, $\sH_{A\cup B} = \sH_A \otimes \sH_B$, where $\sH_X$ means the bosonic Fock space over the set $X$, and consider $A=G\times \{1,2,3,4\}$ and $B=(\RRR^3 \setminus G)\times \{1,2,3,4\}$, so $\sH_{A\cup B}=\sH_y$. With respect to this tensor product decomposition, any $\Psi\in\sH_y$ factorizes as $\Psi=\Psi_A \otimes |\emptyset\rangle_B$ (where $|\emptyset\rangle$ denotes the Fock vacuum) iff it has $3y$-support in $G$. Since $\bH^{\inter}_{x_k}$ acts only on $\sH_A$, it is of the form $\bH_A \otimes \bone_B$; as a consequence, $\exp(-i \bH^{\inter}_{x_k} t)=\exp(-i\bH_A t)\otimes \bone_B$ maps $\Psi_A \otimes |\emptyset\rangle_B$ to $(\exp(-i\bH_A t)\Psi_A) \otimes |\emptyset\rangle_B$. Now the claim follows, and with it \eqref{supp3yHinterxk}.\footnote{\x{An alternative argument might go as follows but would require further work to be made rigorous.} On any fiber $\bx^{3M}$ and for any set $G\subseteq \RRR^3$ with $B_\delta(\bx_k)\subseteq G$, $\bH^{\inter}_{x_k}$ commutes with $\bone_G$. Therefore, also $\exp(-i\bH^{\inter}_{x_k} t)$ commutes with $\bone_G$. As a consequence, a vector $\Psi_0\in\sH_y$ with 3-support in $G$ evolves to one with 3-support in $G$.}

Third, back at the full time evolution, we can now decompose $ \bH $ into a part $\bH_{Mx}^\free$ which makes $ \supp_{3x} \Psi$ grow at most at the speed of light
and a sum of terms leaving $ \supp_{3x} \Psi$ invariant,
\be
	\sum_{k = 1}^M \bH_{x_k}^{\inter} +d\Gamma_y(\bH_y^\free) \,.
\ee
All contributions to $\bH$ are now put together using Trotter's product formula \cite[Thm.~VIII.31]{reedsimon1}:
\be
	e^{-i \bH t} \Psi_0
	= \lim_{n \rightarrow \infty} \left( \left( \prod_{k=1}^M e^{-i  \bH_{x_k}^{\inter} \frac{t}{n}} \right) e^{-i  d\Gamma_y(\bH_y^{\free}) \frac{t}{n}} e^{- i  \bH_{Mx}^{\free} \frac{t}{n}} \right)^n \Psi_0 =: \lim_{n \rightarrow \infty} \bU_{\! t,n}^n \Psi_0 \,.
\label{eq:trotter}
\ee
We claim that for all $k=0,\ldots,n$,
\begin{align}
\label{3xsuppgrTrotter}
	\supp_{3x}  \bU_{\! t,n}^k \Psi_0  &\subseteq \Gr\bigl(\supp_{3x} \Psi_0,\tfrac{k}{n} |t|\bigr)\\
\label{3ysuppgrTrotter}
\supp_{3y} \bU_{\! t,n}^k\Psi_0 &\subseteq \Gr\bigl(\supp_{3y} \Psi_0,\tfrac{k}{n}|t|\bigr) \cup \Gr\bigl(\supp_{3x} \Psi_0,\tfrac{k}{n}|t|+\delta\bigr)\,.
\end{align}
We proceed by induction along $k$. For $k=0$, the claim is trivially true. For the induction step $k\to k+1$, we use that $\Gr(\Gr(G,s),t) = \Gr(G,s+t)$ and $A\subseteq B \Rightarrow \Gr(A,t) \subseteq \Gr(B,t)$ and conclude that each factor $ e^{-i \frac{t}{n} \bH_{Mx}^{\free}} $ makes $ \supp_{3x} \Psi$ grow by $ |t|/n $, while all other factors in $\bU_{\! t,n}$ leave it invariant. Thus,
\begin{align}
\supp_{3x}\bU_{\! t,n}^{k+1}\Psi_0 
&\subseteq \Gr\bigl(\supp_{3x}\bU_{\! t,n}^k \Psi_0 ,\tfrac{1}{n}|t|\bigr) \\
&\subseteq \Gr\bigl( \Gr\bigl( \supp_{3x}\Psi_0,\tfrac{k}{n}|t|\bigr) , \tfrac{1}{n}|t|\bigr) \\
&= \Gr\bigl(\supp_{3x}\Psi_0,\tfrac{k+1}{n}|t| \bigr)\,.
\end{align}
Likewise, using $\Gr(A\cup B,t) = \Gr(A,t) \cup \Gr(B,t)$,
\begin{align}
	&\supp_{3y} \Bigl(e^{- i \frac{t}{n} d\Gamma_y(\bH_y^{\free})} e^{- i \frac{t}{n} \bH_{Mx}^{\free}} \bU_{\! t,n}^k \Psi_0\Bigr) \nonumber\\ 
	&\subseteq \Gr\bigl( \supp_{3y}\bU_{\! t,n}^k\Psi_0, \tfrac{1}{n}|t| \bigr)\\
	&\subseteq \Gr\Bigl( \Gr\bigl(\supp_{3y} \Psi_0,\tfrac{k}{n}|t|\bigr) \cup \Gr\bigl(\supp_{3x} \Psi_0,\tfrac{k}{n}|t|+\delta\bigr), \tfrac{1}{n}|t| \Bigr)\\
	&\subseteq \Gr\Bigl( \Gr\bigl(\supp_{3y} \Psi_0,\tfrac{k}{n}|t|\bigr),\tfrac{1}{n}|t| \Bigr) \cup \Gr\Bigl( \Gr\bigl(\supp_{3x} \Psi_0,\tfrac{k}{n}|t|+\delta\bigr), \tfrac{1}{n}|t| \Bigr)\\
	&= \Gr\bigl(\supp_{3y} \Psi_0,\tfrac{k+1}{n}|t|\bigr) \cup \Gr\bigl(\supp_{3x} \Psi_0,\tfrac{k+1}{n}|t|+\delta\bigr)\,.
\label{eq:supp3yinduction}
\end{align}
\x{What we need is the $3y$-support after $\bU_{\! t,n}^{k+1}$, and this operator comprises in addition the interaction factors $\exp(-i  \frac{t}{n}\bH_{x_k}^{\inter})$, see \eqref{eq:trotter}. The effect of these factors on $\supp_{3y}$ is that it will at most be joined with the $\delta$-grown set of $\supp_{3x}$ of the wave function obtained at that point, i.e., of $e^{-i  \frac{t}{n}d\Gamma_y(\bH_y^{\free})} e^{- i \frac{t}{n} \bH_{Mx}^{\free}}\bU_{\! t,n}^k \Psi_0$. Since $e^{-i \frac{t}{n}  d\Gamma_y(\bH_y^{\free})}$ does not change the $x$-support, we have that}
\begin{align}
&\supp_{3y}\bU_{\! t,n}^{k+1}\Psi_0 \nonumber\\
&\subseteq \eqref{eq:supp3yinduction}\cup \Gr\Bigl(\supp_{3x}e^{- i \frac{t}{n} \bH_{Mx}^{\free}}\bU_{\! t,n}^k \Psi_0,\delta\Bigr)\\
&\subseteq \eqref{eq:supp3yinduction}\cup \Gr\bigl(\supp_{3x} \Psi_0,\tfrac{k+1}{n}|t|+\delta\bigr)\\
	&= \Gr\bigl(\supp_{3y} \Psi_0,\tfrac{k+1}{n}|t|\bigr) \cup \Gr\bigl(\supp_{3x} \Psi_0,\tfrac{k+1}{n}|t|+\delta\bigr)\,,
\end{align}
completing the induction and thus the proof of \eqref{3xsuppgrTrotter} and \eqref{3ysuppgrTrotter}. Now we set $k=n$ and let $n\to\infty$. 
If a vector $\Psi_t\in \sH$ is a limit of a sequence $\Psi_{t,n}\to \Psi_t$ (such as $\Psi_{t,n}:=\bU_{\! t,n}^n\Psi_0$) and $\supp_{3x}\Psi_{t,n}\subseteq G$ for all $n$, then $\supp_{3x}\Psi_t \subseteq G$; likewise for $\supp_{3y}$. Thus, we have proved \eqref{eq:growxy} and therefore also \eqref{eq:grow}.
\end{proof}

\subsection{Smoothness Conditions}	\label{subsec:smooth}

We now prove Lemma~\ref{lem:diffable}. The proof is very similar to \cite[Thm. 7, Lemma 8]{ND:2019} and uses a commutator theorem due to Min-Jei Huang \cite[Theorem 2.3]{huang} that we will use in the following specialized form:
{\it Let $ \comp $ be a strictly positive, self-adjoint operator and $\bH$ a self-adjoint operator. Suppose that, for every $n\in\NNN_0$,
\begin{equation}
	\bZ_{n} := \comp^{n-1} [\bH,\comp] \comp^{-n}
\end{equation}
is densely defined and bounded. Then for every $p\in\NNN$,}
\be
e^{-i\bH t}[\mathrm{dom}(\comp^p)] = \mathrm{dom}(\comp^p) \quad  \forall t \in \RRR.
\label{eq:huang}
\ee

\bigskip

\begin{proof}[Proof of Lemma \ref{lem:diffable}]
Due to Lemmas~\ref{lem:supp} and \ref{lem:Hcinfty}, it suffices to show that $\bU(t)\sH_c^\infty\subseteq \mathrm{dom}(\comp^p)$ for the strictly positive comparison operator $ \comp $ given by \eqref{compdef} and every $p\in\NNN$. By Huang's theorem, it suffices to show that every $\bZ_n$ is densely defined and bounded.

To begin with, $\bZ_n$ is defined on the dense subspace $\mathrm{dom}(\comp^2)$. Indeed, for $\Psi \in \mathrm{dom}(\comp^2)$, $\comp^{-n}\Psi \in \mathrm{dom}(\comp^{n+2})$. Whenever $\Psi'\in \mathrm{dom}(\comp^p)$ for $p\geq 1$, then $\comp \Psi'$ exists and lies in $\mathrm{dom}(\comp^{p-1})$, and $\bH \Psi'$ exists by virtue of \eqref{eq:commcond1} and lies in $\mathrm{dom}(\comp^{p-1})$ as well. Thus, $[\bH,\comp]\comp^{-n}\Psi$ exists and lies in $\mathrm{dom}(\comp^n)$, so $\bZ_n\Psi$ exists. Now we show that $\bZ_n$ is bounded.

First, we consider $n=1$ and show that $ \bZ_1 = [\bH,\comp] \comp^{-1} $ is bounded or, equivalently, 
that there is a constant $d_1>0$ and a dense domain $D$ such that
\be\label{Z1bd1}
\bigl\| [\bH,\comp] \Psi \bigr\| \leq d_1\,\bigl\| \comp \Psi\bigr\|
\ee
for all $\Psi\in D$; we can choose $D=\sH_c^\infty$. 
We have already encountered this commutator in \eqref{eq:commcond2} in the proof of Lemma \ref{lem:selfadjoint}, where we noted that 
$ [\bH_{x_k}^{\inter},-\bDelta_{x_k}] $ and $ [\bH_{x_k}^{\inter},-\bDelta_{y_\ell}] $ are the only non-zero contributions. There, we could show that the form $ \langle \Psi, [\bH,\comp] \Psi \rangle $ is bounded by $ d \Vert \comp^{1/2} \Psi \Vert^2 = d \langle \Psi, \comp \Psi \rangle $. Now, we will prove \eqref{Z1bd1} instead.

\bigskip

\underline{$ \left[ \bH^{\inter}_{x_k}, -\bDelta_k \right] $} is evaluated using the product rule as in \eqref{eq:acomm}. 
We obtain that
\be\label{commaDelta}
	\left[ \ba(\bx^{op}_k), -\bDelta_k \right] 
	= - \ba_{-\bDelta\cutoff}(\bx^{op}_k) \Psi - 2 \sum_{a = 1}^3 \ba_{i\partial_a\cutoff}(\bx^{op}_k) (i \partial_{x_k^a} \Psi).
\ee
A similar equality holds true for $ \ba^{\dagger} $. So, with suitable $c_3,c_4,c_5\in\RRR_+$,
\be
\begin{aligned}
	& \Bigl\Vert \left[ \bH^{\inter}_{x_k} , -\bDelta_k \right] \Psi \Bigr\Vert
	\le c_3 \Bigl\Vert \left[ \ba^{\dagger}(\bx_k^{op}) , -\bDelta_k \right] \Psi \Bigr\Vert 
	+ c_3 \Bigl\Vert \left[ \ba(\bx_k^{op}) , -\bDelta_k \right] \Psi \Bigr\Vert \\
	&\overset{\eqref{commaDelta},\eqref{eq:nelsonboundcutoff}}{\le}
	c_4  \Bigl\Vert  (\bN+\bone)^{1/2} \Psi \Bigr\Vert  + c_5 \sum_{a=1}^3 \Bigl\Vert (\bN+\bone)^{1/2} i \partial_{x_k^a} \Psi \Bigr\Vert \\
	&\le  c_4  \bigl\Vert  \comp^{1/2} \Psi \bigr\Vert  + c_5 \biggl(3\sum_{a=1}^3 \sum_{N=0}^\infty (N+1) \|i \partial_{x_k^a} \Psi^{(N)}\|^2 \biggr)^{1/2}\\
	&\overset{\eqref{DeltaboundR}}{\le}  c_4  \left\Vert  \comp \Psi \right\Vert  + c_5 \biggl(3 \sum_{N=0}^\infty (N+1) \|\comp^{1/2} \Psi^{(N)}\|^2 \biggr)^{1/2}\\
	&\le c_4  \left\Vert  \comp \Psi \right\Vert  + c_5 \biggl(3 \sum_{N=0}^\infty \|\comp \Psi^{(N)}\|^2 \biggr)^{1/2}\\
	&= (c_4+c_5\sqrt{3}) \Vert \comp \Psi \Vert\,,
\end{aligned}
\label{eq:bound6}
\ee
which is what we wanted to show.

\bigskip

\underline{$ \left[ \bH^{\inter}_{x_k}, d\Gamma_y(-\bDelta_y) \right] $} is bounded as in \eqref{eq:bound5}:
\be
\begin{aligned}
	& \left\Vert \left[ \bH^{\inter}_{x_k} , d\Gamma_y(-\bDelta_y) \right] \Psi \right\Vert\\
	&\overset{\eqref{eq:acomm2}}\le  c_6 \left( \Bigl\Vert \ba_{-\bDelta\cutoff}^{\dagger}(\bx_k^{op})  \Psi \Bigr\Vert + \Bigl\Vert \ba_{-\bDelta\cutoff}(\bx_k^{op})  \Psi \Bigr\Vert \right) \\
	&\overset{\eqref{eq:nelsonboundcutoff}}\le  c_6  \Bigl\Vert (\bN+\bone)^{1/2} \Psi \Bigr\Vert 
	\le c_6 \left\Vert \comp^{1/2} \Psi \right\Vert 
	\le c_6 \left\Vert \comp \Psi \right\Vert\,.
\end{aligned}
\ee
So we have obtained the desired inequality \eqref{Z1bd1}.

\bigskip

We now turn to arbitrary $ n \in \NNN $ and show how the bounds generalize. Since $\comp\geq \bone$, clearly $ \Vert \comp^{-n} \Vert \le 1 $, so it remains to show
\be
	\Vert \comp^{n-1} [\bH, \comp] \Psi \Vert \le d_n \Vert \comp^n \Psi \Vert
\label{eq:Nn1ineq}
\ee
for a suitable $ d_n \in \RRR_+ $. To this end, we write
\be\label{commcompncomm}
\comp^{n-1} [\bH, \comp] = \bigl[ \comp^{n-1}, [\bH, \comp]\bigr] + [\bH, \comp]\comp^{n-1} \,.
\ee
Bounding these terms works similarly to the $n=1$ case. The $N$-sector of the first term applied to $\Psi$ is a linear combination of terms of the form
\be
\ba^\#_{\partial^\beta\cutoff}(\bx_k^{op}) \partial^\gamma \Psi^{(N)}\,,
\ee
where $\ba^\#$ means either $\ba$ or $\ba^\dagger$ and $\beta,\gamma$ are multi-indices with $|\beta+\gamma|\leq 2n$ (with $\beta$ acting on 3 variables and $\gamma$ on $3M+3N$). By \eqref{eq:nelsonboundcutoff}, each such term is bounded by $c (N+1)^{1/2} \|\comp^{|\gamma|/2}\Psi^{(N)}\|$, and they are jointly bounded by $c\|\comp^n\Psi\|$. The second term in \eqref{commcompncomm} is bounded by $d_1 \|\comp^n\Psi\|$, as follows from \eqref{Z1bd1} by inserting $\Psi \to \comp^{n-1}\Psi$, which still lies in $\sH_c^\infty$.
\end{proof}

\begin{proof}[Proof of Lemma~\ref{lem:smootht}]
Let $T>0$ be arbitrary but fixed, and let $\sH^T$ be the subspace of $L^2((-T,T)\times \cQ^{s3})$ of functions with the appropriate fermionic and bosonic permutation symmetry. We can identify $\sH^T$ with $L^2((-T,T),\sH)$. The function $t\mapsto \Psi_t$ ($\Psi_\square$ for short) belongs to $\sH^T$ because $\bU(t)$ is unitary, in fact $\|\Psi_\square\|_{\sH^T}=\sqrt{2T}\|\Psi_0\|_{\sH}$. 

We now show that $\Psi_\square \in \HHH^p((-T,T),\sH)$ for all $p\in\NNN_0$, i.e., that $\Psi_\square$ possesses weak time derivatives of any order that are square integrable over $t$. 
Indeed, by Lemma~\ref{lem:diffable}, $\Psi_t \in \sH_c^\infty \subseteq \mathrm{dom}(\bH^p)$, so $\Psi_\square$ is a $p$ times differentiable function $(-T,T)\to\sH$ with \x{$p$-th derivative $t\mapsto (-i\bH)^p \Psi_t=: (-i\bH)^p\Psi_\square$. It follows further that also weak time derivatives of $\Psi_\square$ exist and are given by $(-i\bH)^p \Psi_\square$, and that $t\mapsto \|\partial_t^p\Psi_t\|^2$ is differentiable for every $p$, therefore continuous for every $p$, and thus integrable over any $(-T,T)$. The upshot is that} $(-i\bH)^p \Psi_\square\in \sH^T$, which shows that $\Psi_\square \in \HHH^p((-T,T),\sH)$.

Now we show that $\Psi_\square\in \HHH^p((-T,T)\times \cQ^{s3})$. The weak $t$ derivative in $L^2((-T,T),L^2(\cQ^{s3}))$ gets translated to the weak $t$ derivative in $L^2((-T,T)\times \cQ^{s3})$. Since $\comp^q\Psi_0\in \sH_c^\infty$, also $\partial_t^p\comp^q \Psi_\square \in L^2((-T,T)\times \cQ^{s3})$ for every $q\in\NNN_0$. It follows that $\Psi_\square\in \HHH^p((-T,T)\times \cQ^{s3})$. By the Sobolev embedding theorem, $\Psi_\square\in L^2((-T,T)\times \cQ^{s3})$ has a smooth representative, and by continuity of $t\mapsto \Psi_t$, this smooth function is also a representative of $\Psi_t$ for every $t\in (-T,T)$. Since $T$ was arbitrary, $\Psi$ is smooth on $\RRR\times\cQ^{s3}$.
\end{proof}

\begin{proof}[Proof of Lemma~\ref{lem:smoothsingletime}]
Let $\Psi\in \sH_{cd}^\infty$. The first statement, that $\Psi(\lambda,\cdot)\in \sH_c^\infty$ for almost all $\lambda$, follows because, by the Fubini--Tonelli theorem, the left-hand side of \eqref{HcLambdainftynorm} is equal to
\be
\int_{\RRR^d} d\lambda \, \sum_{N=0}^\infty N^m \sum_\alpha \Bigl\| \partial^\alpha \Psi^{(N)}(\lambda,\cdot)\Bigr\|^2_{\sH^{(N)}}\,.
\ee
If this is finite, then the integrand has to be finite (so $\Psi(\lambda,\cdot)\in\sH_c^\infty$) for almost every $\lambda$, and we can define $\Psi_t(\lambda,\cdot)$ by \eqref{Psitlambdadef}. 

On the other hand, we can define $\bH_d$ on $\sH_{cd}^\infty$ by the same formulas as $\bH$ (so it does not act on $\lambda$), and define 
\be
\comp_d = \sum_{i=1}^d (\bone-\bDelta_{\lambda_i}) + \sum_{k=1}^M (\bone-\bDelta_{x_k}) + d\Gamma_y(\bone-\bDelta_{y})\,.
\ee
We may think of $\lambda$ as the coordinates of further particles with Hamiltonian $\bH_\lambda=0$. The same proof as for Lemma~\ref{lem:selfadjoint} shows that $\bH_d$ is essentially self-adjoint on $\sH_{cd}^\infty$. Since $\bH_d$ does not act on $\lambda$, $e^{-i\bH_d t}\Psi$ agrees with $\Psi_t$ as defined in \eqref{Psitlambdadef}. The same proof as for Lemma~\ref{lem:Hcinfty} shows that $\sH_{cd}^\infty$ consists of the functions in $\mathrm{dom}(\comp_d^p)$ with compact $x$-, $y$-, and $\lambda$-support. The same proof as for Lemma~\ref{lem:diffable} shows that $\sH_{cd}^\infty$ is invariant under $e^{-i\bH_d t}$, and the same proof as for Lemma~\ref{lem:smootht} shows that $\Psi$ is a smooth function of $t,\lambda$, and $q^{s3}$. 
\end{proof}

\subsection{Support Growth from Current Balance}

We now give a different argument for our bounds on support growth, based on the probability current and similar to \cite[Lemma 13]{pt:2013a} and \cite[Lemma 14]{pt:2013a}. We will need both arguments later. The probability current argument works under different hypotheses and yields a proof of Lemmas~\ref{lem:smoothsupp} and \ref{lem:1timeunique}. A version of this proof was first given in \cite[Lemma 4.12]{p:2010}.

\begin{proof}[Proof of Lemma~\ref{lem:smoothsupp}]
We write $T$ for $t$ and formulate the proof for $T>0$; it works in the same way for $T<0$. Fix any configuration $Q^{3}=(\bX^{3M},\bY^{3N})\in\cQ^3$ such that every $\bX_k\notin \Gr(\supp_{3x}\Psi_0,T)$ and every $\bY_{\!\ell}\notin \Gr(\supp_{3y}\Psi_0,T) \cup \Gr(\supp_{3x}\Psi_0,T+\delta)$. We need to show that $\Psi(T,Q^{3})=0$.

Since the $\Gr$ sets are closed, each $\bX_k$ and $\bY_{\!\ell}$ has positive distance from them, and so there is $\varepsilon>0$ such that each $B_\varepsilon(\bX_k)$ is disjoint from $\Gr(\supp_{3x}\Psi_0,T)$ and each $B_\varepsilon(\bY_{\!\ell})$ is disjoint from $\Gr(\supp_{3y}\Psi_0,T) \cup \Gr(\supp_{3x}\Psi_0,T+\delta)$. We define the set $C\subset \RRR_t\times \cQ^3$, called the truncated cone and depicted in Figure~\ref{fig:unique1}, by
\begin{align}
C&:=\biggl\{ (t;\bx^{3M},\by^{3n})\in\RRR\times\cQ^3 \bigg|t\in[0,T], \:\bx^{3M}\subset \Gr\bigl(\bX^{3M},T-t+\varepsilon\bigr), \nonumber\\
&\hspace{35mm}\by^{3n}\subset \Gr\bigl(\bX^{3M},T-t+\delta+\varepsilon\bigr)\cup \Gr\bigl(\bY^{3N},T-t+\varepsilon\bigr) \biggr\}\,.\label{Cdef}
\end{align}
We will show that $\Psi$ vanishes everywhere in $C$, which includes $(T,Q^3)$.

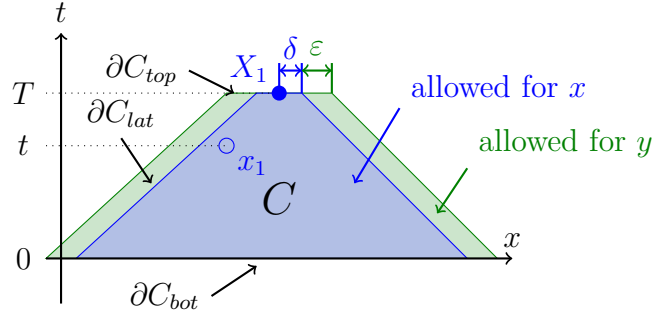
\begin{figure}[hbt]
    \centering
    \begin{tikzpicture}
	\filldraw[fill = green!50!black, fill opacity = 0.2,draw = green!50!black] (-0.2,0) -- ++(6,0) -- ++(-2.2,2.2) -- ++(-1.4,0) -- cycle;
	\filldraw[fill = blue!40!white, fill opacity = 0.5,draw = blue] (0.2,0) -- ++(5.2,0) -- ++(-2.2,2.2) -- ++(-0.6,0) -- cycle;
	\draw[->,thick] (-0.2,0) -- ++(6.2,0) node[anchor = south] {$ x $};
	\draw[->,thick] (0,-0.6) -- ++(0,3.6) node[anchor = south] {$ t $};
	\node at (-0.5,0) {$0$};
	\draw[dotted] (-0.2,1.5) -- ++(2.4,0);
	\node at (-0.5,1.5) {$t$};
	\draw[blue] (2.2,1.5) circle (0.1) node[anchor = north west] {$x_1$};
	\draw[dotted] (-0.2,2.2) -- ++(3.1,0);
	\node at (-0.5,2.2) {$T$};
	\fill[blue] (2.9,2.2) circle (0.1) node[anchor = south east] {$X_1$};
	\node at (2.9,0.8) {\Large{$C$}};
	\draw[->,thick] (2,-0.5) node[anchor = east] {$ \partial C_{bot} $} -- ++(0.6,0.4) ;
	\draw[->,thick] (0.8,1.6) node[anchor = south] {$ \partial C_{lat} $} -- ++(0.4,-0.6) ;
	\draw[->,thick] (1.7,2.5) node[anchor = east] {$ \partial C_{top} $} -- ++(0.6,-0.3) ;
	\draw[->,thick,blue] (4.5,2.0) node[anchor = south west] {allowed for $ x $} -- ++(-0.6,-1.0) ;
	\draw[->,thick,green!50!black] (5.4,1.2) node[anchor = south west] {allowed for $ y $} -- ++(-0.4,-0.6) ;
	\draw[thick,blue] (2.9,2.2) -- ++(0,0.4);
	\draw[thick,blue] (3.2,2.2) -- ++(0,0.4);
	\draw[thick,green!50!black] (3.6,2.2) -- ++(0,0.4);
	\draw[<->,thick,blue] (2.9,2.5) -- ++(0.3,0);
	\draw[<->,thick,green!50!black] (3.2,2.5) -- ++(0.4,0);
	\node[blue] at (3.05,2.8) {$\delta$};
	\node[green!50!black] at (3.4,2.8) {$\varepsilon$};
\end{tikzpicture}
    \caption{Depiction of one specific sector of the set $C$}
    \label{fig:unique1}
\end{figure}

The boundary of $C$ is piecewise smooth and consists of three parts:
\emph{bottom}, \emph{top} and \emph{lateral surface}, i.e.,
\be\label{eq:cone}
\begin{aligned}
	\partial C_\mathrm{bot} := &\{(0,q^3) \in C \} \\
	\partial C_\mathrm{top} := &\{(T,q^3) \in C \} \\
	\partial C_\mathrm{lat} := &\partial C \setminus ( \partial C_\mathrm{bot} \cup \partial C_\mathrm{top} ).
\end{aligned}
\ee
The lateral surface consists itself of several faces: one where $\|\bx_{k'}-\bX_k\|=T-t+\varepsilon$ (called $ \partial C_{\mathrm{lat},k'k} $), one where $\|\by_{\ell'}-\bY_\ell\|=T-t+\varepsilon$ (called $\partial C_{\mathrm{lat},\ell'\ell}$), and one where $\|\by_{\ell'}-\bX_k\|=T-t+\delta+\varepsilon$ (called $\partial C_{\mathrm{lat},\ell'k}$).

The \emph{unit surface normal vector} $\bn\in \RRR^{3M+3n+1}$ used in surface integrals is then:
\be
\begin{aligned}
	\bn &= (-1,0,\ldots,0) &\text{ at }(0,q^3)\\[3mm]
	\bn &= (1,0,\ldots,0) &\text{ at }(T,q^3) \\[2mm]
	\bn &= \frac{1}{\sqrt{2}}\Bigl(1,0,\ldots,\bn_{\bx_{k'}} = \frac{\bx_{k'} - \bX_k}{\Vert \bx_{k'} - \bX_k \Vert},\ldots,0\Bigr) &\text{ at }(t,q^3)\in\partial C_{\mathrm{lat},k'k}\\
	\bn &= \frac{1}{\sqrt{2}}\Bigl(1,0,\ldots,\bn_{\by_{\ell'}} = \frac{\by_{\ell'} - \bY_{\!\ell}}{\Vert \by_{\ell'} - \bY_{\!\ell} \Vert},\ldots,0\Bigr) &\text{ at }(t,q^3)\in\partial C_{\mathrm{lat},\ell'\ell}\\
	\bn &= \frac{1}{\sqrt{2}}\Bigl(1,0,\ldots,\bn_{\by_{\ell'}} = \frac{\by_{\ell'} - \bX_k}{\Vert \by_{\ell'} - \bX_k \Vert},\ldots,0\Bigr)&\text{ at }(t,q^3)\in\partial C_{\mathrm{lat},\ell'k}.
\end{aligned}
\label{eq:n}
\ee
Sketches of the cones and $\bn$ are provided by Figures \ref{fig:unique1a} through \ref{fig:unique3}.

\begin{minipage}{0.45\textwidth}
    \begin{tikzpicture}
	\filldraw[fill = green!50!black, fill opacity = 0.2,draw = green!50!black] (-0.2,0) -- ++(6,0) -- ++(-2.2,2.2) -- ++(-1.4,0) -- cycle;
	\filldraw[fill = blue!40!white, fill opacity = 0.5,draw = blue] (0.2,0) -- ++(5.2,0) -- ++(-2.2,2.2) -- ++(-0.6,0) -- cycle;
	\draw[->,thick] (-0.2,0) -- ++(6.2,0) node[anchor = south] {$ x $};
	\draw[->,thick] (0,-0.6) -- ++(0,3.6) node[anchor = south] {$ t $};
	\node at (-0.5,0) {$0$};
	\draw[dotted] (-0.2,2.2) -- ++(3.1,0);
	\node at (-0.5,2.2) {$T$};
	\fill[blue] (2.9,2.2) circle (0.1) node[anchor = south east] {$X_1$};
	\node at (2.9,0.8) {\Large{$C$}};
	\draw[->,blue, line width = 1] (3.1,2.2) -- ++(0,0.5) node[anchor = south] {$ \boldsymbol{n} $};
	\draw[->,blue, line width = 1] (3.1,0) -- ++(0,-0.5) node[anchor = west] {$ \boldsymbol{n} $};
	\draw[->,blue, line width = 1] (4.2,1.2) -- ++(0.3,0.3) node[anchor = south west] {$ \boldsymbol{n} $};
	\draw[->,green!50!black, line width = 1] (5.2,0.6) -- ++(0.3,0.3) node[anchor = south west] {$ \boldsymbol{n} $};
\end{tikzpicture}
	
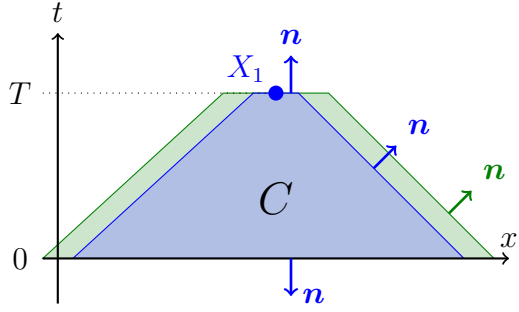
\captionof{figure}{The normal vectors $ \bn $ for the example in Figure \ref{fig:unique1}}
    \label{fig:unique1a}
\end{minipage}
\hfill
\begin{minipage}{0.45\textwidth}
	%Render engine by Sascha Lill
%(X,Z) = \R(x,y,z) casts a point on the X-Z-screen
%The viewer is located at y = -\viewer_distance (in cm)
%P = (\Px,\Py) is the point of infinite distance (in cm)
\def\viewer_distance{20}
\def\Px{4}
\def\Py{8}
\def\frac(#1,#2){#1/#2}
\def\s(#1){\viewer_distance/(\viewer_distance + #1)}
\def\R(#1,#2,#3){({\s(#2)*#1+(1-\s(#2))*\Px}, {\s(#2)*#3+(1-\s(#2))*\Py})}
\begin{tikzpicture}
	\draw[dotted] \R(0,0,0) -- \R(5,0,0) -- \R(5,5.8,0) -- \R(0,5.8,0)-- cycle;
	\filldraw[black!80!white,fill opacity = 0.2] \R(0,0,0) -- \R(2.2,2.2,2.2) -- \R(2.2,3.6,2.2) -- \R(0,5.8,0)-- cycle;
	\filldraw[black!40!white,fill opacity = 0.2] \R(0,0,0) -- \R(5,0,0) -- \R(2.8,2.2,2.2) -- \R(2.2,2.2,2.2)-- cycle;
	\filldraw[black!20!white,fill opacity = 0.2] \R(5,0,0) -- \R(5,5.8,0) -- \R(2.8,3.6,2.2) -- \R(2.8,2.2,2.2)-- cycle;
	\filldraw[black!60!white,fill opacity = 0.2] \R(2.2,2.2,2.2) -- \R(2.8,2.2,2.2) -- \R(2.8,3.6,2.2) -- \R(2.2,3.6,2.2)-- cycle;
	\draw[->,thick] (-0.2,0) -- ++(6,0) node[anchor = south west] {$x$};
	\draw[->,thick] \R(0,-0.2,0) -- ++\R(0,8,0) node[anchor = south west] {$y$};
	\draw[->,thick] (0,-0.2) -- ++(0,3.7) node[anchor = south west] {$z$};
	\node at (-0.3,-0.3) {$0$};
	
	\fill \R(2.5,2.9,2.2) circle (0.1) node[anchor = south east] {$X$};
	\node at (-0.3,2.2) {$T$};
	\draw[dotted] (-0.1,2.2) -- (2.5,2.2) -- \R(2.5,2.9,2.2);
	\draw \R(2.5,2.9,2.2) -- \R(2.5,5,2.2);
	\draw \R(2.8,2.9,2.2) -- \R(2.8,5,2.2);
	\node at \R(2.6,6,2.2) {$\varepsilon$};
	\draw \R(2.5,2.9,2.2) -- \R(3.5,2.9,2.2);
	\draw \R(2.5,2.2,2.2) -- \R(3.5,2.2,2.2);
	\node at \R(4.2,2.6,2.2) {$\delta + \varepsilon$};
	
	\draw[->,line width = 1] \R(4,2.5,1) -- ++(0.4,0.4) node[anchor = south west] {$\boldsymbol{n}$};
	\draw[->,line width = 1] \R(1.5,0.5,0.5) -- \R(1.5,0,1) node[anchor = south] {$\boldsymbol{n}$};
	\draw[->,line width = 1, gray] \R(3,3,0) -- \R(3,3,-0.6) node[anchor = north] {$\boldsymbol{n}$};
\end{tikzpicture}
	
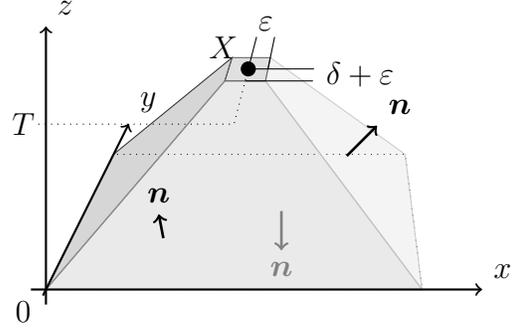
\captionof{figure}{The set $C$ is frustum-shaped in configuration space.}
    \label{fig:unique2}
\end{minipage}

\begin{figure}[hbt]
    \centering
    \begin{tikzpicture}
	\filldraw[green!50!black, fill opacity = 0.2] (-0.2,1.3) -- ++(0,-1.3) -- ++(9.4,0) -- ++(0,1.6) -- ++(-0.6,0.6) -- ++(-1.4,0) -- ++(-0.95,-0.95) -- ++(-0.95,0.95) -- ++(-0.6,0) -- ++(-1.3,-1.3) -- ++(-1.3,1.3) -- ++(-1.4,0) -- cycle;
	\filldraw[draw = blue, fill = blue!40!white, opacity = .5] (-0.2,0.9) -- ++(0,-0.9) -- ++(4.1,0) -- ++(-2.2,2.2) -- ++ (-0.6,0) -- cycle;
	\filldraw[draw = blue, fill = blue!40!white, opacity = .5] (5.4,0) -- ++(3.8,0) -- ++(0,1.2) -- ++(-1,1) -- ++ (-0.6,0) -- cycle;
	\draw[->,thick] (-0.2,0) -- ++ (9.4,0) node[anchor = south west] {$x$};
	\draw[->,thick] (0,-0.5) -- ++ (0,3) node[anchor = south east] {$t$};
	\node at (-0.5,0) {$0$};
	\draw[dotted] (-0.2,1) -- ++(7.5,0);
	\draw[dotted] (-0.2,2.2) -- ++(8.1,0);
	\node at (-0.5,1) {$t$};
	\node at (-0.5,2.2) {$T$};
	\fill[blue] (1.4,2.2) circle (0.1) node[anchor = south] {$X_1$};
	\fill[green!50!black] (5,2.2) circle (0.1) node[anchor = south] {$Y_1$};
	\fill[blue] (7.9,2.2) circle (0.1) node[anchor = south] {$X_2$};
	\draw[blue] (1,1) circle (0.1) node[anchor = north] {$x_1$};
	\draw[green!50!black] (5.2,1) circle (0.1) node[anchor = north] {$y_1$};
	\draw[blue] (7.3,1) circle (0.1) node[anchor = north] {$x_2$};
	\node at (4.5,0.6) {\Large{$C$}};
	\draw[->,blue,line width = 1] (2.5,1.4) -- ++ (0.4,0.4) node[anchor = south west] {$ \boldsymbol{n} $};
	\draw[->,green!50!black,line width = 1] (3.1,1.2) -- ++ (0.4,0.4) node[anchor = south west] {$ \boldsymbol{n} $};
	\draw[->,green!50!black,line width = 1] (6.7,1.7) -- ++ (-0.4,0.4) node[anchor = south east] {$ \boldsymbol{n} $};
\end{tikzpicture}
    \caption{Projection to space-time of the set $C$ for an example of a configuration $Q= (X_1,X_2,Y_1)$}
    \label{fig:unique3}
\end{figure}
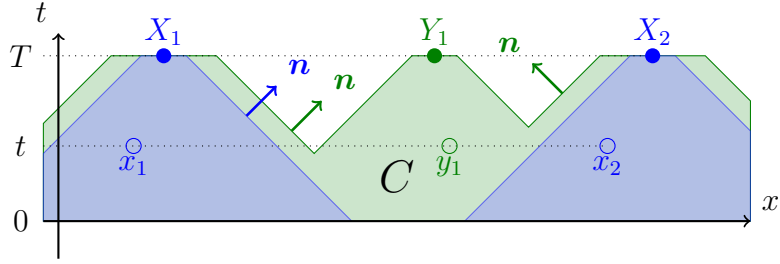

We define the probability current $\bj$, a vector field on $\RRR\times \cQ^3$, in analogy to the Dirac current; on the $n$-$y$-particle sector, it is defined by
\be
\begin{aligned}
	\bj: \RRR \times {\cQ}^{3,(n)} &\rightarrow \RRR \times \RRR^{3M + 3n}\\
	\bj^{0}(t,q^3) &:= \sum_{\br,\bs}\Psi(t,q^3,\br,\bs)^* \Psi(t,q^3,\br,\bs)\\
	\bj^{x_k,a}(t,q^3) &:= \sum_{\br,\bs,r'_k}\Psi(t,q^3,r_k)^* (\gamma^0\gamma^a)_{r_k,r'_k} \Psi(t,q^3,r'_k)\\
	\bj^{y_\ell,a}(t,q^3) &:= \sum_{\br,\bs,s'_\ell} \Psi(t,q^3,s_\ell)^* (\gamma^0\gamma^a)_{s_\ell,s'_\ell} \Psi(t,q^3,s'_\ell),
\end{aligned}
\label{eq:j}
\ee
where $a=1,2,3$, and not all spin indices are always made explicit. The current vector field has the property that for each $k$ and each $\ell$, $(\bj^0,\bj^{x_k,1},\bj^{x_k,2},\bj^{x_k,3})$ and $(\bj^0,\bj^{y_\ell,1},\bj^{y_\ell,2},\bj^{y_\ell,3})$ are future-causal (i.e., future-timelike or future-lightlike). It follows that $\bj\cdot\bn\geq 0$ (with $\cdot$ the Euclidean inner product in $3M+3n+1$ dimensions) on $\partial C_\mathrm{lat}$ and $\partial C_\mathrm{top}$. 

Let $C^{(n)}:= C \cap (\RRR\times \cQ^{3,(n)})$ be the $n$-$y$-particle sector of $C$. By the Ostrogradski--Gauss integral theorem (divergence theorem),
\begin{align}
	 \int_{C^{(n)}} \hspace{-4mm} d(t,q^3) \! \sum_{\mu=0}^{3M+3n}\partial_{\mu} \bj^{\mu}
	 &= \int_{\partial C^{(n)}} \hspace{-6mm}  d(t,q^3) \; \bj \cdot \bn\\
	 &= \int_{\partial C_\mathrm{bot}^{(n)}} \hspace{-6mm} d(t,q^3) \; \bj \cdot \bn 
	 + \int_{\partial C_\mathrm{top}^{(n)}} \hspace{-6mm} d(t,q^3) \; \bj \cdot \bn  
	 + \int_{\partial C_\mathrm{lat}^{(n)}} \hspace{-6mm} d(t,q^3) \; \bj \cdot \bn \,.\label{bot+top+lat}
\end{align}
The integral over $ \partial C_\mathrm{bot}^{(n)} $ is 0 by hypothesis, those over $ \partial C_\mathrm{lat}^{(n)} $ and $ \partial C_\mathrm{top}^{(n)} $ are non-negative, so the left-hand side must be non-negative. We will show that
\be\label{intdiv0}
	\sum_{n=0}^\infty \int_{C^{(n)}} \hspace{-4mm} d(t,q^3) \! \sum_{\mu=0}^{3M+3n}\partial_{\mu} \bj^{\mu} = 0\,.
\ee
It then follows that each summand must vanish, so the right-hand side of \eqref{bot+top+lat} vanishes, and in particular the integral over $\partial C_\mathrm{top}^{(N)}$ vanishes, which is what we wanted to show. So it remains to prove \eqref{intdiv0}. 

In Lemmas 13 and 14 of \cite{pt:2013a}, the ($3M+3n+1$-dimensional) divergence of $\bj$ vanished everywhere in $C$. In our situation, this is not the case, due to creation and annihilation terms in the Hamiltonian. However, as we will show, the integral of the divergence of $\bj$ over $C$ still vanishes because the creation and annihilation terms transfer probability to other places in $C$ but not outside of $C$. 
Indeed,
\be
\begin{aligned}
	\partial_0 \bj^0 &= \partial_0 \bigl( \Psi^* \Psi \bigr) =  2\, \Im( \Psi^*  \bH \Psi) =\\
	&= \sum_{k = 1}^{M} 2 \, \Im( \Psi^* \bH^{\free}_{x_k} \Psi) + \sum_{\ell = 1}^{n} 2\, \Im( \Psi^* \bH^{\free}_{y_\ell} \Psi) + \sum_{k = 1}^{M} 2 \,\Im( \Psi^* \bH^{\inter}_{x_k} \Psi)\\
	&= -\sum_{\mu=1}^{3M+3n} \partial_\mu \bj^\mu +\sum_{k = 1}^{M} 2 \,\Im( \Psi^* \bH^{\inter}_{x_k} \Psi)\,.
\end{aligned}
\label{eq:temporaldiv}
\ee
So it suffices to show that
\be
	\sum_{n=0}^\infty \int_{C^{(n)}_t} \hspace{-4mm} dq^3 \; \Im(\Psi^* \bH^{\inter}_{x_k} \Psi) = 0
\label{eq:jHintzero}
\ee
for every $k \in \{1,\ldots,M\}$, with $C_t:= \{q^3:(t,q^3)\in C\}$.
Let
\be
C_{yt}^{(n)}:=\Bigl[\Gr\bigl(\bX^{3M},T-t+\delta+\varepsilon\bigr)\cup \Gr\bigl(\bY^{3N},T-t+\varepsilon\bigr) \Bigr]^{n}\,.
\ee
Consider any $\bx^{3M}\subset G:=\Gr\bigl(\bX^{3M},T-t+\varepsilon \bigr)$. From the definition of the annihilation operator $\ba_s(\bx)$, cf.~\eqref{eq:a}, and the fact that $B_\delta(\bx_k)\subset \Gr\bigl(\bX^{3M},T-t+\delta+\varepsilon\bigr)$, we obtain that
\begin{align}
c_n&:=\int\limits_{C^{(n)}_{yt}} \hspace{-1mm} d\by^{3n} \Biggl[ \sum_{s=1}^4 g^s\,  \Psi^*(\bx^{3M},\by^{3n}) \Bigl[\ba_s(\bx_k) \Psi\Bigr](\bx^{3M},\by^{3n})\Biggr]^* \nonumber\\
&\, = \int\limits_{C^{(n+1)}_{yt}} \hspace{-2mm} d\tilde\by^{3(n+1)} \sum_{s=1}^4 (g^s)^*\,  \Psi^*(\bx^{3M},\tilde\by^{3(n+1)}) \Bigl[\ba_s^{\dagger}(\bx_k) \Psi\Bigr](\bx^{3M},\tilde\by^{3(n+1)})\,.
\end{align}
Thus, by the definition \eqref{eq:Hint} of $\bH_{x_k}^{\inter}$ in terms of $\ba$ and $\ba^\dagger$, the left-hand side of \eqref{eq:jHintzero} equals
\be
\int_{G^M} \hspace{-2mm} d\bx^{3M} \, \Im(c_0^*) +\sum_{n=1}^\infty \int_{G^M} \hspace{-2mm} d\bx^{3M} \, \Im(c_n^* + c_{n-1})\,,
\ee
a telescopic sum with partial sum up to $N_0$ given by $\int_{G^M} d\bx^{3M} \, \Im \, c_{N_0}^*$, whose modulus is
\begin{align}
&\leq \int_{G^M} d\bx^{3M} \, \bigl| c_{N_0}^*\bigr|\\
&\leq \int_{\RRR^{3M}} d\bx^{3M} \, \Biggl|\int_{C^{(N_0)}_{yt}} \hspace{-1mm} d\by^{3N_0} \sum_{s=1}^4 g^s\,  \Psi^*(\bx^{3M},\by^{3N_0}) \Bigl[\ba_s(\bx_k) \Psi\Bigr](\bx^{3M},\by^{3N_0})\Biggr|\\
&\leq \int_{\RRR^{3M}} d\bx^{3M} \int_{\RRR^{3N_0}} \hspace{-1mm} d\by^{3N_0} \sum_{s=1}^4 |g^s|   \Biggl|\Psi^*(\bx^{3M},\by^{3N_0}) \Bigl[\ba_s(\bx_k) \Psi\Bigr](\bx^{3M},\by^{3N_0})\Biggr|\\
&\leq \sum_{s=1}^4 |g^s| \, \bigl\|\Psi^{(N_0)}\bigr\| \, \bigl\| \ba_s(\bx_k^{op}) \Psi^{(N_0+1)}\bigr\|\\
&\leq 2\|g\| \, \bigl\|\Psi^{(N_0)}\bigr\| \, \|\cutoff\| \, \bigl\| (N_0+1)^{1/2} \Psi^{(N_0+1)}\bigr\| \stackrel{N_0 \to \infty}{\longrightarrow} 0
\end{align}
using the Cauchy-Schwarz inequality, \eqref{eq:ineq1b}, \eqref{eq:nelsonboundcutoff}, and the hypotheses $\|\Psi_t\|<\infty$ and $\|N^{1/2}\Psi_t\|<\infty$. This shows that \eqref{eq:jHintzero} indeed converges to zero (in fact absolutely). This concludes the proof.
\end{proof}

\begin{proof}[Proof of Lemma~\ref{lem:1timeunique}]
This is a simple corollary of Lemma~\ref{lem:smoothsupp}. By linearity, it suffices to show that the only solution with $\Psi_0=0$ is $\Psi=0$. But if $\Psi_0=0$, then $\supp_{3x}\Psi_0=\emptyset = \supp_{3y}\Psi_0$, and by \eqref{eq:growxy}, $\supp_{3x}\Psi_t=\emptyset =\supp_{3y}\Psi_t$ for every $t\in\RRR$.
\end{proof}

\section{Proofs: Multi-Time Evolution}	
\label{sec:multitime}

In this section, we prove Theorem~\ref{thm:1}; in particular, we prove existence and uniqueness of solutions of the multi-time equations \eqref{multi34} with given initial data \eqref{IVP}.

\subsection{Uniqueness}	
\label{subsec:unique}

\begin{lemma}[Uniqueness of solutions]\label{lem:unique}
For every $\Psi_0\in C^\infty(\cQ^{s3})$, there is at most one solution $\Phi\in C^\infty(\sS_\delta^s)$ satisfying \eqref{summability} to the multi-time equations \eqref{multi34} with initial data $\Psi_0$ as in \eqref{IVP}.
\end{lemma}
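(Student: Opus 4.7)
The plan is to reduce multi-time uniqueness to the single-time uniqueness of Lemma~\ref{lem:1timeunique} on equal-time configurations, and then to extend $\Phi \equiv 0$ to all of $\sS_\delta^s$ via a current-integration argument adapted from the proof of Lemma~\ref{lem:smoothsupp}.

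By linearity, it suffices to show that the only $\Phi \in C^\infty(\sS_\delta^s)$ solving the homogeneous equations \eqref{multi34} with $\Psi_0 = 0$ and satisfying \eqref{summability} is $\Phi \equiv 0$. For the coarsest partition $P = \{\mathscr{L}^{(N)}\}$, the system \eqref{multi34} reduces exactly to the single-time equation \eqref{1time}, so the equal-time restriction $\Psi_t(q^{s3}) := \Phi(t,\bx_1;\ldots;t,\by_N)$ is a smooth solution of \eqref{1time}. Specializing \eqref{summability} to $J = 1$ and $m \in \{0,1\}$ yields $\|\Psi_t\| < \infty$ and $\|\bN^{1/2}\Psi_t\| < \infty$, so Lemma~\ref{lem:1timeunique} gives $\Psi_t \equiv 0$ for every $t \in \RRR$.

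To extend $\Phi = 0$ from equal-time configurations to an arbitrary $q^4 \in \sS_\delta^s$ with finest partition $P = \{P_1,\ldots,P_J\}$ and family times $t_1,\ldots,t_J$, I would construct a truncated \emph{multi-cone} $C \subset \sS_\delta^s$ parametrized by $\tau \in [0,1]$ in which each family time interpolates linearly from a common value $t_0$ at $\tau = 0$ to $t_j$ at $\tau = 1$, with each $\tau$-slice consisting of configurations whose spatial coordinates lie inside $\delta$-fudged past light cones of small balls of radius $\varepsilon$ around the particles of $q^4$, following the pattern of \eqref{Cdef}. The convexity of the $\delta$-spacelike inequalities under straight-line interpolation of time coordinates (at fixed spatial configurations) ensures $C \subset \sS_\delta^s$. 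Applying the divergence theorem sector-by-sector to the multi-particle Dirac current \eqref{eq:j}, the bottom face lies on an equal-time slice at time $t_0$ where $\Psi_{t_0} \equiv 0$; the lateral-face integrals are non-negative because each single-particle current is future-causal; and the interaction contributions to $\sum_\mu \partial_\mu \bj^\mu$ telescope across sectors exactly as in \eqref{eq:jHintzero}, with the tail controlled by the Nelson bound \eqref{eq:nelsonboundcutoff} combined with the $N$-summability in \eqref{summability} for $m = 1$. Hence the top-face integral vanishes, and by smoothness of $\Phi$, letting $\varepsilon \to 0$ forces $\Phi(q^4) = 0$.

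The main obstacle will be the careful geometric construction of $C$ respecting the multi-time structure: the $J$ distinct family-time directions must all retract simultaneously to a common initial time while keeping $C$ inside $\sS_\delta^s$ and maintaining a piecewise-smooth boundary amenable to Stokes' theorem on each sector, with the lateral surfaces staggered so that each particle's own past light cone (fudged by $\delta$ when that particle is an $x$-emitter relative to a $y$-target) can be individually tracked. Once this scaffolding is fixed, the sector-telescoping and Nelson-bound tail estimates carry over essentially verbatim from the proof of Lemma~\ref{lem:smoothsupp}, and uniqueness follows.
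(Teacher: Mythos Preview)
Your reduction to $\Psi_0=0$ and the $J=1$ step via Lemma~\ref{lem:1timeunique} are fine and match the paper. The gap is in the multi-cone step. You propose to retract all family times simultaneously to a common $t_0$ while expanding the spatial slots by past light cones, and you justify $C\subset\sS_\delta^s$ by ``convexity of the $\delta$-spacelike inequalities under straight-line interpolation of time coordinates (at fixed spatial configurations).'' But in your cone the spatial coordinates are \emph{not} fixed---they spread at light speed---and that spreading generically destroys the safety-distance inequalities between families for $J\geq 3$. Concretely, if families $j$ and $j'$ both satisfy $t_j,t_{j'}>t_0$, then at parameter $\tau$ the time gap shrinks to $\tau|t_j-t_{j'}|$ while the two spatial balls have grown by $(1-\tau)|t_j-t_0|$ and $(1-\tau)|t_{j'}-t_0|$; demanding the safety inequality forces $|t_j-t_{j'}|\geq|t_j-t_0|+|t_{j'}-t_0|$, i.e.\ $t_0$ between $t_j$ and $t_{j'}$, which is impossible for all pairs once $J\geq3$. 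This is exactly the obstruction the paper flags in Section~\ref{subsec:exist} (the example with $x_1,\ldots,x_4$ after \eqref{PhiJ2}): there is in general no single time at which all grown families are mutually disjoint. So the multi-cone you sketch will leave $\sS_\delta$, and the divergence-theorem argument cannot be run there.

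The paper sidesteps this by never retracting more than one family time at once. It inducts on $J$: assuming uniqueness on $\sS_{\delta,J-1}$, it freezes the first $J-1$ families at $(T_1,Q_1;\ldots;T_{J-1},Q_{J-1})$ and lets only $t_J$ and the configuration $q_J$ of the last family vary. The resulting function $\phi(t_J,q_J)$ solves the single-time equation \eqref{1time} with $M_J$ fermions, has initial data at $t_J=T_{J-1}$ supplied by the induction hypothesis, and inherits the summability bounds from \eqref{summability}; Lemma~\ref{lem:1timeunique} then forces $\phi\equiv0$. This family-by-family peeling is what replaces your multi-cone and avoids the geometric obstruction.
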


\begin{proof}
For any $q^4\in\cQ^4$, let $J(q^4)$ denote the number of different values of time variables that occur in $q^4$; for $q^4\in\sS_\delta$, this is the minimal number of families. Let
\be\label{SJdef}
\sS_{\delta J}=\{q^4\in\sS_\delta:J(q^4)\leq J\}\,.
\ee
We proceed by induction along $J$. 

\x{Everywhere on $\sS_{\delta 1}$, $\Phi$ is a function of just 1 distinct time variable, and the restriction of $\Phi$ to $\sS_{\delta 1}$ obeys the single-time equation \eqref{1time}.} By \eqref{summability} for $J=1$, $\|\Psi_t\|<\infty$ and $\|N^{1/2}\Psi_t\|<\infty$, and by Lemma~\ref{lem:1timeunique}, $\Phi$ is unique on $\sS_{\delta 1}$.

The induction assumption asserts that $\Phi$ is unique on $\sS_{\delta J-1}$, and we need to prove uniqueness on $\sS_{\delta J}$. Fix $Q^4\in\sS_{\delta J}$ and sort the families in increasing order of the time variables, $T_1<\ldots<T_{J-1}<T_{J}$. As in \eqref{qjdef}, we write $Q^4=(T_1,Q_1,\ldots,T_J,Q_J)$ with \x{$Q_j=(X_j,Y_j)$} the configuration of all particles with time variable $T_j$. Let $\phi$ be the function obtained from $\Phi$ on $\sS_{\delta J}$ by inserting $t_1=T_1,\ldots, t_{J-1}=T_{J-1}$ while keeping $t_J$ variable, and inserting the configurations of the families 1 through $J-1$, $q_1=Q_1,\ldots,q_{J-1}=Q_{J-1}$ while keeping $q_J$ variable; \x{we write $q_J=(x_J,y_J)=(\bx_{J1},\ldots,\bx_{JM_J},\by_{J1},\ldots,\by_{JN_J})$ with $M_J= M-\sum_{j=1}^{J-1}\#X_j$. Using the abbreviations
\be
\begin{aligned}
\sX(t_J)&= \RRR^3\setminus \bigcup_{j=1}^{J-1}
\bigl[\Gr(X_j,t_J-t_j+2\delta)\cup\Gr(Y_j,t_J-t_j+\delta)\bigr]\\
\sY(t_J)&= \RRR^3\setminus \bigcup_{j=1}^{J-1}
\bigl[\Gr(X_j,t_J-t_j+\delta)\cup\Gr(Y_j,t_J-t_j)\bigr]\,,
\end{aligned}
\label{XYdef}
\ee
the function $(t_J,x_J,y_J)\mapsto \phi(t_J,x_J,y_J)$ is defined on the set
\be\label{phidef}
\bigcup_{t_J>t_{J-1}} \biggl(\{t_J\} \times \sX(t_J)^{M_J} \times \bigcup_{N_J=0}^\infty \sY(t_J)^{N_J} \biggr)\,.
\ee
This set includes $(T_J,Q_J)$ because $Q^4\in\sS_{\delta}$. Since $\Phi$ is a solution of \eqref{multi34} for $j=J$, $\phi$ is a solution of \eqref{1time} with time variable $t=t_J$, $M_J$ rather than $M$ fermions, and initial data given by $\Phi$ where $t_J=t_{J-1}$. 
By \eqref{summability}, $\|\phi_t\|<\infty$ and $\|N^{1/2}\phi_t\|<\infty$ with the norm $\|\cdot\|$ taken over the configurations in the bracket in \eqref{phidef}. By Lemma~\ref{lem:1timeunique}, $\phi$ is uniquely fixed on its domain \eqref{phidef}}, in particular at $(T_J,Q_J)$, as claimed.
\end{proof}

\begin{lemma}[Growth of 4-support]\label{lem:4supp}
Every solution $\Phi\in C^\infty(\sS_\delta^s)$ of \eqref{multi34} satisfying \eqref{summability} obeys propagation locality up to $\delta$ as in \eqref{4supp}.
\end{lemma}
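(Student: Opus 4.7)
The plan is to prove the 4-support bounds \eqref{4supp} by induction on $J(Q^4)$, the number of distinct time values occurring in $Q^4 \in \sS_\delta$, reducing the multi-time statement to iterated applications of the single-time support lemma Lemma~\ref{lem:smoothsupp}.

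For the base case $J=1$, every particle in $Q^4$ shares a common time $t$, so $\Phi$ restricts to the single-time wave function $\Psi_t$, which solves \eqref{1time}; the hypotheses of Lemma~\ref{lem:smoothsupp} follow from \eqref{summability} applied with $J=1$. The 3-support bounds \eqref{eq:growxy} then translate directly into the 4-support statement, using that $\mathrm{Infl}(\{0\}\times G)=\{(t,\bx):\bx\in\Gr(G,|t|)\}$.

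For the inductive step, given $Q^4 \in \sS_\delta$ with $J$ distinct times sorted as $T_1<\ldots<T_J$, I would freeze the first $J-1$ families and consider $\phi(t_J,q_J):=\Phi(T_1,Q_1,\ldots,T_{J-1},Q_{J-1},t_J,q_J)$. As in the uniqueness proof of Lemma~\ref{lem:unique}, the multi-time equation \eqref{multi34} for the latest family makes $\phi$ a classical single-time solution of \eqref{1time} on its (time-dependent) spatial domain for $t_J > T_{J-1}$, with $\|\phi_{t_J}\|$ and $\|N^{1/2}\phi_{t_J}\|$ finite by \eqref{summability}. At $t_J = T_{J-1}$ the configuration has only $J-1$ distinct times, so by the inductive hypothesis its 3-supports in the family-$J$ positions are bounded by the appropriate $\Gr$-sets at radius $|T_{J-1}|$ (resp.\ $|T_{J-1}|+\delta$ for bosons relative to the $x$-support). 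Lemma~\ref{lem:smoothsupp} then transports these bounds from $t_J=T_{J-1}$ to $t_J=T_J$, enlarging radii by $T_J-T_{J-1}$.

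The crucial observation enabling the telescoping bounds to match the target $|T_J|$ (rather than some looser quantity like $T_J-T_1$) is that each $H_j^P$ in \eqref{multi34} acts only on positions and boson occupation of particles in family $P_j$; in particular, the 3-support of $\Phi$ in the positions of a distinguished family $P_{j^*}$ is invariant under the single-time subevolutions in the other $t_{j'}$. This locality lets us, when bounding the spacetime position of a particle $\mu\in P_{j^*}$, treat the family-$j^*$ positions as inert parameters while reducing the remaining times toward zero, connecting $\Phi(Q^4)$ to $\Psi_0$ through a single non-trivial evolution of duration $|T_{j^*}|$ in $t_{j^*}$ alone, to which Lemma~\ref{lem:smoothsupp} applies directly.

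The main obstacle I expect is the mixed-sign case, where the sorted $T_j$ straddle $0$: here the naive telescoping $|T_{J-1}|+(T_J-T_{J-1})$ exceeds $|T_J|$, so the induction must be routed through $t_{j^*}=0$ for the family $j^*$ of interest rather than through an adjacent family, using the locality noted above. One must then verify that the intermediate multi-time configurations arising during the reduction all lie in $\sS_\delta$, i.e.\ that the safety-distance inequalities in \eqref{eq:SdeltaN} remain satisfied for each order of reducing the times $t_{j'}$; since moving any $t_{j'}$ toward the frozen value of $t_{j^*}$ in a monotone way only decreases the relevant $|t_{j'}-t_{j^*}|$, a sensible ordering (e.g.\ reducing the time furthest from $t_{j^*}$ first, separately within the positive and negative halves) should suffice. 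The analogous analysis for the $y$-support picks up the extra $+\delta$ from the cutoff radius exactly as in \eqref{eq:growxy}, yielding the second inclusion in \eqref{4supp}.
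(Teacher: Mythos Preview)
Your induction scheme, the reduction to Lemma~\ref{lem:smoothsupp}, and the ``inertness'' observation (that $H_{j'}^P$ does not act on positions in $P_{j^*}$ for $j'\neq j^*$, so the support in those positions is preserved under the $t_{j'}$-evolution) all match the paper. For configurations whose times share a sign your argument is essentially the paper's.

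The difference is in how mixed signs are handled. The paper replaces the single induction on $J$ by a \emph{double} induction on the pair $(J_+,J_-)$, the numbers of strictly positive and strictly negative time values in $q^4$. First one inducts on $J_+$ with $J_-=0$, always evolving the family with the largest positive time back to the next-largest; then one inducts on $J_-$ for each $J_+$, evolving the most negative family forward to the next-most-negative. In either step the moving time approaches every other time monotonically, so \emph{all} pairwise $|t_j-t_{j'}|$ only shrink and the intermediate configurations stay in $\sS_\delta$ for free; the telescoping is exact because the reference time and the moving time have the same sign.

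Your mixed-sign workaround---freezing $P_{j^*}$ and collapsing the other $t_{j'}$ toward a common value---has a gap precisely where you say a sensible ordering ``should suffice''. You only check that $|t_{j'}-t_{j^*}|$ decreases; you do not control $|t_{j'}-t_{j''}|$ for two \emph{other} families $j',j''\neq j^*$. Concretely, with $T_{j^*}=2$ and other families at $T_a=5$, $T_b=4$ (all positive, so ``within one half''), moving $t_a$---the one furthest from $T_{j^*}$---toward $T_{j^*}$ or toward $0$ drives $|t_a-T_b|$ from its initial value $1$ up to $2$ (respectively $4$), which can break the safety-distance condition in \eqref{eq:SdeltaN} between families $a$ and $b$. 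The paper's $(J_+,J_-)$ scheme avoids this because the extreme time is never moved past another, and it is simpler than patching your route.
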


\begin{proof}
It now plays a role that we evolve negative $t_j$ only towards the past and positive ones only towards the future. So let $J_+(q^4)$ denote the number of positive time values, $J_-(q^4)$ that of negative ones, and let $J_0(q^4)$ be 1 or 0 depending on whether 0 occurs as a time value. Let $\sS_{\delta J_+ J_-}=\{q^4\in\sS_\delta:J_+(q^4)\leq J_+, J_-(q^4)\leq J_-\}$. We proceed by induction, first along $J_+$, then along $J_-$.

On $\sS_{\delta 10}$ and $\sS_{\delta 01}$, the statement is provided by Lemma~\ref{lem:smoothsupp}. If true on $\sS_{\delta, J_+-1,0}$, it follows on $\sS_{\delta, J_+,0}$ in the same way as in the previous proof. For $J_->0$, order the negative time variables so that the least comes last. Then the induction step from $\sS_{\delta,J_+,J_--1}$ to $\sS_{\delta,J_+,J_-}$ works in the same way as the previous proof but in the opposite time direction.
\end{proof}

\subsection{Commutator Conditions}
\label{subsec:consist}

The commutator condition \eqref{eq:consist} arises heuristically as the consistency condition. It will also play a role in our proof of the existence of solutions, specifically for proving that the function obtained by solving one of the multi-time equations also solves the others. To this end, we verify the commutator condition in this section. The appropriate condition concerns the Hamiltonians $H_j^P$ corresponding to a family with a common time variable $t_j$ as in \eqref{multi34}.

To check it explicitly, we begin with other partial Hamiltonians, associated with individual particles and defined by $\bH_{x_k} = \bH_{x_k}^\free + \bH_{x_k}^\inter$ and $\bH_{y_{\ell}} = \bH_{y_\ell}^\free$.
In the following, we will understand these Hamiltonians as acting on functions of $q^4$ instead of $q^3$ and then write them as $H_{z}$ instead of $\bH_z$, $z\in\{x_1\ldots x_M,y_1\ldots y_N\}$. In fact, we will assume for $H_z$ the slightly more general form corresponding to \eqref{multi56} instead of \eqref{multi34} (using cut-off Green functions and applicable also outside of $\sS_\delta$),
\begin{align}
H_{x_k}\Phi(q^4) &= H^{\free}_{x_k} \Phi(q^4) \nonumber\\[3mm]
& + \sqrt{N+1} \sum_{r_k',s_{N+1}} g^*_{r'_k r_k s_{N+1}} \int_{B_\delta(\bx_k)} \hspace{-7mm} d^3\tilde\by\:\: \cutoff(\tilde\by-\bx_k) \:\: \Phi^{(N+1)}_{r_k',s_{N+1}}\Bigl(x^{4M},\bigl(y^{4N}, (x_k^0, \tilde\by)\bigr)\Bigr) \nonumber\\
& + \frac{1}{\sqrt{N}} \sum_{\ell=1}^N \sum_{r'_k} \G_{r_k r'_k s_\ell}(y_\ell - x_k) \: \Phi_{r'_{k}\widehat{s_\ell}}^{(N-1)}\bigl(x^{4M}, y^{4N} \backslash y_\ell\bigr)  \label{multi7}\\[4mm]
H_{y_{\ell}}\Phi(q^4) &= H_{y_\ell}^\free \Phi(q^4)\,.\label{multi8}
\end{align}
\x{In order to convince oneself that the $H_z$ can be applied to smooth functions on $\cQ^{s4}$ or on $\widehat{\sS^s_\delta}$ or on $\sS^s_\delta$, one needs to verify the following three properties for each of these sets: (i)~It contains with every $q^{s4}$ also a spacelike neighborhood of $q^{s4}$. (ii)~It contains with every $q^{s4}$ also all other spin components of configurations with the same space-time positions. It now follows that the derivatives in $ H^{\free} $ are well-defined. We still need certain configurations to describe creation and annihilation: (iii)~It contains with every $q^{s4}$ also all configurations with a $y$-particle added or removed in the 3d $\delta$-neighborhood of any $x$-particle. Thus, all configurations needed for defining $H_z$ are included.}

Likewise, corresponding to a partition of $q^4\in\sS_\delta$ into families as in Section~\ref{sec:multieq}, we regard the Hamiltonian $H_j^P$ of family $j$ defined in \eqref{HjPdef} as acting on functions of $q^4$. We set
\be
K_j^P = i\partial_{t_j} - H_j^P
\ee
(adopting notation from \cite[Sec.~5.3]{pt:2013c}).
This operator can also be defined as acting on $C^\infty(\cQ^{s4})$ by
\be
K_j^P=i\sum_{x_k\in P_j}\partial_{x_k^0}+i\sum_{y_\ell\in P_j} \partial_{y^0_\ell}-\sum_{z\in P_j} H_z.
\ee
Since $K_j^P\Phi(q^4)$ depends only on values of $\Phi$ in a neighborhood of $q^4$ and with a $y$-particle added or removed, the action of $K_j^P$ at $q^4\in \sS_\delta$ does not depend on whether we regard it as an operator on $C^\infty(\cQ^{s4})$ or $C^\infty(\sS^s_\delta)$.

\begin{lemma}\label{lem:commutator}
The commutator condition
\be\label{jconsist}
\bigl[K_j^P, K_{j'}^P \bigr] =0
\ee
holds in $C^\infty(\widehat{\sS^s_\delta})$ and $C^\infty(\sS^s_\delta)$ at $q^4\in\widehat{\sS_\delta^P}$, respectively $q^4\in \sS_\delta^P$, for all $j,j'\in\{1,\ldots,J(P)\}$.
\end{lemma}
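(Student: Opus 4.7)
The plan is to first reduce the claim to pairwise commutators among single-particle operators. By inspection of \eqref{HjPdef},
\[ K_j^P \;=\; \sum_{z\in P_j} K_z, \qquad K_z \,:=\, i\partial_{z^0} - H_z, \]
where $H_z$ is given by \eqref{multi7}--\eqref{multi8} and acts on $C^\infty(\cQ^{s4})$. Bilinearity of the commutator then yields
\[ [K_j^P, K_{j'}^P] \;=\; \sum_{z\in P_j}\sum_{z'\in P_{j'}} [K_z, K_{z'}], \]
so it suffices to prove $[K_z,K_{z'}]=0$ at $q^4$ for each pair. For $j\neq j'$ the definition of $P$ forces any such $z$ and $z'$ to keep their safety distance at $q^4$, which will be crucial for the main case.

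I would then split into cases on the types of $z$ and $z'$. The $(y_\ell,y_{\ell'})$ case is immediate, since $H_{y_\ell}$ and $H_{y_{\ell'}}$ are free Dirac operators in disjoint variables with no mutual time dependence, so $[K_{y_\ell},K_{y_{\ell'}}]=0$ identically. For the mixed case $z=y_\ell$, $z'=x_k$, the commutator of $K_{y_\ell}$ with $H_{x_k}^\free$ and with the annihilation-type term $\mathcal{A}_k$ (the $\Phi^{(N+1)}$--integral in \eqref{multi7}) vanishes by a direct passage of derivatives through the integral. For the creation-type term $\mathcal{C}_k$ (the sum containing $\G(y_\ell-x_k)\Phi^{(N-1)}$), the defining equation $i\partial_\tau \G = H_y^\free \G$ of the cut-off Green function, combined with the fact that $\Phi^{(N-1)}$ is independent of the removed $y_\ell$, makes $[i\partial_{y_\ell^0},\mathcal{C}_k]\Phi$ and $[H_{y_\ell}^\free,\mathcal{C}_k]\Phi$ identical---both reduce to $\tfrac{1}{\sqrt{N}}(H_y^\free\G)(y_\ell-x_k)\,\Phi^{(N-1)}(x^{4M},y^{4N}\setminus y_\ell)$---and therefore cancel in $K_{y_\ell}$. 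Hence $[K_{y_\ell},K_{x_k}]=0$ holds identically on $C^\infty(\cQ^{s4})$, with no appeal to any geometric condition on $q^4$.

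The main case is $z=x_k$, $z'=x_{k'}$ with $k\neq k'$. Since neither $H_{x_k}$ nor $H_{x_{k'}}$ depends on the other particle's time coordinate, the mixed derivative--Hamiltonian commutators vanish and $[K_{x_k},K_{x_{k'}}]=[H_{x_k},H_{x_{k'}}]$. Free--free and free--interaction commutators vanish by direct inspection, leaving $[H_{x_k}^\inter,H_{x_{k'}}^\inter]$. Splitting $H_{x_k}^\inter = \mathcal{A}_k+\mathcal{C}_k$ (and likewise for $k'$), the bosonic symmetry of $\Phi$ in its $y$-arguments gives $[\mathcal{A}_k,\mathcal{A}_{k'}]=[\mathcal{C}_k,\mathcal{C}_{k'}]=0$ after relabeling of integration/summation variables. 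The cross terms $[\mathcal{A}_k,\mathcal{C}_{k'}]$ and $[\mathcal{C}_k,\mathcal{A}_{k'}]$ reduce to their ``contraction'' contributions, in which the $y$-particle freshly created by $\mathcal{A}_k$ is immediately re-annihilated by $\mathcal{C}_{k'}$ (the non-contraction pieces coming from already-present $y_\ell$'s match between $\mathcal{A}_k\mathcal{C}_{k'}$ and $\mathcal{C}_{k'}\mathcal{A}_k$ and therefore cancel). Each contraction is a multiplication operator of the schematic form $\Phi^{(N)}(q^4)$ times
\[ \int_{B_\delta(\bx_k)} d^3\tilde\by\;\cutoff(\tilde\by-\bx_k)\,\G(x_k^0-x_{k'}^0,\,\tilde\by-\bx_{k'}) \]
(with a spin-index sum and appropriate coupling factors) together with its $k\leftrightarrow k'$ partner.

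The decisive step---and the expected main technical obstacle beyond the bookkeeping of spin indices and bosonic symmetrization---is to show that these contraction integrals vanish at $q^4\in\widehat{\sS_\delta^P}$ (and a fortiori at $q^4\in\sS_\delta^P$). This is where the geometry enters. Since $\G$ is the free Dirac evolution of the initial datum $g\cutoff$, which is supported in $\overline{B_\delta}(\bzero)$, finite propagation speed of the free Dirac equation (as invoked in the proof of Lemma~\ref{lem:supp}) forces $\supp\G(\tau,\cdot)\subseteq \overline{B_{\delta+|\tau|}}(\bzero)$. Combined with $\supp\cutoff(\cdot-\bx_k)\subseteq\overline{B_\delta}(\bx_k)$, the integrand is supported in $\overline{B_\delta}(\bx_k)\cap\overline{B_{\delta+|x_k^0-x_{k'}^0|}}(\bx_{k'})$, which is empty precisely when $\|\bx_k-\bx_{k'}\|>2\delta+|x_k^0-x_{k'}^0|$---exactly the safety-distance inequality that $x_k\in P_j$ and $x_{k'}\in P_{j'}$ satisfy by hypothesis. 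The same argument applies to the $k\leftrightarrow k'$ integral. Both contractions therefore vanish, giving $[K_{x_k},K_{x_{k'}}]=0$ at $q^4$ and, with the previous cases, $[K_j^P,K_{j'}^P]=0$.
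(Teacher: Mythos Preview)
Your proposal is correct and follows essentially the same approach as the paper: the paper derives Lemma~\ref{lem:commutator} from Lemma~\ref{lem:commutator2}, which establishes exactly the three pairwise commutator identities you prove (the $yy$ case trivially, the $xy$ case from the defining Dirac equation for $\G$, and the $xx$ case via the support bound $\supp\,\G(\tau,\cdot)\subseteq\overline{B_{\delta+|\tau|}}(\bzero)$ against the safety distance). Your treatment is slightly more explicit about the $\mathcal{A}\mathcal{A}$, $\mathcal{C}\mathcal{C}$, and contraction bookkeeping, but the argument and its key geometric input are identical.
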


This follows from

\begin{lemma}\label{lem:commutator2}
On $C^\infty(\cQ^{s4})$, the commutator $\bigl[K_j^P, K_{j'}^P \bigr]$ vanishes at every $q^4\in\widehat{\sS_\delta^P}$. Even more,
\begin{subequations}
\begin{align}
\bigl[ i\partial_{y^0_\ell}-H_{y_\ell}, i\partial_{y^0_{\ell'}}-H_{y_{\ell'}} \bigr]&=0 \label{comyy}\\
\bigl[ i\partial_{x^0_k}-H_{x_k}, i\partial_{y^0_\ell}-H_{y_\ell} \bigr]&=0 \label{comxy}\\
\intertext{at every $q^4\in\cQ^4$, and}
\bigl[ i\partial_{x^0_k}-H_{x_k}, i\partial_{x^0_{k'}}-H_{x_{k'}} \bigr]&=0 \label{comxx}
\end{align}
whenever $x_k$ and $x_{k'}$ keep their safety distance.
\end{subequations}
\end{lemma}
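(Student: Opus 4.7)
The plan is to verify the three relations by direct computation on smooth $\Phi\in C^\infty(\cQ^{s4})$, using the explicit formulas \eqref{multi7}--\eqref{multi8} and decomposing $H_{x_k}=H_{x_k}^{\free}+\mathrm{cre}_k+\mathrm{ann}_k$ into its free, creation, and annihilation pieces; I then apply each pair of operators in both orders and compare. Relation \eqref{comyy} is immediate: $H_{y_\ell}$ and $H_{y_{\ell'}}$ are free Dirac operators differentiating disjoint spatial coordinates and carrying no time dependence, so they commute with each other and with both time derivatives, and every contribution to the commutator vanishes pointwise.

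For \eqref{comxy}, since $[i\partial_{x^0_k},i\partial_{y^0_\ell}]=0$ and $H_{y_\ell}$ has no $x_k^0$-dependence, the claim reduces to $[H_{x_k},\,H_{y_\ell}-i\partial_{y^0_\ell}]=0$. The free part of $H_{x_k}$ commutes with $H_{y_\ell}^{\free}$ (disjoint variables) and with $i\partial_{y^0_\ell}$ (no $y_\ell^0$-dependence). The creation piece commutes with both as well: the freshly created particle carries time $x_k^0\neq y_\ell^0$ and the integrand $\cutoff(\tilde\by-\bx_k)$ does not involve $y_\ell$, so $H_{y_\ell}^{\free}$ and $i\partial_{y^0_\ell}$ slip through the integral to act on $\Phi^{(N+1)}$. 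In the annihilation sum, the $\ell'\neq\ell$ summands contribute zero because $\G(y_{\ell'}-x_k)$ is independent of $y_\ell$ while $\Phi^{(N-1)}$ still carries $y_\ell$, so both operators pass through $\G$ identically. The subtle and critical $\ell'=\ell$ summand has $\Phi^{(N-1)}(y^{4N}\setminus y_\ell)$ independent of $y_\ell$, so $H_{y_\ell}^{\free}$ and $i\partial_{y^0_\ell}$ both land only on $\G_{r_k r'_k s_\ell}(y_\ell-x_k)$, producing $(H^{\free}_y\G)(y_\ell-x_k)$ and $(i\partial_0\G)(y_\ell-x_k)$ respectively; these agree exactly by the defining PDE \eqref{Gdeltadef1} of $\G$. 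This is precisely the cancellation for which $\G$ is constructed as a solution of the free Dirac equation.

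For \eqref{comxx}, an analogous reduction shows the claim amounts to $[H_{x_k},H_{x_{k'}}]=0$. The free-free part vanishes by disjointness of variables, and each free-interaction cross term vanishes by differentiating under the integral, since $H_{x_k}^{\free}$ has no $\bx_{k'}$-dependence and the integrand of $H_{x_{k'}}^{\inter}$ does not involve $\bx_k$. Inside $[H_{x_k}^{\inter},H_{x_{k'}}^{\inter}]$, the $[\mathrm{cre}_k,\mathrm{cre}_{k'}]$ and $[\mathrm{ann}_k,\mathrm{ann}_{k'}]$ pieces vanish by the bosonic symmetry of $\Phi$ under exchange of the two created (resp.\ annihilated) $y$-particles. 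The main obstacle is the combination $[\mathrm{cre}_k,\mathrm{ann}_{k'}]-[\mathrm{cre}_{k'},\mathrm{ann}_k]$. Expanding $\mathrm{cre}_k\mathrm{ann}_{k'}$ and $\mathrm{ann}_{k'}\mathrm{cre}_k$ explicitly, the contributions in which the annihilated particle is one of the pre-existing $y_1,\ldots,y_N$ cancel pairwise, leaving only the ``self-contraction'' in which $\mathrm{ann}_{k'}$ hits the $y$-particle freshly created by $\mathrm{cre}_k$. That residue is a multiplication operator on $\Phi^{(N)}$ with coefficient essentially
\be
\int_{B_\delta(\bx_k)} d^3\tilde\by\,\cutoff(\tilde\by-\bx_k)\,\G\bigl(x_k^0-x_{k'}^0,\,\tilde\by-\bx_{k'}\bigr).
\ee
The cut-off forces $\|\tilde\by-\bx_k\|\leq\delta$, while $\G$, propagating from initial data supported in $B_\delta(\bzero)$ under the free Dirac equation (which has strict unit propagation speed), is supported in $\|\by\|\leq|t|+\delta$. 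The triangle inequality then yields $\|\bx_k-\bx_{k'}\|\leq|x_k^0-x_{k'}^0|+2\delta$ on the joint support, directly contradicting the safety-distance hypothesis; hence the integral vanishes pointwise, and the same support argument applied to $[\mathrm{cre}_{k'},\mathrm{ann}_k]$ completes the proof. The conceptual content is that the definition \eqref{Gdeltadef1} of $\G$ and the choice of the $2\delta$ $x$-$x$ safety distance are tailored precisely to produce these two cancellations.
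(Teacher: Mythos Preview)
Your proof is correct and follows essentially the same route as the paper's: isolate the one nonvanishing piece of each commutator and kill it using, respectively, the free Dirac equation \eqref{Gdeltadef1} for $\G$ (in the $xy$ case) and the support bound $\supp\,\G(t,\cdot)\subseteq\overline{B_{|t|+\delta}(\bzero)}$ combined with the $2\delta$ safety distance (in the $xx$ case). The paper simply writes down the residual expressions \eqref{eq:xxcomm_tilde} and states why they vanish; you spell out the decomposition $H_{x_k}=H_{x_k}^{\free}+\mathrm{cre}_k+\mathrm{ann}_k$ and track the cancellations term by term, which is the same computation made explicit. Two inessential remarks: your labels ``cre'' and ``ann'' are reversed relative to the Fock-space convention (your $\mathrm{cre}_k$ is the term involving $\Phi^{(N+1)}$, i.e., the annihilation operator $\ba$ of \eqref{eq:a}); and the aside ``$x_k^0\neq y_\ell^0$'' in your $xy$ argument is irrelevant and need not hold---what matters is only that the freshly inserted particle sits at slot $N{+}1$, so $H_{y_\ell}$ and $i\partial_{y^0_\ell}$ do not touch it.
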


\begin{proof}
The $yy$ commutator \eqref{comyy} can easily be seen to vanish, as the free Hamiltonians are time-independent and commute everywhere.

The $xy$ commutator \eqref{comxy} yields
\begin{multline}
\bigl[ i\partial_{x^0_k}-H_{x_k}, i\partial_{y^0_\ell}-H_{y_\ell} \bigr] =\\
- \frac{1}{\sqrt{N}}\sum_{r'_k} \Bigl((i\partial_{y^0_\ell}-H_{y_\ell}^\free)\G (y_\ell-x_k)\Bigr)_{r_k r'_k s_\ell} \Phi_{r'_k\widehat{s_\ell}}\bigl(x^{4M}, y^{4N} \backslash y_\ell\bigr)\,.
\end{multline}
It vanishes for all $\Phi$ if and only if the $\G$'s are chosen to be solutions of the free Dirac equation \eqref{Gdeltadef1}.

The $xx$ commutator \eqref{comxx} can be computed to be
\be
\begin{aligned}
	&\Bigl[ i\partial_{x^0_k}-H_{x_k}, i\partial_{x^0_{k'}}-H_{x_{k'}} \Bigr] = \Bigl[ H_{x_k}^{\inter}, H_{x_{k'}}^{\inter} \Bigr] \\
&= \sum_s\int d^3\by \, \cutoff(\by-\bx_k) \Bigl[g^*_{r'_k r_k s} \, \G_{r_{k'} r'_{k'} s}\bigl(x^0_k-x^0_{k'},\by-\bx_{k'}\bigr)\\
& \hspace{40mm} -g_{r_k r'_k s} \, \G^*_{r'_{k'} r_{k'} s}\bigl(x^0_k-x^0_{k'},\by-\bx_{k'}\bigr) \Bigr]
\end{aligned}
\label{eq:xxcomm_tilde}
\ee
(with $r'$ indices acting on $\Phi$). 
Since $\supp_3 \, \cutoff \subseteq \overline{B_\delta(\bzero)}$ and $\supp_3 \, \G_{rr's}(t,\cdot)\subseteq \overline{B_{\delta+|t|}(\bzero)}$, the last expression will vanish if the first safety distance condition in \eqref{eq:SdeltaN} holds true, i.e., if
\be
	\Vert \bx_k - \bx_{k'} \Vert > \vert x^0_k - x^0_{k'} \vert + 2 \delta.
\label{eq:safetydistance1}
\ee
(In the special cases that $g_{rr's}=\delta_{rr'} \, g_s$ or $g_{rr's}=h_{rr'} g_s$ with self-adjoint matrix $h$, \eqref{eq:xxcomm_tilde} also vanishes when $x^0_k=x^0_{k'}$, but we do not use this fact.)
\end{proof}

\subsection{Existence}
\label{subsec:comp}
\label{subsec:exist}

We now construct a solution $\Phi\in C^\infty(\sS^s_\delta)$ to the multi-time equations \eqref{multi34} from initial data $\Psi_0\in\sH_c^\infty$. 
The construction proceeds in a way similar to the proof of Lemma~\ref{lem:unique} (and to the construction in Sections 5.3 and 5.5 of \cite{pt:2013c}). We define $\Phi$ on $\sS_{\delta J}$ by induction over $J$. Put briefly, to obtain $\Phi$ at a configuration with $J$ time values $t_1<t_2<\ldots<t_J$, we consider the families defined by a common time value, first evolve all particles to time $t_1$, then those belonging to families 2 and up to time $t_2$, and so on; see Figure~\ref{fig:IVP1}. Since the number $M_J$ of $x$-particles in family $J$ may be less than $M$, we will have to use the appropriate versions of the Hilbert space $\sH$ and the Lemmas~\ref{lem:selfadjoint} through \ref{lem:smoothsingletime}.

\begin{figure}[hbt]
    \centering
    \def\coordsys at (#1,#2){
\draw[thick,->] (#1-0.2,#2) -- ++(4.3,0) node[anchor=south west] {$ x $};
\draw[thick,->] (#1,#2-1) -- ++(0,4) node[anchor=south east] {$ t $};
\node at (#1-0.5,#2) {0};
}
\begin{tikzpicture}
	\coordsys at (1,7);
	\draw[blue, thick] (0.8,7) -- ++(4.2,0);
	\draw[blue, dotted] (0.9,8) -- ++(2.1,0) node[anchor = south] {$ P_1 $};
	\draw[blue, dotted] (0.9,8.5) -- ++(0.6,0) node[anchor = south] {$ P_2 $};
	\draw[blue, dotted] (0.9,9.25) -- ++(3.85,0)node[anchor = south] {$ P_3 $};
	\node[blue] at (0.5,8) {$ t_1 $};
	\node[blue] at (0.5,8.5) {$ t_2 $};
	\node[blue] at (0.5,9.25) {$ t_3 $};
	\node[blue] at (5,6.7) {$ \Psi_0 $};
	\node[blue] at (3,6.5) {desired};
	\filldraw[fill=blue,draw=blue] (2.5,8) circle (0.1);
	\filldraw[fill=blue,draw=blue] (2.75,8) circle (0.1);
	\filldraw[fill=blue,draw=blue] (3,8) circle (0.1);
	\filldraw[fill=blue,draw=blue] (1.25,8.5) circle (0.1);
	\filldraw[fill=blue,draw=blue] (1.5,8.5) circle (0.1);
	\filldraw[fill=blue,draw=blue] (4.75,9.25) circle (0.1);

	\coordsys at (7,7);
	\draw[blue, thick] (6.8,8) -- ++(4.2,0);
	\node[blue] at (6.5,8) {$ t_1 $};
	\node[blue] at (11.5,8) {$ \Psi_{t_1} $};
	\node[blue] at (9,6.5) {$ \boldsymbol{U}_{\ge 1}(t_1) $};
	\filldraw[fill=blue,draw=blue] (8.5,8) circle (0.1);
	\filldraw[fill=blue,draw=blue] (8.75,8) circle (0.1);
	\filldraw[fill=blue,draw=blue] (9,8) circle (0.1);
	\filldraw[fill=blue,draw=blue] (7.25,8.5) circle (0.1);
	\filldraw[fill=blue,draw=blue] (7.5,8.5) circle (0.1);
	\filldraw[fill=blue,draw=blue] (10.75,9.25) circle (0.1);
	\draw[->,line width=1,blue] (8.75,7.02) -- node[right] {$\boldsymbol{H}_1$} ++(0,0.9) ;
	\draw[->,line width=1,blue] (7.385,7.02) -- node[right] {$\boldsymbol{H}_2$} ++(0,0.9) ;
	\draw[->,line width=1,blue] (10.75,7.02) -- node[left] {$\boldsymbol{H}_3$} ++(0,0.9) ;

	\coordsys at (1,1.5);
	\filldraw[fill=gray,fill opacity = 0.5,dotted] (2.5,2.5) -- (3,2.5) -- (3.5,3) -- (2,3) -- cycle;
	\draw[blue] (6.8,2.5) -- ++(4.2,0);
	\draw[blue] (6.8,3) -- ++(4.2,0);
	\draw[thick,blue] (6.8,3.75) -- ++(4.2,0);
	\node[blue] at (0.5,3) {$ t_2 $};
	\node[blue] at (5.3,3) {$ \Psi_{t_2} $};
	\node[blue] at (3,1) {$ \boldsymbol{U}_{\ge 2}(t_2 - t_1) $};
	\filldraw[fill=gray!50!white,draw=blue] (2.5,2.5) circle (0.1);
	\filldraw[fill=gray!50!white,draw=blue] (2.75,2.5) circle (0.1);
	\filldraw[fill=gray!50!white,draw=blue] (3,2.5) circle (0.1);
	\filldraw[fill=blue,draw=blue] (1.25,3) circle (0.1);
	\filldraw[fill=blue,draw=blue] (1.5,3) circle (0.1);
	\filldraw[fill=blue,draw=blue] (4.75,3.75) circle (0.1);
	\draw[->,line width=1,blue] (10.75,3.02) -- node[left] {$\boldsymbol{H}_3$} ++(0,0.6) ;
	
	\coordsys at (7,1.5);
	\filldraw[fill=gray,fill opacity = 0.5,dotted] (7.3,3.75) -- (8.5,2.5) -- (9,2.5) -- (10.25,3.75) --cycle;
	\filldraw[fill=gray,fill opacity = 0.5,dotted] (6.8,3.75) -- (6.8,3.5) -- (7.25,3) -- (7.5,3) -- (8.25,3.75) --cycle;
	\draw[blue] (0.8,2.5) -- ++(4.2,0);
	\draw[thick,blue] (0.8,3) -- ++(4.2,0);
	\node[blue] at (6.5,3.75) {$ t_3 $};
	\node[blue] at (11.3,3.75) {$ \Psi_{t_3} $};
	\node[blue] at (9,1) {$ \boldsymbol{U}_{\ge 3}(t_3 - t_2) $};
	\filldraw[fill=gray!50!white,draw=blue] (8.5,2.5) circle (0.1);
	\filldraw[fill=gray!50!white,draw=blue] (8.75,2.5) circle (0.1);
	\filldraw[fill=gray!50!white,draw=blue] (9,2.5) circle (0.1);
	\filldraw[fill=gray!50!white,draw=blue] (7.25,3) circle (0.1);
	\filldraw[fill=gray!50!white,draw=blue] (7.5,3) circle (0.1);
	\filldraw[fill=blue,draw=blue] (10.75,3.75) circle (0.1);
	\draw[->,line width=1,blue] (1.385,2.52) -- node[right] {$\boldsymbol{H}_2$} ++(0,0.4) ;
	\draw[->,line width=1,blue] (4.75,2.52) -- node[left] {$\boldsymbol{H}_3$} ++(0,0.4) ;

\end{tikzpicture}
    \caption{Time evolution for a partition $P=\{ P_1,P_2,P_3 \}$ into three sets. At each $ t_j $, the time evolution for particles in $ P_j $ generated by $ \bH_j $ is switched off. \x{The shaded regions indicate where the demand of being spacelike forbids particle coordinates of the multi-time configuration.}}
    \label{fig:IVP1}
\end{figure}

For $J=1$, we obtain $\Phi\equiv \Psi$ on $\sS^s_{\delta 1}$ from the single-time evolution; by Lemmas~\ref{lem:diffable} and \ref{lem:smootht}, it is smooth on $\sS_{\delta 1}\cong \RRR\times \cQ^3$ and belongs to $\sH_c^\infty$ for every fixed $t\in\RRR$.

The cases $J=2$ and $J=3$ allow a particularly simple construction that we want to describe first as it will play a role also for $J>3$; we describe it for $J=2$. (The same strategy can be applied if the number $M$ of $x$-particles is $\leq 3$.) Suppose that two regions $G_1,G_2\subset \RRR^3$ and times $t_1\leq t_2\in\RRR$ are such that $\{t_j\}\times G_j$ are $2\delta$-spacelike separated, i.e., for every $\bx_1\in G_1$ and $\bx_2\in G_2$, $\|\bx_1-\bx_2\|>2\delta+|t_1-t_2|$. Then for any $t\in [t_1,t_2]$, the sets $G'_j:= \Gr(G_j,|t_j-t|+\delta)$ are disjoint, and $\Phi$ on configurations concentrated in $\{t_1\}\times G_1 \cup \{t_2\}\times G_2$ (say with $M_1$ $x$-particles in $\{t_1\}\times G_1$ and $M_2$ in $\{t_2\}\times G_2$) is determined by initial data $\Psi_t\in \sH_{M_1}(G'_1) \otimes \sH_{M_2}(G'_2)$ and in fact given by
\be\label{PhiJ2}
\Phi(t_1,\cdot,t_2,\cdot) = W_{1,t_1-t} \otimes W_{2,t_2-t} \: \Psi_t
\ee
with $W_t$ as in \eqref{Wdef} and $W_{j,t}$ acting on $\sH_{M_j}(G'_j)$. Since, as explained before \eqref{Cdef}, for every configuration in $\sS_{\delta 2}$ there is $\varepsilon>0$ so that the union of the $\varepsilon$-balls around each particle are still $2\delta$-spacelike separated, $\Phi$ can be determined from $\Psi$ via \eqref{PhiJ2} everywhere on $\sS_{\delta 2}$. For $J>3$ time values, this strategy cannot be directly applied because for some $q^4\in\sS_{\delta J}$, there is no $t$ at which the $G'_j=\Gr(q_j,|t_j-t|+\delta)$ would all be mutually disjoint; for example, $q^4=(x_1...x_4)$ with $x_1=(0,0,0,0), x_2=(1,2,0,0), x_3=(3,5,0,0), x_4=(4,7,0,0)$. 

\bigskip

We now turn to the strategy for general $J$ by induction with anchor $J=1$. The induction hypothesis asserts that $\Phi\in C^\infty(\sS^s_{\delta J})$ is well defined and satisfies \eqref{multi34} and \eqref{summability} as well as a further condition that we will formulate in \eqref{cond} below. We assume it for $J$ and prove it for $J+1$. 

On $\sS_{\delta J+1}\setminus \sS_{\delta J}$, let us label the time variables so that $t_1<t_2<\ldots < t_{J+1}$. The strategy is to solve the multi-time equation \eqref{multi34} for $j=J+1$ in the variable $t_{J+1}$ (while keeping $t_1,\ldots, t_{J}$ unchanged) from initial data given by $\Phi$ on $\sS_{\delta J}$, i.e., for $t_{J+1}=t_{J}$. In fact, we solve it for $4^{M_1+N_1+...+M_{J}+N_{J}}$ functions, the components of $\Phi$ for different values of the spin indices for all particles in the families $1,\ldots, J$, where $(M_j,N_j)$ are the particle numbers in family $j$. Since the indices of other families are not acted upon in \eqref{multi34}, \eqref{multi34} can be solved separately for each choice of values for those indices. By propagation locality, the solution on $\sS(t_1...t_{J+1})=\cup_{N=0}^\infty \sS^{(N)}(t_1...t_{J+1})$ as in \eqref{St1tJ} is determined by initial data on $\sS(t_1...t_{J})$. We want to use Lemma~\ref{lem:smoothsingletime} to conclude that $\Phi$ exists and is smooth where $t_1<t_2<\ldots <t_{J+1}$. To this end, we regard $t_{J+1}$ as the variable $t$ of Lemma~\ref{lem:smoothsingletime}; we consider the evolution separately for every fixed choice of $M_1,N_1,\ldots, M_{J},N_{J}$; we regard the $t_1,\ldots,t_{J}$ and the configurations $q_1,\ldots,q_{J}$ of the families 1 through $J$ as the parameters $\lambda$ in Lemma~\ref{lem:smoothsingletime}, so $d=J+3M_1+3N_1+\ldots+3M_{J}+3N_{J}$. We know that the time evolution preserves the compactness of the 3-support in all space variables; however, in order to be able to apply Lemma~\ref{lem:smoothsingletime}, we need compact support in all $\lambda$ variables, including $t_1,\ldots,t_J$. Since wave functions do not have compact support on the time axis, we need to cut off the time dependence; this does not cause any harm because the solution provided by Lemma~\ref{lem:smoothsingletime} is obtained by solving the 1-time evolution for every value of $\lambda$ separately. That is, we apply Lemma~\ref{lem:smoothsingletime} to the function $f(t_1)\cdots f(t_J)\, \Phi(t_1,q_1,\ldots, t_J,q_J,q_{J+1})$ instead of $\Phi(t_1,q_1,\ldots, t_J,q_J,q_{J+1})$, where $f:\RRR\to\RRR$ is a smooth function such that $f=1$ on $[-T,T]$ and $f=0$ outside $[-2T,2T]$; for the desired result at $q^4\in \sS_{\delta, J+1}$, the value of $T$ must be chosen larger than all absolute time values in $q^4$. To fulfill the hypotheses of Lemma~\ref{lem:smoothsingletime}, we need that the smoothly cut off wave function lies in $\sH_{cd}^\infty$. We will assume a little more as part of the induction hypothesis:
\be\label{cond}
\forall f\in C_c^\infty(\RRR):\: f(t_1)\cdots f(t_J)\, \Phi\bigl(t_1,q_1,\ldots, t_J,q_J,q_{J+1} \bigr) \text{ possesses an extension in } \sH_{cd}^\infty.
\ee
(We talk about extension because $\Phi(t_1,q_1^{s3},...)$ is not defined for \emph{every} $q_1^{s3}$ but only those $\delta$-spacelike from the other families.) Then Lemma~\ref{lem:smoothsingletime} applies, and we obtain the desired function $\Phi$ on $\sS_{\delta, J+1}$ where all $|t_j|<T$, and on all of $\sS_{\delta, J+1}$ by letting $T\to\infty$. 

It remains to verify four things: 
(i)~that $\Phi$ satisfies the summability condition \eqref{summability};
(ii)~that $\Phi$ satisfies the multi-time equations \eqref{multi34} for \emph{all} $j$, not just $j=J+1$;
(iii)~that $\Phi$ is smooth (the issue here is whether the transition from $t_J<t_{J+1}$ to $t_J>t_{J+1}$ is smooth); and
(iv)~that \eqref{cond} holds on $\sS_{\delta,J+1}$.

\begin{itemize}
\item[(i)] follows from the unitarity of the single-time evolution and Lemma~\ref{lem:diffable}.

\item[(ii)] Since $\Phi$ was constructed on $\sS_{\delta,J+1}$ using the unitary time evolution in $t_{J+1}$, it is a solution of \eqref{multi34} for $j=J+1$ where $t_1<\ldots<t_J<t_{J+1}$ (in the strong sense for the same reasons as discussed after \eqref{1time}, i.e., $K_{J+1}\Phi=0$). Now we show that it is also a solution of \eqref{multi34} for every $j\leq J$, i.e., that $K_j\Phi=0$. By Lemma~\ref{lem:commutator}, $K_{J+1}K_j=K_jK_{J+1}$ on $\sS_{\delta,J+1}$, so 
\be\label{KJ+1KjPhi}
K_{J+1}K_j\Phi=K_jK_{J+1}\Phi=K_j 0=0\,.
\ee
By induction hypothesis, $K_j\Phi=0$ on $\sS_{\delta J}$. By \eqref{cond}, \eqref{HcLambdainftynorm}, and Lemma~\ref{lem:diffable}, $K_j\Phi$ also satisfies \eqref{summability} with $J+1$ instead of $J$ on $\sS_{\delta,J+1}$. By Lemma~\ref{lem:1timeunique}, $K_j\Phi=0$ on $\sS_{\delta,J+1}$.

\item[(iii)] By Lemma~\ref{lem:smoothsingletime}, $\Phi$ is smooth where $t_1\leq \ldots\leq t_J<t_{J+1}$, but it is not obvious that $\Phi$ is smooth on the set where $t_J=t_{J+1}$. To be sure, Lemma~\ref{lem:smoothsingletime} provides a smooth function of $t$ for $t>0$, $t=0$, and $t<0$, but when we reduce the time variable of the family $P_{J+1}$ to values less than $t_J$ we reach configurations for which $P_J$ and $P_{J+1}$ would be labeled $P_{J+1}$ and $P_J$ according to our prescription that the times are labeled increasingly. Still, for a fixed choice of the partition (in particular $P_J$ and $P_{J+1}$) and in a neighborhood of a 4-configuration in which $t_J=t_{J+1}$ while $q_J$ and $q_{J+1}$ keep the safety distance, the constructed function $\Phi$ agrees with \eqref{PhiJ2} with $t_J,t_{J+1}$ (the time variables of families $P_J$, $P_{J+1}$) playing the roles of $t_1,t_2$, in fact for $t_J>t_{J+1}$ as well as for $t_J<t_{J+1}$. But by Lemma~\ref{lem:smoothsingletime}, $\Phi$ given by \eqref{PhiJ2} is a smooth function. Hence, the constructed function $\Phi$ is smooth everywhere in $\sS_{\delta, J+1}$.

\item[(iv)] Statement \eqref{cond} for $J+1$ involves splitting $q_{J+1}$ into two families, $q_{J+1}=(\tilde q_{J+1},\tilde q_{J+2})$, considering them at time $t_{J+1}$, regarding $t_{J+1}$ and $\tilde q_{J+1}$ as parameters, and introducing a further factor $f(t_{J+1})$. To check the definition of $\sH_{cd}^\infty$, we know that $f\cdots f\, \Phi$ is smooth, has compact support in $t_{J+1}$ because of $f$, compact support in $t_1,\ldots,q_J$ by assumption, and compact 3-support in $q_{J+1}$ because, by Lemma~\ref{lem:supp}, the 3-support can grow (from that of $\Phi|_{\sS_{\delta J}}$) by at most $2T+\delta$; so it remains to check \eqref{HcLambdainftynorm}: Indeed, for any given degree $n$ of differentiation, at most the first $n$ derivatives of $f$ can occur (independently of $N$), all of which have finite $L^2$ norm; moreover, $t_{J+1}$-derivatives in \eqref{HcLambdainftynorm} (without loss of generality the rightmost derivatives in $\partial^\alpha$) can be replaced by $-iH_{J+1}$ as in \eqref{multi56}; now each term in $H_{J+1}\Phi^{(N)}$ involves either a spatial derivative of $\Phi^{(N)}$ (which obeys \eqref{HcLambdainftynorm} by Lemma~\ref{lem:diffable}) or $\int \cutoff\,\Phi^{(N+1)}$ or $\cutoff\,\Phi^{(N-1)}$, whose contributions to \eqref{HcLambdainftynorm} involve derivatives of $\cutoff$ of up to $n$-th order and a factor of $\sqrt{N}\leq N$ (so $m\to m+1$), so they remain finite since $\Phi$ itself obeys \eqref{HcLambdainftynorm} at every $t=t_{J+1}$ by Lemma~\ref{lem:diffable}. Thus, \eqref{cond} holds on $\sS_{\delta, J+1}$.
\end{itemize}
This completes the construction of $\Phi$ on $\sS_{\delta}$ and the proof that $\Phi$ is smooth and solves the multi-time equations \eqref{multi34}.

\bigskip

\noindent{\bf Remarks.}
\begin{enumerate}
\setcounter{enumi}{\theremarks}
\item One might be tempted to think that the following procedure yields an alternative construction of the solution. The idea is to construct a function $\Phi$ on $\cQ^4$ which agrees with the desired solution on $\sS_\delta$ (or $\widehat{\sS_\delta}$) by writing the multi-time equations in the form
\begin{align}
i\partial_{x^0_k}\Phi(q^4) &= H^{\free}_{x_k} \Phi(q^4) 
+ \sqrt{N+1} \sum_{r_k',s_{N+1}} \int_{B_{\delta+|x^0_k|}(\bx_k)} \hspace{-12mm} d^3\tilde\by\:\: \G^*_{r'_k r_k s_{N+1}}\bigl(-x^0_k,\tilde\by-\bx_k\bigr) \:\times \nonumber\\
&\qquad \times \:
\Phi^{(N+1)}_{r_k',s_{N+1}}\Bigl(x^{4M},\bigl(y^{4N}, (0, \tilde\by)\bigr)\Bigr) \nonumber\\
& + \frac{1}{\sqrt{N}} \sum_{\ell=1}^N \sum_{r'_k} \G_{r_k r'_k s_\ell}\bigl(y_\ell - x_k\bigr) \: \Phi_{r'_k\widehat{s_\ell}}^{(N-1)}\bigl(x^{4M}, y^{4N} \backslash y_\ell\bigr)  \label{multi9}\\[4mm]
i\partial_{y^0_{\ell}}\Phi(q^4) &= H_{y_\ell}^\free \Phi(q^4)\label{multi10}
\end{align}
and to integrate them in a particular order. Specifically, solve first the equation for $x_1$ from 0 to any desired $x_1^0$, keeping all other $x_k^0$ and $y_\ell^0$ at 0; then, solve the equation for $x_2$, and then for $x_3,\ldots,x_M$, keeping all $y_\ell^0$ at 0; finally, solve \eqref{multi10} for all $y$s. However, the function $\Phi$ thus obtained will not be a solution of our multi-time equations \eqref{multi34} on $\sS_\delta$; indeed, $\Phi$ does not even agree with $\Psi$ on the set $\sS_{\delta 1}$ of simultaneous configurations. That is because when all $x_k^0=t=y_\ell^0$, then
\begin{align}
\Phi &= e^{-i\bH_{y}t} \, \mathcal{T}e^{-i\int_0^t\bH_{x_M}(s)\, ds} \cdots \mathcal{T}e^{-i\int_0^t\bH_{x_1}(s)\, ds} \: \Psi_0 ~~\text{whereas}\\
\Psi &= e^{-i(\bH_{y}+\bH'_{x_1}+...+\bH'_{x_M})t} \: \Psi_0 
\end{align}
with $\mathcal{T}e$ the time ordered exponential, $\bH_{y}=\bH_{y_1}+...+\bH_{y_N}$ on the $N$-sector and $\bH_{x_k}(x_k^0)$, $\bH'_{x_k}$, and $\bH_{y_\ell}$ the right-hand sides of \eqref{multi9}, \eqref{multi7}, and \eqref{multi10}. That the expressions are not the same is strongly suggested by the facts that the $\bH_{x_k}(s)$ do not commute with each other (and $e^A e^B\neq e^{A+B}$ when $AB\neq BA$) and $\bH'_{x_k}\neq \bH_{x_k}$.
\end{enumerate}
\setcounter{remarks}{\theenumi}

\label{subsec:char}

\begin{proof}[Proof of Theorem~\ref{thm:1}]
Theorem~\ref{thm:1} follows by putting together the statements obtained about the existence of solutions with Lemmas~\ref{lem:unique} and \ref{lem:4supp}. It remains to prove the permutation (anti-)symmetry of $\Phi$. Since all Hamiltonians and time evolution procedures are invariant under relabeling, $\Phi$ with permuted labels is the solution with the correspondingly permuted initial condition. Thus, if permuting two $x$'s changes the sign of the initial data, then the corresponding permutation in $\Phi$ changes the sign of $\Phi$; likewise, if permuting two $y$'s leaves $\Psi_0$ invariant, then the corresponding permutation in $\Phi$ leaves $\Phi$ invariant, as claimed.
\end{proof}

\subsection{Proof of Remark~\ref{rem:Shat}}
\label{sec:Shat}

\begin{proof}
Given $\Phi$ on $\sS_{\delta}$, it can be extended to $\widehat{\Phi}$ on $\widehat{\sS_{\delta}}$ by freely varying all $y_\ell$, i.e.,
\be
\widehat\Phi^{(N)}(x^{4M},\hat{y}^{4N})= e^{-i\bH_{y_1}(\hat y_1^0-y_1^0)} \cdots e^{-i\bH_{y_N}(\hat y_N^0-y_N^0)} \: \Phi^{(N)}(x^{4M},y^{4N})\,.
\ee
The resulting $\widehat{\Phi}$ is smooth, for example by Lemma~\ref{lem:smoothsingletime}. Since the $\bH_{y_\ell}$ (and the $H_{y_\ell}$) commute pairwise, $\widehat\Phi$ satisfies the multi-time equations \eqref{multi56}. Since the strong solution of the 1-particle Dirac equation is unique and $\Phi$ is unique by Theorem~\ref{thm:1} and \eqref{summability}, $\widehat\Phi$ is unique.
\end{proof}

\section{Conclusions}
\label{sec:conclusions}

In this paper, we have provided a rigorous study of a system of 
multi-time equations \eqref{multi34} for a model quantum field theory with UV cut-off. 
We have proved the existence and uniqueness of solutions on the set $\sS_\delta$ of 
$\delta$-spacelike configurations (and even on a larger set $\widehat{\sS}_\delta$ allowing arbitrary 
points for bosons), and thus the consistency of the multi-time 
equations. This result supports the viability of multi-time wave 
functions as a covariant expression of the quantum state in the 
particle-position representation, although the model considered here is 
not fully covariant, partly because of the UV cut-off.

Our proof is one of the first rigorous consistency proofs for multi-time 
formulations of quantum field theories. Only two other results of this 
kind are known to date: First, for a similar set of equations, first 
proposed by Dirac, Fock, and Podolsky \cite{dfp:1932} and involving a 
fixed number of time variables, consistency was recently proved in 
\cite{ND:2019}. And second, for a model with a variable number of time 
variables in 1+1 dimensions and with a cut-off in the particle number, 
consistency was recently proved in \cite{LN:2018}.

Since we considered solutions in the classical sense (i.e., 
differentiable functions, rather than weak derivatives), we also had to 
prove smoothness of the solutions. Furthermore, our proof establishes in 
particular that the non-rigorous arguments for consistency in 
\cite{pt:2013c} also apply rigorously. In fact, it turns out that all 
considerations of \cite{pt:2013c} are rigorously valid if formulated 
appropriately in view of the UV cut-off. In particular, the model 
satisfies, up to a tolerance of the size $\delta$ of the UV cut-off, the 
conditions ``propagation locality'' and ``interaction locality'' that 
played important roles for the derivation of Born's rule on arbitrary 
Cauchy surfaces in \cite{LT:2017}.

For the future, it would be of interest to move towards more realistic 
models of quantum field theory and to obtain a multi-time formulation of 
quantum electrodynamics (QED). For example, we believe that a consistent 
multi-time formulation of the Landau-Peierls model of QED \cite{LP:1930} 
is possible and will have the advantage of allowing to switch easily 
between electromagnetic field tensors and vector potentials, which is 
not possible in the single-time formulation used by Landau and Peierls.

\bigskip

\noindent\textit{Acknowledgments.}
We are grateful to Dirk-Andr\'e Deckert, Matthias Lienert, S\"oren Petrat, and Stefan Teufel 
for helpful discussions.


\begin{thebibliography}{29}

\bibitem{adams}
R.A. Adams and J.J.F. Fournier:
	{\it Sobolev Spaces}.
	Academic Press (2003)

\bibitem{arai}
A. Arai:
	A particle-field Hamiltonian in relativistic quantum electrodynamics.
	{\it Journal of Mathematical Physics} {\bf 41}: 4271 (2000)

\bibitem{BB96}
\x{I. Bialynicki-Birula:
	Photon Wave Function.
	Pages 245--294 in E. Wolf (editor): 
	{\it Progress in Optics XXXVI}. Amsterdam: Elsevier (1996)
	\url{https://arxiv.org/abs/quant-ph/0508202}}

\bibitem{bloch:1934}
F. Bloch:
	Die physikalische Bedeutung mehrerer Zeiten in der Quantenelektrodynamik.
	{\em Physikalische Zeitschrift der Sowjetunion} {\bf 5}: 301--315 (1934)

\bibitem{CVA:1983}
H.W. Crater and P. Van Alstine:
	Two-body Dirac equations.
	{\it Annals of Physics} {\bf 148}: 57--94 (1983)

\bibitem{DM14}
D.-A. Deckert and F. Merkl:
	Dirac equation with external potential and initial data on Cauchy surfaces. 
	{\it Journal of Mathematical Physics} {\bf 55}: 122305 (2014)
	\url{http://arxiv.org/abs/1404.1401}

\bibitem{dimock} 
J. Dimock:
	Dirac Quantum Fields on a Manifold.
	\textit{Transactions AMS} \textbf{269}: 133--147 (1982)

\bibitem{dirac:1932}
P.A.M. Dirac:
	Relativistic quantum mechanics.
	{\em Proceedings of the Royal Society of London A} {\bf 136}: 453--464 (1932)

\bibitem{dfp:1932}
P.A.M. Dirac, V.A. Fock, and B. Podolsky:
	On quantum electrodynamics.
	{\em Physikalische Zeitschrift der Sowjetunion} {\bf 2(6)}: 468--479 (1932)
Reprinted in J. Schwinger (editor), {\em Selected Papers on Quantum
Electrodynamics}, New York: Dover (1958)

\bibitem{DV82b}
P. Droz-Vincent:
	Second quantization of directly interacting particles.
	Pages 81--101 in J. Llosa (editor), \textit{Relativistic Action at a Distance: Classical and Quantum Aspects},
	Berlin: Springer-Verlag (1982)

\bibitem{DV85}
P. Droz-Vincent:
	Relativistic quantum mechanics with non conserved number of particles.
	\textit{Journal of Geometry and Physics} {\bf 2}: 101--119 (1985)

\bibitem{Edd1929}
A.S. Eddington:
	The charge of an electron.
	{\it Proceedings of the Royal Society A} {\bf 122(789)}: 358--369 (1929)

\bibitem{farislavine}
W.G. Faris and R.B. Lavine:
	Commutators and Self-Adjointness of Hamiltonian Operators.
	{\it Communications in Mathematical Physics} {\bf 35(1)}: 39--48 (1974)

\bibitem{Fey65} 
\x{R.P.~Feynman: 
	Nobel lecture (1965) 
	\url{http://www.nobelprize.org/nobel_prizes/physics/laureates/1965/feynman-lecture.html}}

\bibitem{Gaunt1929}
J.A. Gaunt:
	The triplets of helium.
	{\it Proceedings of the Royal Society A} {\bf 122(790)}: 513--532 (1929)

\bibitem{huang}
M. Huang:
	Commutators and invariant domains for Schr\"odinger propagators.
	{\it Pacific Journal of Mathematics} {\bf 175(1)}: 83--91 (1996)

\bibitem{KLTZ19}
M. Kiessling, M. Lienert, and S. Tahvildar-Zadeh:
	A Lorentz-Covariant Interacting Electron-Photon System in One Space Dimension.
	\x{To appear in {\it Letters in Mathematical Physics} (2021)}
	\url{http://arxiv.org/abs/1906.03632}

\bibitem{LP:1930} 
L. Landau and R. Peierls:
	Quantenelektrodynamik im Konfigurationsraum.
	{\it Zeitschrift f\"ur Physik} \textbf{62}: 188--200 (1930).
	English translation: Quantum electrodynamics in configuration space. 
	Pages 71--82 in R. H. Dalitz and R. Peierls (editors): 
	{\it Selected Scientific Papers of Sir Rudolf Peierls With Commentary.}
	Singapore: World Scientific (1997)

\bibitem{Lee54}
T.D. Lee:
    Some Special Examples in Renormalizable Field Theory.
    \textit{Physical Review} \textbf{95}: 1329--1334 (1954)

\bibitem{lienert:2015a}
M. Lienert:
	A relativistically interacting exactly solvable multi-time model for two mass-less Dirac particles in 1+1 dimensions.
	{\it Journal of Mathematical Physics} {\bf 56}: 042301 (2015)
	\url{http://arxiv.org/abs/1411.2833}

\bibitem{lienert:2015b}
M. Lienert:
	On the question of current conservation for the two-body Dirac equations of constraint theory.
	{\it Journal of Physics A: Mathematical and Theoretical} {\bf 48}: 325302 (2015)
	\url{http://arxiv.org/abs/1501.07027}

\bibitem{lienert:2015c}
M. Lienert:
	{\it Lorentz invariant quantum dynamics in the multi-time formalism}.
	Ph.D.~thesis, Mathematics Institute, Ludwig-Maximilians University, Munich, Germany (2015)

\bibitem{lienert:2018}
M. Lienert:
	Direct interaction along light cones at the quantum level.
	{\it Journal of Physics A: Mathematical and Theoretical} {\bf 51}: 435302 (2018)
	\url{http://arxiv.org/abs/1801.00060}

\bibitem{LN:2015}
M. Lienert and L. Nickel:
	A simple explicitly solvable interacting relativistic $N$-particle model.
	{\it Journal of Physics A: Mathematical and Theoretical} {\bf 48}: 325301 (2015)
	\url{http://arxiv.org/abs/1502.00917}

\bibitem{LN:2018}
M. Lienert and L. Nickel:
    Multi-time formulation of creation and annihilation of particles via
    interior-boundary conditions.
    {\it Reviews in Mathematical Physics} {\bf 32}: 2050004 (2020)
    \url{http://arxiv.org/abs/1808.04192}

\bibitem{LN:2019}
M. Lienert and M. N\"oth:
    Existence of relativistic dynamics for two directly interacting Dirac
    particles in 1+3 dimensions.
    Preprint (2019)
    \url{http://arxiv.org/abs/1903.06020}

\bibitem{LPT:2017a}
M. Lienert, S. Petrat, and R. Tumulka:
    Multi-Time Wave Functions.
    \textit{Journal of Physics: Conference Series} {\bf 880}: 012006 (2017)
    \url{http://arxiv.org/abs/1702.05282}

\bibitem{LPT:2017b}
M. Lienert, S. Petrat, and R. Tumulka:
    Multi-Time Wave Functions Versus Multiple Timelike Dimensions.
    {\it Foundations of Physics} {\bf 47}: 1582--1590 (2017)
    \url{http://arxiv.org/ abs/1708.03376}

\bibitem{LPT:2020}
\x{M. Lienert, S. Petrat, and R. Tumulka:
	{\it Multi-time Wave Functions: An Introduction.}
	Heidelberg: Springer (2020)}

\bibitem{LT:2017}
M. Lienert and R. Tumulka:
    Born's Rule for Arbitrary Cauchy Surfaces.
    \x{{\it Letters in Mathematical Physics} {\bf 110}: 753--804 (2020)}
    \url{http://arxiv.org/abs/1706.07074}

\bibitem{LT:2018}
M. Lienert and R. Tumulka:
    A new class of Volterra-type integral equations from relativistic quantum physics.
    \x{{\it Journal of Integral Equations and Applications} {\bf 31}: 535--569 (2019)}
    %\url{http://projecteuclid.org/euclid.jiea/1536804036}
    \url{http://arxiv.org/abs/1803.08792}

\bibitem{LT:2019}
M. Lienert and R. Tumulka:
    Interacting relativistic quantum dynamics of two particles on spacetimes with
    a Big Bang singularity.
    {\it Journal of Mathematical Physics} {\bf 60}: 042302 (2019)
    \url{http://arxiv.org/abs/1805.06348}

\bibitem{Mott1929}
N.F. Mott:
	On the interpretation of the relativity wave equation for two electrons.
	{\it Proceedings of the Royal Society A} {\bf 124(794)}: 422--425 (1929)

\bibitem{Nel64}
E. Nelson:
    Interaction of Nonrelativistic Particles with a Quantized Scalar Field.
    \textit{Journal of Mathematical Physics} {\bf 5}: 1190--1197 (1964)

\bibitem{ND:2016}
L. Nickel and D.-A. Deckert:
	Consistency of multi-time Dirac equations with general interaction potentials.
	\textit{Journal of Mathematical Physics} {\bf 57}: 072301 (2016)
	\url{http://arxiv.org/abs/1603.02538}

\bibitem{ND:2019}
L. Nickel and D.-A. Deckert:
	Multi-time dynamics of the Dirac-Fock-Podolsky model of QED.
	{\it Journal of Mathematical Physics} {\bf 60}: 072301 (2019)
    \url{http://arxiv.org/abs/1903.10362}	

\bibitem{p:2010}
S. Petrat:
	{\it Evolution equations for multi-time wavefunctions}.
	Master's thesis, Rutgers, The State University of New Jersey (2010)
	\url{http://dx.doi.org/doi:10.7282/T3SB45GJ}

\bibitem{pt:2013a}
S. Petrat and R. Tumulka:
    Multi-Time Schr\"odinger Equations Cannot Contain Interaction Potentials.
    \textit{Journal of Mathematical Physics} {\bf 55}: 032302 (2014)
    \url{http://arxiv.org/abs/1308.1065}

\bibitem{pt:2013c}
S. Petrat and R. Tumulka:
    Multi-Time Wave Functions for Quantum Field Theory.
    {\it Annals of Physics} {\bf 345}: 17--54 (2014)
    \url{http://arxiv.org/abs/1309.0802}

\bibitem{pt:2013e}
S. Petrat and R. Tumulka:
    Multi-Time Equations, Classical and Quantum.
    \textit{Proceedings of the Royal Society A} {\bf 470(2164)}: 20130632 (2014)
    \url{http://arxiv.org/abs/1309.1103}

\bibitem{pt:2013d}
S. Petrat and R. Tumulka:
    Multi-Time Formulation of Pair Creation.
    \textit{Journal of Physics A: Mathematical and Theoretical} {\bf 47}: 112001 (2014)
    \url{http://arxiv.org/abs/1401.6093}

\bibitem{PBR}
\x{M.F. Pusey, J. Barrett, and T. Rudolph:
	On the reality of the quantum state. 
	{\it Nature Physics} {\bf 8}: 475--478 (2012)
	\url{http://arXiv.org/abs/1111.3328}}

\bibitem{reedsimon1}
M. Reed and B. Simon:
	{\it Methods of Modern Mathematical Physics I: Functional Analysis}.
	Academic Press (1980)

\bibitem{reedsimon2}
M. Reed and B. Simon:
	{\it Methods of Modern Mathematical Physics II: Fourier Analysis, Self-adjointness}.
	Academic Press (1975)

\bibitem{reedsimon4}
M. Reed and B. Simon:
	{\it Methods of Modern Mathematical Physics IV: Analysis of Operators}.
	Academic Press (1978)

\bibitem{schweber:1961}
S. Schweber:
   {\it An Introduction To Relativistic Quantum Field Theory}.
    Row, Peterson and Company (1961)

\bibitem{schwinger:1948}
J. Schwinger:
    Quantum Electrodynamics. I. A Covariant Formulation.
    {\em Physical Review} {\bf 74(10)}: 1439--1461 (1948)

\bibitem{zenk}
E. Stockmeyer and H. Zenk:
	Dirac Operators Coupled to the Quantized Radiation Field: Essential Self-adjointness \`a la Chernoff.
	{\it Letters in Mathematical Physics} {\bf 83}: 59--68 (2008)

\bibitem{ibc}
\x{S.~Teufel and R.~Tumulka:
	Hamiltonians Without Ultraviolet Divergence for Quantum Field Theories.
	{\it Quantum Studies: Mathematics and Foundations} online first (2020)
	\url{http://arxiv.org/abs/1505.04847}}

\bibitem{tomonaga:1946}
S. Tomonaga:
    On a Relativistically Invariant Formulation of the Quantum Theory of
  Wave Fields.
    {\em Progress of Theoretical Physics} {\bf 1(2)}: 27--42 (1946)

\bibitem{CVA:1997}
P. Van Alstine and H.W. Crater:
	A tale of three equations: Breit, Eddington--Gaunt, and two-body Dirac.
	\textit{Foundations of Physics} {\bf 27}: 67--79 (1997) 

\end{thebibliography}
\end{document}